\documentclass{article}

\usepackage[english]{babel}
\usepackage[letterpaper,top=2cm,bottom=2cm,left=3cm,right=3cm,marginparwidth=1.75cm]{geometry}
\usepackage[utf8]{inputenc}
\usepackage[T1]{fontenc}
\usepackage{setspace}
\usepackage{palatino}

\usepackage{array}
\usepackage{enumitem}

\usepackage{amsmath,amsthm,amssymb,amsfonts,amscd,keyval}
\usepackage{mathtools,mathrsfs}
\mathtoolsset{mathic=true}
\usepackage{bbm}
\usepackage{adjustbox}
\usepackage{tensor}
\usepackage{braket}
\usepackage{slashed}
\usepackage{tikz-cd}
\usepackage{circuitikz}
\usepackage{multirow}
\usepackage{dsfont}
\usepackage{comment}
\usepackage{graphicx}
\usepackage{subcaption}
\usepackage{float}
\usepackage{abstract}
\usepackage{titlesec}
\usepackage{titletoc}
\usepackage[backend=biber,  style=ieee,
citestyle=numeric-comp, maxcitenames=3, maxbibnames=99 ]{biblatex}
\usepackage[colorlinks=true, allcolors=blue]{hyperref}
\usepackage{microtype} 
\usepackage{url}
\usepackage{tikz}
\usetikzlibrary{arrows.meta,calc}

\usepackage{authblk}

\newcommand{\comments}[1]{}
\DeclareMathOperator{\rk}{rk}
\DeclareMathOperator{\Ima}{im}

\DeclareMathOperator{\tr}{Tr}

\newcommand{\CZ}{\mathrm{C}Z}

\DeclareMathOperator{\CNOT}{\mathrm{CNOT}}
\newcommand{\CCZ}{\mathrm{C}\mathrm{C}Z}
\DeclareMathOperator{\supp}{supp}
\DeclareMathOperator{\C}{\mathcal{C}}

\DeclareMathOperator{\F}{\mathcal{F}}
\DeclareMathOperator{\HH}{\mathsf{H}}
\DeclareMathOperator{\G}{\mathsf{G}}

\newcommand{\mCZ}{\mathrm{C}^{r-1}Z}

\numberwithin{equation}{section} 
\theoremstyle{definition}
\newtheorem{definition}{Definition}[section]
\newtheorem{example}{Example}

\theoremstyle{plain}
\newtheorem{remark}{Remark}
\newtheorem{claim}{Claim}
\newtheorem{theorem}[definition]{Theorem}
\newtheorem{proposition}[definition]{Proposition}
\newtheorem{lemma}[definition]{Lemma}
\newtheorem{corollary}[definition]{Corollary}

\title{Non-Abelian sheaf quantum LDPC codes: \\good and magical}
\author{Zimu Li}
\author{Fuchuan Wei}
\author{Zhengyi Han}
\author{Zi-Wen Liu}
\affil{Yau Mathematical Sciences Center, Tsinghua University}
\date{\today}
\begin{document} 

\maketitle

\begin{abstract}
Non-Abelian quantum codes connect quantum error correction, phases of matter, and computational resources. In this work, we develop a general framework for constructing non-Abelian quantum low-density parity-check (qLDPC) codes by gauging sheaf codes via cup products and use it to obtain families with constant encoding rate and linear distance. We resolve the coupled logical constraints using explicit representatives to characterize the full gauged code space. We provide a fundamental treatment of code distance based on the general Knill--Laflamme condition and combine expansion with cleaning to establish protection against arbitrary low-weight errors. We further construct an almost-good family whose entire code space exhibits long-range magic. Gauging and ungauging also enable logical Clifford measurements that prepare encoded magic states. These results extend good qLDPC codes beyond the Pauli stabilizer setting and provide a concrete foundation for exploring non-Abelian phases beyond geometric locality and pursuing the no low-energy trivial magic conjecture.
\end{abstract}

\tableofcontents


\section{Introduction}

Quantum low-density parity-check (qLDPC) codes are of major interest across practical fault-tolerant quantum computing, quantum many-body physics, quantum complexity theory, and mathematics. For practical quantum computing, the potential for low-overhead fault tolerance~\cite{Gottesman2014,Bravyi2024Memory} and compatibility with emerging experimental platforms such as reconfigurable atom arrays make qLDPC codes especially appealing~\cite{Xu2024ConstantOverhead,Bluvstein2026Architecture}. They also provide models of stable quantum phases beyond geometric locality~\cite{YinLucas2025,DeRoeck2025Phases} and connect to quantum complexity theory through the difficulty of preparing low-energy states, as exemplified by the no low-energy trivial states (NLTS) theorem~\cite{Anshu_2023}. Mathematically, the study of qLDPC codes has also established rich connections with algebraic topology, sheaf theory, high-dimensional expansion, systolic geometry, and geometric measure theory~\cite{EvraKaufmanZemor2024,FreedmanHastings2021,Li2025Poincare,LiShaoWeiLiLiu2026,LuFuLiu2026Intrinsic}.
A central quest in the study of qLDPC codes is to improve their encoding rate and distance. Following decades of progress, recent breakthroughs have achieved the asymptotically optimal scaling of constant encoding rate and linear distance~\cite{PK2022Good,DHLV2022,QuantumTanner2022}. These constructions all fall within the sheaf code framework, in which local classical codes define a sheaf on a cell complex, and the associated cochain complex specifies a CSS code~\cite{Dinur2024sheaf,Panteleev2024,Li2025Poincare}.

Non-Abelian codes provide an important extension of the stabilizer setting, motivated in particular by the study of non-Abelian topological order in condensed matter physics~\cite{Kitaev_2003,Nayak2008}. In these systems, non-Abelian anyonic excitations support nonlocal encoding of quantum information and topologically protected operations through fusion and braiding. Recent work has extended this perspective beyond geometric locality, constructing non-Abelian qLDPC codes by gauging symmetries of underlying stabilizer codes~\cite{Christos2026,Zhu2026}. Such constructions open a path to exploring non-Abelian quantum phases on general connectivity. Another  key practical motivation is their fundamental connection to quantum magic and non-Clifford computation. For example, non-Abelian topological codes enable fault-tolerant non-Clifford gates and magic-state preparation in two spatial dimensions~\cite{Davydova2025}, while gauging protocols enable encoded magic state preparation in qLDPC codes~\cite{Christos2026,Zhu2026}.

These developments pose natural questions: what encoding rates and distances can non-Abelian qLDPC codes achieve, and can the sheaf code framework be extended to construct such desirable codes? Existing gauging constructions provide important starting points, but establishing rigorous parameter guarantees presents fundamentally new challenges beyond the traditional Pauli stabilizer setting. Graph gauging offers a broadly applicable construction whose encoding rate and distance require further analysis~\cite{Christos2026}, while twisted hypergraph  product constructions achieve constant rate and a conditional square-root lower bound on the distance of a protected subsystem~\cite{Zhu2026}. Gauging couples the logical cohomology classes of the parent CSS codes, so the resulting code dimension is not simply inherited and requires an explicit analysis of the logical constraints. Also importantly, establishing the code distance also requires special care, since the familiar characterization in terms of minimum-weight $X$- and $Z$-type logical representatives no longer applies.

Beyond their intrinsic mathematical interest, these questions are also motivated by their implications for quantum Hamiltonian complexity. In particular, good non-Abelian qLDPC codes are expected to supply a key ingredient toward the no low-energy trivial magic (NLTM) conjecture~\cite{wei2025longrangenonstabilizernessquantumcodes}, which asks for Hamiltonians whose low-energy states cannot be prepared from arbitrary stabilizer states by shallow unitary circuits. Such circuit images admit efficient classical evaluation of local energies, so excluding them would address a further obstruction to the quantum PCP conjecture under the assumption $\mathsf{NP}\neq\mathsf{QMA}$~\cite{wei2025longrangenonstabilizernessquantumcodes,AharonovAradVidick2013}. Studying the protection of nonstabilizerness in good non-Abelian qLDPC codes provides a key starting point for investigating whether magic can persist throughout a low-energy sector.

In this work, we construct non-Abelian qLDPC codes with constant encoding rate and linear distance by gauging good sheaf codes via cup products. Establishing these parameters requires resolving the coupled logical constraints introduced by gauging and proving error protection throughout the resulting non-Abelian code space.
An independent contribution is a fundamental treatment of code distance in the non-Abelian setting grounded in the general Knill--Laflamme condition~\cite{KnillLaflamme1997,KL2000}. We develop a systematic framework for establishing distance directly from the defining requirements of quantum error correction, addressing arbitrary low-weight errors throughout the entire code space. For gauged sheaf codes, we combine this framework with expansion properties and the cleaning of logical representatives to prove linear distance. This provides a rigorous basis and new approaches for studying quantum codes beyond the Pauli stabilizer setting.

Our main construction result is summarized as follows.
\begin{theorem}[Informal]\label{thm:main}
	Let $\mCZ$ denote a multi-controlled-$Z$ operator on $r$ qubits with $r-1$ controls. For any integer $r \geq 2$, there exists a family of non-Abelian quantum LDPC codes with parameters $[\![N,\Theta(N),\Theta(N)]\!]$ whose stabilizer generators are formed from Pauli and $\mCZ$ operators.
\end{theorem}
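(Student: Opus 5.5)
The construction starts from $r$ copies (or $r$ layers) of a good sheaf code on a suitable cell complex, e.g.\ a quantum Tanner / lifted-product sheaf code whose local codes admit a multiplication, so that the cochain complex carries a cup product $\smile\colon C^{1}\times\cdots\times C^{1}\to C^{r}$ ($r$ factors) that is local and descends to cohomology. We take $r$ CSS codes $Q_1,\dots,Q_r$ from this family, all of parameters $[\![n,\Theta(n),\Theta(n)]\!]$. On the product code the transversal-type operator $U=\prod_{\sigma}\mCZ$ over $r$-tuples specified by the cup product is a logical operator implementing $\prod (-1)^{\int x_1\smile\cdots\smile x_r}$. We gauge the $\mathbb{Z}_2$ symmetry generated by the $X$-logicals of $Q_r$ (equivalently its $X$-checks) twisted by this cubic phase. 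The resulting generators are the original $Z$-checks, the Gauss-law terms of the form $X$-check times a product of $\mathrm{C}^{r-2}Z$ (and for the remaining factor $\mCZ$) acting on neighboring qubits, and flux terms. All of these have constant support because the sheaf complex has bounded degree and the cup product is local. This already gives the LDPC property and the claimed form of the generators.

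The first step is the code dimension. I would describe code states explicitly as superpositions $\sum_{x_i\in\text{cocycles}}(-1)^{\int x_1\smile\cdots\smile x_r}\lvert x_1,\dots,x_r\rangle$ up to coboundary sums. Gauging imposes the constraint that the cohomology classes $[x_1],\dots,[x_{r-1}]$ annihilate under cup product against the twisted sector. So the code space is spanned by states labeled by tuples of classes satisfying a multilinear condition $[x_1]\smile\cdots\smile[x_{r-1}]\smile(\cdot)=0$. I would lower-bound its dimension by restricting to a linear subspace of classes: choose subspaces $V_i\subset H^1(Q_i)$ of dimension $\Theta(n)$ on which the cup form vanishes identically. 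Such subspaces exist by a counting/greedy argument, or by choosing the factor codes so that the cup product is supported on disjoint blocks. This yields $\log\dim\ge \sum\dim V_i=\Theta(N)$.

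The main obstacle is the distance, since there are no $X$/$Z$ logical representatives in the usual sense. I would verify the Knill--Laflamme condition $\Pi E\Pi=c(E)\Pi$ for every Pauli $E$ of weight below $\delta N$, which suffices by linearity. Write $E=X(a)Z(b)$. If $E$ violates any stabilizer, $\Pi E\Pi=0$. Otherwise I would use the expansion (small-set boundary/coboundary expansion of the sheaf complex) to clean $a$: a low-weight $a$ commuting with the relevant checks is a coboundary plus something supported off any chosen region, hence equal to a product of Gauss-law generators times a diagonal correction. That correction is a product of $\mathrm{C}^kZ$ gates on a small set, whose action on code states is determined by cup products of restricted cocycles. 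Linear distance of the underlying codes implies these restricted evaluations are constant across the code space, since any low-weight cocycle restriction is cohomologically trivial. This gives $c(E)$ independent of the logical state.

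The delicate point is controlling the $\mCZ$ dressing produced when cleaning $X$-errors, because its support grows with the neighborhood. I expect to need a linear-sized locally-testable or cleanable region argument so that weight stays $O(\delta N)$ after a constant-factor blowup, which still keeps it below the classical distances.
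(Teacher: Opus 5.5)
Your overall architecture matches the paper's: cup-product dressings of the $X$-checks, bare $Z$-checks, code states labelled by cohomology tuples, and distance via Knill--Laflamme plus expansion and cleaning. Two steps, however, have genuine gaps.

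\emph{Your code states are wrong.} They have the wrong degree: each dressed check pairs $v$ cupped with the other $r-1$ one-cochains against an $(r-1)$-cycle $\xi$. More importantly, any phase built only from cup products of the cocycles $y_i$ and paired with a cycle is invariant under $y_i\mapsto y_i+\delta v$, by the Leibniz rule and $\partial\xi=0$. So $\sum_{\mathbf y}(-1)^{\int y_1\smile\cdots\smile y_r}\lvert\mathbf y\rangle$ is fixed by the \emph{bare} $X$-checks, i.e.\ it lies in the ungauged CSS code. It is not fixed by the dressed generators, whose phase (for three blocks, $\langle v\smile y_B\smile y_C,\xi\rangle$) is not identically zero on a coboundary coset. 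The correct states $\sum_{\mathbf u}(-1)^{\tr F^\xi_{\mathbf x}(\mathbf u)}\lvert\mathbf x+\delta\mathbf u\rangle$ carry phases depending on the primitives $\mathbf u$ (terms $u_A\smile x_B\smile x_C$, $u_A\smile\delta u_B\smile x_C$, \dots, $u_A\smile\delta u_B\smile\delta u_C$). What makes them work is Lemma~\ref{lemma:coefficient}: shifting $u^{(i)}$ by $w$ changes $F$ by exactly the dressing phase evaluated on $\mathbf x+\delta\mathbf u$. The gauge constraints then come from well-definedness under $\mathbf u\mapsto\mathbf u+\boldsymbol\eta$ for tuples of $0$-cocycles. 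Since every block's $X$-checks are dressed, this gives one condition for \emph{each} block (Definition~\ref{def:gauged_reps}), not a single condition against one twisted sector. Equating the dimension with $|L_G|$ also needs the spanning argument of Theorem~\ref{thm:code_space_sheaf}. With that in place, you need no greedy search for balanced vanishing subspaces: an arbitrary class in one linear-rate block and zero classes elsewhere satisfy every constraint.

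\emph{You assume the structure the paper has to build.} A scalar dressing phase requires a nonzero $t$-cycle $\xi\in C_t(X,\F^{\otimes t}\otimes\mathcal G)$, which locally requires the Khatri--Rao products of the local generator matrices to have nontrivial kernel. For the random product-expanding local codes behind quantum Tanner or lifted-product codes this is unknown and unlikely: with local rank $\Theta(n)$ these products have $\Theta(n^{t+1})$ rows and only $n$ columns. Asking that the local codes ``admit a multiplication'' does not supply it. The paper instead uses punctured Reed--Solomon local codes:
\begin{itemize}
\item $t$ identical blocks of local rank $m<n/(2t)$, which have only $\Omega(1)$ rate;
\item one block $\mathcal G$ whose local codes in the first $t-1$ directions are $(\C_i^{\ast t})^\perp$, so that $\mathbf 1\in\ker(h_i^{\ast t}\ast h_i')$ and $\ker h_t^{\ast(t+1)}\neq 0$, while two-way product expansion survives.
\end{itemize}
Linear rate comes from $\mathcal G$ alone. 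Your premise of $r$ blocks that are all $[\![n,\Theta(n),\Theta(n)]\!]$ and carry the cup product must be replaced by such an asymmetric choice. Also, with $r$ blocks every dressing is multilinear in $r-1$ blocks, i.e.\ built from $\mathrm{C}^{r-2}Z$'s. So $\mCZ$ dressings need $r+1$ blocks, and the construction has no flux terms.

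Your distance route, by contrast, is sound once made precise, and it is a tidy alternative to the paper's term-by-term expansion of $F(\mathbf u+\mathbf w)$:
\begin{itemize}
\item Since the $Z$-checks are bare, $\Pi E\Pi=0$ unless the $X$-part is a cocycle in every block.
\item A light cocycle is a coboundary $\delta w$, and $0$-cocycle (or small-set coboundary) expansion lets you take $|w|\le|\mathcal X|/\epsilon$.
\item Peeling off $\prod_i\mathcal A^{(i)}_{w^{(i)}}$, which fixes every code state, leaves a diagonal operator supported on $O(|w|)$ one-cubes around $\supp w$. This is only a constant-factor blowup, so no further locally-testable argument is needed.
\item Finally, clean the \emph{logical representatives}, not the error, off that region and off $\supp\mathcal Z$. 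The expectation of a diagonal operator is an average over the coboundary coset, so it is then independent of $\mathbf x$.
\end{itemize}
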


Our construction uses (almost) good sheaf codes as the backbone and dresses their $X$-type stabilizer generators with $\mCZ$ gates. We define the gauge constraints on logical cohomology classes through cup products and identify an explicit orthonormal basis of the non-Abelian code space in Theorem~\ref{thm:code_space_sheaf}. The code  dimension equals the number of tuples of logical cohomology classes satisfying these constraints, which we estimate by computing cup products of polarized logical representatives proposed in Ref.~\cite{LSWLL2026Theory}. We then establish the distance bound in Theorem~\ref{thm:code_distance_sheaf} via Knill--Laflamme, with the distance and expansion properties of the parent sheaf codes controlling the weights of correctable errors~\cite{PK2022Good,DHLV2022,Dinur2024sheaf,GJ2026,BLN2026,CHLT2026}. These constructions open a path to studying non-Abelian quantum phases with high encoding rates and sparse connectivity beyond geometric lattices. For clarity, we first introduce these results in the more tractable setting of HGP codes in Section~\ref{sec:gauged_HGP}. We give an explicit example of a gauged toric code with a $22$-dimensional code space and a rigorously proved distance in Example~\ref{example:gauged_toric}. We then develop the details and present our main results in the sheaf code setting in Section~\ref{sec:gauged_sheaf}.

We further demonstrate that our non-Abelian codes combine desirable error-correcting parameters with strong magic properties in a fundamental manner. First, using the code induction scheme of Refs.~\cite{LSWLL2026Theory,LLL2026nontrivial}, we construct an almost-good family with constant encoding rate and almost good distance and show through a dimension obstruction argument~\cite{wei2025longrangenonstabilizernessquantumcodes} that every family of its code states exhibits long-range magic.  This provide a concrete foundation for NLTM and exploring generalized non-Abelian quantum phases and orders.

Furthermore, we present a protocol on (almost) good qLDPC codes for encoded magic state preparation through gauging and ungauging measurements~\cite{Christos2026,Zhu2026}. Combining three code blocks with a dual sheaf construction, we realize measurements of logical Clifford operators that prepare encoded magic states from suitable logical stabilizer inputs. The approach also extends to logical multi-controlled-$Z$ measurements using good quantum locally testable codes (qLTCs) with transversal multi-controlled-$Z$ gates~\cite{GJ2026,BLN2026,CHLT2026,LLL2609}, providing a systematic approach to magic generation.


\section{Preliminaries}

\subsection{HGP codes and sheaf codes}\label{sec:sheaf}

We briefly introduce hypergraph product (HGP) codes and sheaf codes, which are special instances of CSS codes and serve as the building blocks of our non-Abelian codes. More details can be found in, e.g., \cite{Li2025Poincare,LSWLL2026Theory,LLL2026nontrivial}. Throughout the paper, $\mathbb{F}_q$ denotes a finite field of characteristic $2$. That is, $q = 2^s$ for some large constant $s$. Roughly speaking, codes over $\mathbb{F}_q$ are realized by grouping $s$ physical qubits at each site. This is a necessary condition for obtaining product-expanding classical codes in the construction of sheaf codes with (nearly) optimal parameters \cite{PK2023RobustlyTestable,DHLV2022,KP2025Extendable,Dinur2024sheaf,LLL2026nontrivial,GJ2026,BLN2026,CHLT2026}.

\begin{definition}\label{def:cochain}
	A \emph{cochain complex} is a sequence of vector spaces $C^i$ and \emph{coboundary operators} $\delta^i$ between them, arranged as
	\begin{equation}
		\begin{tikzcd}
			\cdots \arrow[r, "\delta^{i-1}"] &
			C^i \arrow[r, "\delta^{i}"] &
			C^{i+1} \arrow[r, "\delta^{i+1}"] &
			C^{i+2} \arrow[r, "\delta^{i+2}"] &
			\cdots
		\end{tikzcd}
	\end{equation}
	such that $\delta^{i+1} \delta^i = 0$ for every $i$, called the \emph{coboundary condition}. We always assume that $C^i$ is a finite-dimensional vector space over a finite field $\mathbb{F}_q$. By taking transposes,
	\begin{equation}
		\begin{tikzcd}
			\cdots &
			C_i \arrow[l, "\partial_{i} "'] &
			C_{i+1} \arrow[l, "\partial_{i+1} "'] &
			C_{i+2} \arrow[l, "\partial_{i+2} "'] &
			\cdots \arrow[l, "\partial_{i+3} "']
		\end{tikzcd}
	\end{equation} 
	we obtain a \emph{chain complex} with $C_i := (C^i)^\ast \cong C^i$ and $\partial_i := (\delta^{i-1})^\ast = (\delta^{i-1})^T$ as the \emph{boundary operator}. Elements of $\ker \delta^i$ are called $i$-th \emph{cocycles}, and elements of $\Ima \delta^{i-1}$ are called $i$-th \emph{coboundaries}. Analogously, we can define \emph{cycles} and \emph{boundaries} on chain complexes.  We would also abbreviate a (co)chain complex to $(C_\bullet,\partial_\bullet)$ ($(C^\bullet,\delta^\bullet)$).
\end{definition}

\begin{definition}\label{def:CSS}
	A \emph{Calderbank--Shor--Steane (CSS) code} is a stabilizer code given by a 2-cochain complex over a finite field $\mathbb{F}_q$: 
	\begin{align}
		C^{i-1} \xrightarrow{\delta^{i-1}} C^i \xrightarrow{\delta^i} C^{i+1}
	\end{align}
	with $\delta^{i} = H_Z$ and $\delta^{i-1} = H_X^T$ defining the stabilizer generators. The $X$-type and $Z$-type logical operators are represented by elements from the cohomology $H^i(C) = \ker \delta^i / \Ima \delta^{i-1}$ and the homology $H_i(C) = \ker \partial_i / \Ima \partial_{i+1}$, respectively. The \emph{code dimension} $k$, which is the number of encoded logical qudits, is given by 
	\begin{align}
		k = \dim H^i(C) = \dim C^i - \rk \delta^i - \rk \delta^{i-1}
		= \dim C_i - \rk \partial_i - \rk \partial_{i+1} = \dim H_i(C).
	\end{align}
	Let $|x|$ denote the Hamming weight of $x$. The \emph{code distance} is given by $d = \min\{d_X, d_Z\}$, where
	\begin{align}
		d_X = \min_{x \in \ker \delta^i \setminus \Ima \delta^{i-1}} \vert x \vert, \quad 
		d_Z = \min_{z \in \ker \partial_i \setminus \Ima \partial_{i+1}} \vert z \vert.
	\end{align}
	Here $d_X$ ($d_Z$) is the \emph{cosystolic} (\emph{systolic}) distance.
\end{definition}

\begin{definition}\label{def:expansions}
	Given a $t$-dimensional (co)chain complex, the \emph{(co)cycle expansions} are defined by
	\begin{align}
		\epsilon_{\delta}(i) \coloneq \min_{x\in C^i\setminus\ker\delta^i} \frac{|\delta^i x|}{\operatorname{dist}(x,\ker\delta^i)}, \ 0 \leq i \leq t-1, \quad
		\epsilon_{\partial}(j) \coloneq \min_{z\in C_j\setminus\ker\partial_j} \frac{|\partial_j z|}{\operatorname{dist}(z,\ker\partial_j)}, \ 1 \leq j \leq t.
	\end{align}	
	When $1 \leq i,j \leq t-1$, the \emph{(co)boundary expansions} are defined by
	\begin{align}
		\beta_{\delta}(i) \coloneq \min_{x\in C^i\setminus \Ima\delta^{i-1}} \frac{|\delta^i x|}{\operatorname{dist}(x,\Ima\delta^{i-1})}, \qquad
		\beta_{\partial}(j) \coloneq \min_{z\in C_j\setminus \Ima\partial_{j+1}} \frac{|\partial_j z|}{\operatorname{dist}(z,\Ima\partial_{j+1})}.
	\end{align}	
	By definition, $\epsilon_{\delta}(i) \geq \beta_{\delta}(i)$ and $\epsilon_{\partial}(j) \geq \beta_{\partial}(j)$.
\end{definition}

As an example, for a connected graph, let $C^0 = \mathbb{F}_2^V$ be the vertex space and let $C^1 = \mathbb{F}_2^E$ be the edge space. The transposed graph incidence matrix $\delta^0: C^0 \rightarrow C^1$ defines a 1-dimensional cochain complex. Since $\ker \delta^0$ is 1-dimensional and spanned by the all-ones vector on the vertices,
\begin{align}
	\epsilon_{\delta}(0) = \min_{x\in C^0\setminus\ker\delta^0} \frac{|\delta^0 x|}{\operatorname{dist}(x,\ker\delta^0)}
\end{align}
is exactly the edge expansion or Cheeger constant of the graph. 

Lower bounds on these expansion constants play an important role in the study of high-dimensional expanders and locally testable codes \cite{PK2022Good,PK2023RobustlyTestable,DHLV2022,Dinur2024sheaf,GJ2026,BLN2026,CHLT2026}. In two dimensions, a weaker notion called small-set (co)boundary expansion \cite{PK2021,DHLV2022,Anshu_2023} is also useful.

\begin{definition}[\cite{DHLV2022}] \label{def:small_set}
	A 2-dimensional cochain complex is an \emph{$(\alpha,\beta,\gamma)$-small-set-coboundary expander} if for any $x \in C^1$ such that $\vert x \vert < \alpha \dim C^1$, there is some $u \in C^0$ such that 
	\begin{align}
		\vert \delta^1 x \vert \geq \beta \vert x + \delta^0 u \vert \text{ and } \vert u \vert \leq \gamma \vert x \vert. 
	\end{align}
	Small-set-boundary expanders are defined similarly.
\end{definition}

When the expansion parameter is positive, any $x$ with sufficiently small support that lies in $\ker \delta^1$ must be a coboundary, yielding a lower bound on the cosystolic distance.


\medskip 
Given two classical codes represented by 1-dimensional cochain complexes $H_i: \mathbb{F}_q^{M_i} \rightarrow \mathbb{F}_q^{N_i}$, $i = 1,2$, their 2-dimensional \emph{hypergraph product} is the following cochain complex:
\begin{align}\label{eq:2D_HGP}
	\mathbb{F}_q^{N_1 \times N_2} \xleftarrow{\delta^1}
	\mathbb{F}_q^{N_1 \times M_2} \oplus \mathbb{F}_q^{M_1 \times N_2} \xleftarrow{\delta^0} 
	\mathbb{F}_q^{M_1 \times M_2} 
\end{align}
with
\begin{align}
	\delta^1 = \begin{pmatrix} 
		I_{N_1} \otimes H_2 & H_1 \otimes I_{N_2}
	\end{pmatrix}, \quad
	\delta^0 = \begin{pmatrix} H_1 \otimes I_{M_2} \\ I_{M_1} \otimes H_2 \end{pmatrix}
\end{align}
It defines a CSS code with $H_Z = \delta^1$, $H_X^T = \delta^0$, and code parameters $[\![N, k, d]\!]$ given by~\cite{Zemor2014}
\begin{align}\label{eq:2D_HGP_parameters}
	\begin{aligned}
		& N = N_1 M_2 + M_1 N_2, 
		&& k = k_1 k_2^T + k_1^T k_2, \\
		& d_X = \min\{d_1, d_2 \},
		&& d_Z = \min\{d_1^T, d_2^T \},
	\end{aligned}
\end{align}
where $k_i, d_i$ ($k_i^T, d_i^T$) are the dimension and distance of the code defined by $\ker H_i$ ($\ker H_i^T$).

Similarly, the 3-dimensional hypergraph product uses three classical codes:
\begin{align}\label{eq:3D_HGP}
	\begin{aligned}
		\mathbb{F}_q^{N_1 \times N_2 \times N_3} \xleftarrow{\delta^2}
		& \mathbb{F}_q^{M_1 \times N_2 \times N_3} \oplus \mathbb{F}_q^{N_1 \times M_2 \times N_3} \oplus \mathbb{F}_q^{N_1 \times N_2 \times M_3}   \xleftarrow{\delta^1} \\
		& \mathbb{F}_q^{N_1 \times M_2 \times M_3} \oplus \mathbb{F}_q^{M_1 \times N_2 \times M_3} \oplus \mathbb{F}_q^{M_1 \times M_2 \times N_3} \xleftarrow{\delta^0} 
		\mathbb{F}_q^{M_1 \times M_2 \times M_3},
	\end{aligned}
\end{align}
where
\begin{align}\label{eq:HGP_coboundary}
	\delta^1 = \begin{pmatrix} 
		0 & I_{M_1} \otimes I_{N_2} \otimes H_3 & I_{M_1} \otimes H_2 \otimes I_{N_3} & \\ 
		I_{N_1} \otimes I_{M_2} \otimes H_3 & 0 & H_1 \otimes I_{M_2} \otimes I_{N_3} \\
		I_{N_1} \otimes H_2 \otimes I_{M_3} & H_1 \otimes I_{N_2} \otimes I_{M_3} & 0 
	\end{pmatrix},\quad
	\delta^0 = \begin{pmatrix} H_1 \otimes I_{M_2} \otimes I_{M_3} \\ I_{M_1} \otimes H_2 \otimes I_{M_3} \\ I_{M_1} \otimes I_{M_2} \otimes H_3	\end{pmatrix}
\end{align}
and $\delta^2 = (H_1 \otimes I_{N_2} \otimes I_{N_3} \quad I_{N_1} \otimes H_2 \otimes I_{N_3} \quad I_{N_1} \otimes I_{N_2} \otimes H_3 )$. Any three consecutive terms of \eqref{eq:3D_HGP} define a CSS code. For example, let $H_Z = \delta^1$ and $H_X^T = \delta^0$. Its code parameters $[\![N, k, d]\!]$ are~\cite{Zeng_2019}
\begin{align}\label{eq:HGP_parameters}
	\begin{aligned}
		& N = N_1 M_2 M_3 + M_1 N_2 M_3 + M_1 M_2 N_3, 
		&& k = k_1 k_2 k_3^T + k_1 k_2^T k_3 + k_1^T k_2 k_3, \\
		& d_X = \min\{d_1 d_2, d_1 d_3, d_2 d_3\},
		&& d_Z = \min\{d_1^T, d_2^T, d_3^T \}.
	\end{aligned}
\end{align}
The construction generalizes to higher dimensions.

Toric codes are standard examples of HGP codes: each constituent classical code is defined by the coboundary operator of a cycle graph, so $N_i = M_i$ is the number of vertices, or equivalently edges, in that cycle. The transpose $H_i^T$, mapping from edges to vertices, is the standard graph incidence matrix. In this work, we take $H_i$ to be defined by general graphs, so that the resulting combinatorial cell complexes support cup and cap products.


\medskip 
In order to build non-Abelian codes with (almost) good parameters, the base qLDPC codes should also be (almost) good. We now introduce the sheaf code framework, which encompasses HGP codes and these good code constructions. Let $X$ be any cell complex. A basic example is a square tessellation of the 2D torus. We avoid most abstract definitions, but it is helpful to keep in mind that the complexes $X$ used to construct (almost) good qLDPC codes differ substantially from topological manifolds \cite{KB1993,GG2012,FreedmanHastings2021}. We provide more concrete constructions in Section \ref{sec:cubical}. For now, suppose $X$ is given, then we construct a combinatorial cochain complex as follows:
\begin{enumerate}
	\item Let $X(i)$ be the collection of $i$-dimensional cells; for example, for the 2D torus, $i = 0,1,2$. Viewing these cells as a formal basis, we define the vector space $C^i = C^i(X,\mathbb{F}_q) \coloneq \mathbb{F}_q^{\vert X(i) \vert}$.
	
	\item The containment of a lower-dimensional cell $\sigma$ in a higher-dimensional cell $\tau$ is denoted by $\sigma \prec \tau$. Then the coboundary operator $\delta^i: C^i(X,\mathbb{F}_q) \rightarrow C^{i+1}(X,\mathbb{F}_q)$ is simply defined by 
	\begin{align}\label{eq:poset_coboundary}
		\delta^i (\sigma) = \sum_{\sigma \prec \tau \in X(i+1)} \tau. 
	\end{align}
	For $\sigma \in X(i), \pi \in X(i+2)$, we require an even number of cells $\tau \in X(i+1)$ such that $\sigma \prec \tau \prec \pi$. This guarantees the coboundary condition $\delta^{i+1} \delta^i = 0$.
	
	\item The cell complex $X$ is also assumed to be sparse: each cell $\sigma$ is contained in a constant number of cells $\tau$, and each $\tau$ contains a constant number of cells $\sigma$ of one lower dimension. This makes $\delta^i$ sparse, as required for an LDPC code.
\end{enumerate}
The standard cubical complex of the torus satisfies these assumptions, and the resulting cochain complex defines the toric code.

To construct codes with better parameters, we need a globally expanding complex $X$ in high dimension. The choice of a system of local coefficients is also important.

\begin{definition}\label{def:presheaf}
	Given $X$, a \emph{system of local coefficients} $\mathcal{F}$ consists of two types of data. First, to each $\sigma \in X$ we assign a vector space $\mathcal{F}_\sigma$. Second, for every $\sigma \preceq \tau$, we specify a linear map $\mathcal{F}_{\sigma,\tau}: \F_\sigma \rightarrow \F_\tau$, with these maps satisfying the compatibility condition
	\begin{align}
		\sigma \preceq \tau \preceq \pi  \implies  \mathcal{F}_{\tau,\pi} \circ \mathcal{F}_{\sigma,\tau} = \mathcal{F}_{\sigma, \pi}. 
	\end{align}
	We require $\mathcal{F}_{\sigma,\sigma}$ to be the identity map. In category theory, $\mathcal{F}$ is also known as a \emph{covariant functor}. 
\end{definition}

Given $X$ and $\mathcal{F}$, we generalize $C^\bullet(X,\mathbb{F}_q)$ to $C^\bullet(X, \mathcal{F})$ as follows:
\begin{enumerate}
	\item Let
	\begin{align}\label{eq:sheaf_cochain_space}
		C^i(X, \mathcal{F}) \coloneq \bigoplus_{\sigma \in X(i)} \F_\sigma.
	\end{align}
	For comparison, $C^i(X, \mathbb{F}_q) = \mathbb{F}_q^{\vert X(i) \vert} = \bigoplus_{\sigma \in X(i)} \mathbb{F}_q$. In this sense,  $C^\bullet(X,\mathbb{F}_q)$ corresponds to the \emph{constant sheaf with scalar local coefficients}.
	
	\item For any $x \in C^i(X, \mathcal{F})$, $x(\sigma) \in \F_\sigma$ is its local coefficient vector at $\sigma \in X(i)$. The coboundary operator $\delta^i: C^i(X, \mathcal{F}) \rightarrow C^{i+1}(X, \mathcal{F})$ is defined by 
	\begin{align}
		\delta^i( x(\sigma) ) = \sum_{\tau \in X(i+1), \tau \succ \sigma} \mathcal{F}_{\sigma, \tau} ( x(\sigma) ).
	\end{align} 
	The coboundary condition $\delta^{i+1} \circ \delta^i = 0$ follows from the compatibility in Definition \ref{def:presheaf} and the even-incidence condition above.
\end{enumerate}

\begin{definition}
	Let $X$ be a $t$-dimensional cell complex with a sheaf $\mathcal{F}$. A \emph{sheaf code} is a CSS code defined by extracting any three consecutive terms from the following cochain complex:
	\begin{equation}
		\begin{tikzcd}
			\cdots \arrow[r] & C^{i-1}(X, \mathcal{F}) \arrow[r,"\delta^{i-1}"] & 
			C^i(X, \mathcal{F}) \arrow[r,"\delta^{i}"] &
			C^{i+1}(X, \mathcal{F}) \arrow[r] & \cdots 
		\end{tikzcd}
	\end{equation}
\end{definition}

A concrete way to realize $\F$ in Definition \ref{def:presheaf} uses parity-check or generator matrices of classical codes, which we call local codes. Details are given in Section \ref{sec:cubical}.


\subsection{Cup and cap products on cubical complexes}\label{sec:cubical}

We now give explicit constructions of $X$ and $\F$. We then describe how to compute cup and cap products on the resulting sheaf codes, which are central to the construction of our non-Abelian codes. As a warm-up, we start with the 1D case.

Let $G_0 = (V_0, E_0)$ be an $n$-regular graph with $\vert V_0 \vert = n'$. The node degree $n$ is fixed. Let $\HH$ be a finite abelian group with $\vert{\HH}\vert = l$ elements. We define a \emph{lift} of $G_0$ via $\HH$: 
\begin{enumerate}
	\item For each vertex $v$, create a copy $(h,v)$ for each of the $l$ group elements $h \in \HH$.
	
	\item For any edge $e$ connecting $u$ and $v$, we prescribe an \emph{orientation}, such as $(u,v)$. Let $\gamma: E_0 \rightarrow \HH$ assign to each edge $(u,v)$ an element of $\HH$. We set $\gamma_{(v,u)} = \gamma_{(u,v)}^{-1}$.
	
	\item Construct a new graph $\hat{G}_0$ with vertex set $\{(h,v)\}$, joining the vertices $(h',u), (h,v)$ whenever $(u,v)$ is an oriented edge of the base graph $G_0$ and $h' = \gamma_{(u,v)} \cdot h$.
\end{enumerate}

We now use this graph lift to define a 1-dimensional complex $X$:
\begin{align}
	X(0) \coloneq \{ (h,v; b) \in \HH \times V_0 \times \{0,1\} \}.  
\end{align}
Here, $b = 0,1$ is a binary label. The set $X(0)$ consists of two copies of the vertex set of $\hat{G}_0$, which facilitates computations on sheaf codes \cite{LSWLL2026Theory}. Let us abbreviate $(h,v)$ by $g$ and rewrite $(h,v; b)$ as $[g; b]$. Elements in $X(0)$ are called $0$-cubes.

Analogously, $X(1)$ consists of 1-cubes that connect 0-cubes. We prescribe a local ordering of the $n$ edges incident with each vertex in $G_0$ and define $A \coloneq (a^1, \ldots,a^n)$ by
\begin{align}\label{eq:1D_generator_action}
	a^{\mu} \cdot (h,v) \coloneq (\gamma_{(v, v')} \cdot h,v')
\end{align} 
where $(v,v')$ is the $\mu$-th edge incident with $v$. Then
\begin{align}
	\begin{aligned}
		X(1) \coloneq & \Big\{ \Big( [g; 0], \ [a \cdot g; 1 ]  \Big): [g; 0] \in X(0), a \in A \Big\}.
	\end{aligned}
\end{align}
Here, we apply $a$ to $g$ via Eq.~\eqref{eq:1D_generator_action} and flip the binary bit $0$ to $1$. For simplicity, we denote each 1-cube by $[g;a]$. The resultant 1D complex $X$ is the \emph{double cover} of $G_0$.

For example, let $G_0 = K_3$ be the triangle with $n = 2$. We define the local ordering by counting counterclockwise. Suppose the lift group $\HH$ is trivial, i.e., contains only the identity element. Then $X(0)$ simply doubles $V_0$ and $X(1)$ consists of the red segments shown below. The double cover is the closed cycle $C_6$ with 6 vertices:
\begin{equation}\label{eq:K_3_double}
	K_3 \times K_2 = C_6 = 	
	\resizebox{0.2\linewidth}{!}{%
		\begin{tikzpicture}[x=0.75pt,y=0.75pt,yscale=-1,xscale=1, baseline=(current bounding box.center)]
			\draw  [line width=1.5]  (255.33,87.96) -- (136.33,18) -- (374.33,18) -- cycle ;
			\draw    (136.33,18) -- (136.33,203.04) ;
			\draw    (374.33,18) -- (374.33,203.04) ;
			\draw    (255.33,87.96) -- (255.33,273) ;
			\draw [line width=1.5]    (137.33,203.04) -- (256.33,273) ;
			\draw [line width=1.5]    (256.33,273) -- (375.33,203.04) ;
			\draw [line width=1.5]  [dash pattern={on 5.63pt off 4.5pt}]  (136.33,203.04) -- (374.33,203.04) ;
			\draw [color={rgb, 255:red, 208; green, 2; blue, 27 }  ,draw opacity=1 ]   (136.33,18) -- (255.33,273) ;
			\draw [color={rgb, 255:red, 208; green, 2; blue, 27 }  ,draw opacity=1 ]   (255.33,87.96) -- (136.33,203.04) ;
			\draw [color={rgb, 255:red, 208; green, 2; blue, 27 }  ,draw opacity=1 ]   (255.33,87.96) -- (374.33,203.04) ;
			\draw [color={rgb, 255:red, 208; green, 2; blue, 27 }  ,draw opacity=1 ]   (374.33,18) -- (255.33,273) ;
			\draw [color={rgb, 255:red, 208; green, 2; blue, 27 }  ,draw opacity=1 ] [dash pattern={on 4.5pt off 4.5pt}]  (374.33,18) -- (136.33,203.04) ;
			\draw [color={rgb, 255:red, 208; green, 2; blue, 27 }  ,draw opacity=1 ] [dash pattern={on 4.5pt off 4.5pt}]  (136.33,18) -- (374.33,203.04) ;
		\end{tikzpicture}
	}
\end{equation}

Random sampling of $\gamma_{(v,u)}$ over $\HH$ has been shown to produce graph lifts with favorable expansion properties \cite{Marcus2015I,Chandrasekaran2017,Hall_2018,Agarwal2016,Jeronimo2021}, which can be used to construct \emph{high dimensional expanders} (HDXs) \cite{DHLV2022,PK2022Good,QuantumTanner2022,Dinur2024sheaf,RSV2019,GJ2026,BLN2026,CHLT2026}. We only introduce ne approach that generalizes the above construction of the 1-dimensional complex $X$ to $t$ dimensions:
\begin{enumerate}
	\item Consider the $t$-dimensional Cartesian product: 
	\begin{align}
		(h,v_1, \ldots,v_t; b_1, \ldots,b_t) \in \HH \times V_0 \times \cdots \times V_0 \times \{0,1\} \times \cdots \times \{0,1\}.  
	\end{align}
	Here, $(b_1, \ldots,b_t)$ is a binary string. Let us abbreviate $(h,v_1, \ldots,v_j, \ldots,v_t)$ by $g$ and define
	\begin{align}
		X(0) \coloneq \{ [g; (b_j)_{j \in [t]} ] = (h,v_1, \ldots,v_t;b_1, \ldots,b_t) \} = \HH \times V_0^t \times \{0,1\}^t = \G \times \{0,1\}^t
	\end{align}
	as the collection of $0$-cubes or vertices. 	
	
	\item To define $X(1)$, we replace $A$ with $t$ families $A_j = (a_j^1, \ldots,a_j^n)$ with $j = 1, \ldots,t$. They act on $h$ and the corresponding components of the Cartesian product as follows:
	\begin{align}\label{eq:generator_action}
		a_j^{\mu} \cdot (h,v_1, \ldots,v_j, \ldots,v_t; b_1, \ldots,b_j, \ldots,b_t) 
		= (\gamma_{(v_j, v_j')} \cdot h,v_1, \ldots,v_j', \ldots,v_t; b_1, \ldots,1-b_j, \ldots,b_t).
	\end{align} 
	Here, $(v_j,v_j')$ is the $\mu$-th edge incident with $v_j$. Then
	\begin{align}
		\begin{aligned}
			X(1) \coloneq & \Big\{ \Big( [g; (b_j)_{j \in [t]} ], \ [a_i \cdot g; \{1-b_i\} \cup (b_j)_{j \neq i} ]  \Big): [g; (b_j)_{j \in [t]} ] \in X(0), a_i \in A_i \Big\}.
		\end{aligned}
	\end{align}
	For simplicity, we write
	\begin{align}
		[g; a_i, (b_j)_{j \neq i} ] = \Big( [g; b_i = 0, (b_j)_{j \neq i} ], [a_i \cdot g; b_i = 1, (b_j)_{j \neq i} ] \Big). 
	\end{align}
	
	\item More generally, let $S \subset [t]$ satisfy $\vert S \vert = p$. A $p$-cube $[g; (a_i)_{i \in S}, (b_j)_{j \in S^c} ]$ in $X(p)$ is defined as the product of edges $\prod_{i \in S} [g; a_i, (b_j)_{j \neq i} ]$. By definition, 
	\begin{align}\label{eq:system_size}
		\vert \G \vert = \vert \HH \times V_0 \times \cdots \times V_0 \vert = \vert \HH \vert \cdot \vert V_0 \vert^t,
	\end{align}
	Since for each $i \in [t]$, $\vert A_i \vert = n$, we have $\vert X(p) \vert = \binom{t}{p} 2^{t-p} n^p \vert \G \vert$. If we place physical qudits on $1$-cubes, the system size is $N = t 2^{t-1} n \vert \G \vert$.
\end{enumerate}

Throughout this paper, we would stick with 2-dimensional examples: for any $g = (h,v_1,v_2)$, the square below is a fundamental building block of $X$. For example, the $0$-cube $[g;0,0]$ is contained in the 1-cubes $[g;a_1,0]$ and $[g;0,a_2]$, both of which belong to the 2-cube $[g;a_1,a_2]$. The same construction extends to higher-dimensional cubes. Accordingly, $X$ is referred to as a \emph{cubical complex}. 

\begin{figure}[H]
	\centering
	\resizebox{0.45\linewidth}{!}{%
		\begin{tikzpicture}[x=0.75pt,y=0.75pt,yscale=-1,xscale=1]
			
			\draw  [color={rgb, 255:red, 0; green, 0; blue, 0 }  ,draw opacity=0.5 ] (184,189.69) -- (379.33,189.69) -- (379.33,385.02) -- (184,385.02) -- cycle ;
			
			\draw (120,378) node [anchor=north west][inner sep=0.75pt]  [font=\Large] [align=left] {$\displaystyle [ g;0,0]$};
			\draw (395,378) node [anchor=north west][inner sep=0.75pt]  [font=\Large] [align=left] {$\displaystyle [ a_{1} \cdot g;1,0]$};
			\draw (388,183) node [anchor=north west][inner sep=0.75pt]  [font=\Large] [align=left] {$\displaystyle [ a_{1} a_{2} \cdot g;1,1]$};
			\draw (90,183) node [anchor=north west][inner sep=0.75pt]  [font=\Large] [align=left] {$\displaystyle [ a_{2} \cdot g;0,1]$};
			\draw (237,386) node [anchor=north west][inner sep=0.75pt]  [font=\Large,color={rgb, 255:red, 208; green, 2; blue, 27 }  ,opacity=1 ] [align=left] {$\displaystyle [ g; a_{1} ,0]$};
			\draw (230,166) node [anchor=north west][inner sep=0.75pt]  [font=\Large,color={rgb, 255:red, 208; green, 2; blue, 27 }  ,opacity=1 ] [align=left] {$\displaystyle [ a_{2} \cdot g; a_{1} ,1]$};
			\draw (110,279) node [anchor=north west][inner sep=0.75pt]  [font=\Large,color={rgb, 255:red, 208; green, 2; blue, 27 }  ,opacity=1 ] [align=left] {$\displaystyle [ g;0, a_{2}]$};
			\draw (388,279) node [anchor=north west][inner sep=0.75pt]  [font=\Large,color={rgb, 255:red, 208; green, 2; blue, 27 }  ,opacity=1 ] [align=left] {$\displaystyle [ a_{1} \cdot g;1, a_{2}]$};
			\draw (238,279) node [anchor=north west][inner sep=0.75pt]  [font=\Large,color={rgb, 255:red, 245; green, 166; blue, 35 }  ,opacity=1 ] [align=left] {$\displaystyle [ g; a_{1} ,a_{2}]$};
		\end{tikzpicture}
	}
	\caption{A 2-cube $[g;a_1,a_2]$ with vertices (0-cubes) and edges (1-cubes).}
	\label{fig:2-cube}
\end{figure}

Several assumptions on $X$ in Section \ref{sec:sheaf} are straightforward to verify; for example, sparsity follows because $n$ is constant. In particular, let $G_0$ be a triangle. With a trivial lift, the complex $X$ defined by the $t$-dimensional product based on $G_0$ is isomorphic to the $t$-dimensional torus with $6^t$ vertices. The corresponding cochain complex $C^\bullet(X,\mathbb{F}_q)$ gives the toric code as in Figure~\ref{fig:toric_cubical}. As a reminder, although we always use these simple and heuristic examples, in the formation of sheaf codes with (almost) good parameters, nontrivial lifts are inevitable.


\medskip
We now use local codes to define the sheaf $\F$. For simplicity, we still consider the 2-dimensional case with the following cochain complex:
\begin{equation}\label{eq:DLV_scalar}
	\begin{tikzcd}
		C^0(X,\mathbb{F}_q) \arrow[r, "\delta^{0}"] &
		C^1(X,\mathbb{F}_q) \arrow[r, "\delta^{1}"] &
		C^2(X,\mathbb{F}_q).
	\end{tikzcd}
\end{equation}
We choose two classical codes $\C_i \subseteq \mathbb{F}_q^n$ generated by $h_i \in \mathbb{F}_q^{m_i \times n}$ with $m_i \leq n$, $i = 1,2$. That is, $\C_i = \Ima h_i^T$. Conceptually, $h_i$ is a local generator matrix of $\C_i$ or a local parity-check matrix of the dual code $\ker h_i = \C_i^\perp$. Then we assign each $p$-cube $\sigma = [g; (a_i)_{i \in I}, (b_j)_{j \in J}]$ a vector space $\mathcal{F}_\sigma = \bigotimes_{j \in J} \mathbb{F}_q^{m_j}$. Since $\sigma$ possess $\vert J \vert = t - p$ binary labels, the $(t-p)$-fold tensor product has one factor for each label. Note that $\mathcal{F}_\pi = \mathbb{F}_q$ is a scalar field when $\pi$ is a $t$-cube. Here, $t = 2$, and the local linear maps are defined as follows:
\begin{enumerate}
	\item For $\theta = [a_2 \cdot g;0,1]$ and $\sigma = [g; 0,a_2]$, 
	\begin{align}
		\mathcal{F}_{\theta,\sigma} = I_{m_1} \otimes h_2^T(a_2, \text{-}) \in \mathbb{F}_q^{m_1 \times m_1} \otimes \mathbb{F}_q^{1 \times m_2},
	\end{align} 
	where the $n$ elements $a_2 \in A_2$ label the columns of $h_2$ and $h_2^T(a_2, \text{-})$ is the transpose of the corresponding column. Equivalently, let $a_2$ also denote the corresponding standard basis vector of $\mathbb{F}_q^n$. The row vector $h_2^T(a_2, \text{-})$ can then be written as $(h_2 a_2)^T$.
	
	\item For $\sigma = [g;0,a_2]$ and $\tau = [g; a_1,a_2]$, 
	\begin{align}
		\mathcal{F}_{\sigma,\tau} =  (h_1 a_1)^T \otimes I_{m_2} \in \mathbb{F}_q^{1 \times m_1} \otimes \mathbb{F}_q^{m_2 \times m_2}.
	\end{align} 
\end{enumerate}
By Definition~\ref{def:presheaf}, $\mathcal{F}_{\sigma,\pi} = \mathcal{F}_{\tau,\pi} \circ \mathcal{F}_{\sigma,\tau}$. In the above example, $\sigma = [g;0,a_2]$ and $\tau = [g;a_1,a_2]$. Then
\begin{align}
	\F_{\sigma,\pi} = \big(  I_{m_1} \otimes (h_2 a_2)^T \big) \circ \big( (h_1 a_1)^T \otimes I_{m_2} \big) = (h_1 a_1)^T \otimes (h_2 a_2)^T .
\end{align}
We call the coboundary operators incorporating local codes \emph{sheaved coboundary operators}. It is also easy to see that the 1-dimensional sheaved coboundary operator of any graph defines a classical Tanner code.


\medskip
We now introduce the cup and cap products. Their formal definitions use more abstract notions, so we give only the formulas needed for computation; see, e.g., \cite{Li2025Poincare,LSWLL2026Theory,LLL2026nontrivial} for further details. The cup product of a $p_1$-cube and a $p_2$-cube is either a $(p_1+p_2)$-cube or zero. We illustrate these formulas in 2 dimensions using Figure~\ref{fig:2-cube}. For $0$-cubes, we have, e.g.,
\begin{align}\label{eq:cup_00}
	[g; 0,0] \smile [g; 0,0] = [g; 0,0], \quad [g; 0,0] \smile [a_1 \cdot g; 1,0] = 0.
\end{align}
For $0$-cubes and $1$-cubes,
\begin{align}\label{eq:cup_01}
	[g; 0,0] \smile [g; 0,a_2] = [g; 0,a_2], \quad [g; 0,a_2] \smile [g; 0,0] = 0.
\end{align}
The only nontrivial cup products between $1$-cubes are
\begin{align}\label{eq:1-cube_cup}
	[g; a_1,0] \smile [a_1 \cdot g; 1,a_2] = [g;a_1,a_2], \quad
	[g; 0,a_2] \smile [a_2 \cdot g; a_1,1] = [g;a_1,a_2].
\end{align}
As Eq.~\eqref{eq:cup_01} shows, the cup product depends on the order of its factors; for example, $[a_1 \cdot g; 1,a_2] \smile [g; a_1,0] = 0$. On the other hand, the cap product maps a $p_1$-cube and a $(p_1 + p_2)$-cube to a $p_2$-cube or $0$. Examples of nontrivial cap products are
\begin{align}
	& [g; 0,0] \frown [g; a_1,a_2] = [g;a_1,a_2], \\
	& [g; a_1,0] \frown [g; a_1,a_2] = [a_1 \cdot g;1, a_2], \quad 
	[g; 0,a_2] \frown [g; a_1,a_2] = [a_2 \cdot g;a_1, 1].
\end{align}
Cup and cap products of disjoint cubes are always zero. Formally, these cup and cap products are bilinear operators on $C^\bullet(X,\mathbb{F}_q)$ and $C_\bullet(X,\mathbb{F}_q)$, where each cube is treated as a one-hot vector in the cochain or chain space.

For sheaf codes, vectors $x \in C^\bullet(X,\F)$ are supported on cubes $\sigma$ with local coefficient vectors $v \in \F_\sigma$. Let $\supp(x)$ denote the \emph{geometric support} of $x$, namely, the collection of cubes with nonzero local coefficient vectors. Then, for example, \eqref{eq:1-cube_cup} generalizes to
\begin{align}\label{eq:1-cube_cup_sheaf}
	\smile: \ C^1(X,\mathcal{F}) \times C^1(X,\mathcal{G}) \rightarrow C^2(X,\mathcal{F} \otimes \mathcal{G}),
\end{align}
where $\mathcal{F}$ and $\mathcal{G}$ can be different sheaves and the tensor product $\mathcal{F} \otimes \mathcal{G}$ is defined by
\begin{align}
	(\mathcal{F} \otimes \mathcal{G})_{\sigma,\tau} \coloneq \mathcal{F}_{\sigma,\tau}  \otimes \mathcal{G}_{\sigma,\tau}: \ \mathcal{F}_{\sigma} \otimes \mathcal{G}_{\sigma} \rightarrow \mathcal{F}_{\tau} \otimes \mathcal{G}_{\tau}.
\end{align}	
When $\mathcal{F}$ and $\mathcal{G}$ are defined by local codes, their tensor product is determined by the Khatri--Rao product of the corresponding local generator (parity-check) matrices.

\begin{definition}[Khatri--Rao product] \label{def:KR_product}
	Let $A \in \mathbb{F}_q^{m_A \times n}$, $B \in \mathbb{F}_q^{m_B \times n}$. We denote by $A \ast B$ the $m_A m_B \times n$ matrix obtained by taking the tensor product of corresponding columns. This is a special case of the \emph{Khatri--Rao product}.
\end{definition} 

For example, consider $[g; a_1,0] \otimes \zeta_2$ and $[a_1 \cdot g; 1,a_2] \otimes \zeta_1 \in C^1(X,\mathcal{F})$ with $\zeta_i \in \mathbb{F}_q^{m_i}$. Their cup product is the 2-cube $[g;a_1,a_2]$ with local coefficient
\begin{align}\label{eq:cup_sheaf_2}
	[(h_2 a_2)^T  \otimes (h_1 a_1)^T] (\zeta_2 \otimes \zeta_1) = ( (h_2 a_2)^T \zeta_2 ) \cdot (  (h_1 a_1)^T \zeta_1 ) \in \mathbb{F}_q \otimes \mathbb{F}_q = \mathbb{F}_q.
\end{align}
We first take the cup product of the supporting cubes and then combine the local coefficients using tensor products.

Similarly, given $[g; 0,0] \otimes \Lambda$ with $\Lambda \in \mathbb{F}_q^{m_1} \otimes \mathbb{F}_q^{m_2}$ and $[g; a_1,a_2] \otimes c$ with $c \in \mathbb{F}_q$, we have
\begin{align}
	\frown: \ C^0(X,\F) \times C_2(X,\F) \rightarrow C_2(X, \mathbb{F}_q),
\end{align}
whose value in this example is the 2-cube $[g; a_1,a_2]$ with local coefficient
\begin{align}
	((h_1 a_1)^T \otimes (h_2 a_2)^T \Lambda) \cdot c \in \mathbb{F}_q.
\end{align}
We discuss other cases later.

We conclude this subsection by recording some important properties of the cup and cap products:

\begin{proposition}\label{prop:cup_cap}
	The following properties hold for the cup and cap products:
	\begin{enumerate}
		\item They satisfy the Leibniz rule:
		\begin{align}
			\delta(x_1 \smile x_2) = (\delta x_1) \smile x_2 + x_1 \smile (\delta x_2). 
		\end{align}
		where $x_i$ are cochains. For any $\xi \in C_\bullet(X, \mathcal{F})$, the cap product satisfies
		\begin{align}
			\partial(x \frown \xi) = (\delta x) \frown \xi + x \frown (\partial \xi),
		\end{align}
		
		\item The cup product is associative:
		\begin{align}
			(x_1 \smile x_2) \smile x_3 = x_1 \smile (x_2 \smile x_3).
		\end{align}
	\end{enumerate}
\end{proposition}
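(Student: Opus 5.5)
Both properties are bilinear in their arguments, so I will reduce to basis elements: cubes tensored with local coefficient vectors. The plan is to first prove everything for the constant sheaf $C^\bullet(X,\mathbb{F}_q)$ and then transport it to sheaves.

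\textbf{Cubical model.} I will use the product structure of $X$. Every $p$-cube $[g;(a_i)_{i\in S},(b_j)_{j\notin S}]$ is a product, over coordinates $j\in[t]$, of one-dimensional cells of the double cover. Coordinate $j$ contributes either a vertex $[\,\cdot\,;b_j]$ or an edge $[\,\cdot\,;a_j]$. The cup product formulas \eqref{eq:cup_00}--\eqref{eq:1-cube_cup} are exactly the cubical (Serre) cup product. In each coordinate, a factor with label $0$ or an edge meets a factor with label $1$ or an edge at the appropriate endpoint. Globally, $[\sigma]\smile[\tau]$ is nonzero only when the ``front face'' of $\tau$ equals the ``back face'' of $\sigma$. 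In that case the result is the cube whose $j$-th factor is the 1D cup of the $j$-th factors.

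Consequently, $\smile$ on $C^\bullet(X,\mathbb{F}_q)$ is, locally on each cube, the tensor product over coordinates of the 1D cup products on the double cover. The 1D cup on a bipartite graph with $0/1$ labels is the standard one: $v\smile v=v$, $v_0\smile e=e$ when $v_0$ is the $0$-end of $e$, and $e\smile v_1=e$ when $v_1$ is the $1$-end. The group action via $\gamma$ only relabels $g$ consistently, because $a_i$ and $a_j$ for $i\neq j$ act on different coordinates and the lift group $\HH$ is abelian. So the square in Figure~\ref{fig:2-cube} commutes.

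\textbf{Constant sheaf.} Given this product description, I will check the 1D statements directly. For associativity, both bracketings of $v\smile v\smile e$ and the other cases visibly agree. For the Leibniz rule, I check $\delta(v\smile v')$ and $\delta(v\smile e)$ case by case on a single edge; signs are irrelevant in characteristic $2$. Associativity of the tensor product then gives associativity globally. For Leibniz, the coboundary on a product is $\delta=\sum_j \mathrm{id}\otimes\cdots\otimes\delta_j\otimes\cdots\otimes\mathrm{id}$, with no Koszul signs in characteristic $2$. The cup distributes factorwise, so the Leibniz rule follows coordinatewise from the 1D rule.

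\textbf{Sheaf coefficients.} I will attach coefficients next. For $x_1\in C^\bullet(X,\F)$ and $x_2\in C^\bullet(X,\mathcal{G})$, the product on $\sigma\smile\tau=\pi$ is $(\F_{\sigma,\pi}\otimes\mathcal{G}_{\tau,\pi})(\zeta_1\otimes\zeta_2)$, as in Eq.~\eqref{eq:cup_sheaf_2}. Associativity then follows from the functoriality condition $\F_{\tau,\pi}\circ\F_{\sigma,\tau}=\F_{\sigma,\pi}$ of Definition~\ref{def:presheaf}: both bracketings push each coefficient directly to the final cube. For the Leibniz rule, each term of $\delta(x_1\smile x_2)$ pushes coefficients from $\pi$ to a cube $\pi'\succ\pi$. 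The matching term on the right applies $\F_{\sigma,\sigma'}$ first, then pushes to $\pi'$. Functoriality shows the two composites agree, so the combinatorial identity lifts verbatim.

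\textbf{Cap product and main obstacle.} I will handle the cap product through the duality $\langle y, x\frown\xi\rangle=\langle y\smile x,\xi\rangle$, with the pairing between $C^\bullet(X,\F)$ and $C_\bullet(X,\F)$ extended to the tensor sheaf. Then
\begin{align}
\langle y,\partial(x\frown\xi)\rangle=\langle \delta y\smile x,\xi\rangle=\langle \delta(y\smile x)+y\smile\delta x,\xi\rangle,
\end{align}
which gives $x\frown\partial\xi+(\delta x)\frown\xi$ in characteristic $2$. I expect the main obstacle to be checking that the explicit cap formulas in the paper agree with this adjoint definition, including the coefficient maps. I would verify this directly on the generic 2-cube and then argue by the product structure.
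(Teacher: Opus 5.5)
The paper gives no proof of this proposition; it records these as standard properties of cubical cup and cap products and defers to \cite{Li2025Poincare,LSWLL2026Theory,LLL2026nontrivial}. Your argument is therefore a self-contained substitute, and its structure for the cup product is sound.

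The binary labels make the closure of every cube a genuine standard cube $I^p$ with distinct faces, and the abelian lift makes the squares close. Restriction to a cube commutes with $\delta$ and with the face maps, since restricting to a face is $\mathrm{id}\otimes\cdots\otimes\mathrm{ev}_{v_0\text{ or }v_1}\otimes\cdots$, which is an algebra map. So associativity and Leibniz reduce to the tensor-product DGA $C^\bullet(I)^{\otimes p}$ in characteristic $2$.

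Your phrase ``front face of $\tau$ equals back face of $\sigma$'' is best stated as ``the $1$-end vertex of $\sigma$ equals the $0$-end vertex of $\tau$.'' With the $0/1$ labels this automatically forces disjoint directions and a unique spanning cube.

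The sheaf step is also correct. The sheaf cup pushes each coefficient to the target cube, so by functoriality every term indexed by a triple $(\sigma,\tau,\pi')$ carries the same value $\mathcal{F}_{\sigma,\pi'}(\zeta_1)\otimes\mathcal{G}_{\tau,\pi'}(\zeta_2)$. The constant-sheaf identities for indicator cochains $\sigma^*,\tau^*$ therefore lift verbatim. It is worth saying explicitly that this is what ``lifts verbatim'' means.

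The one step that fails as written is the cap product. The paper's cap is adjoint to cup with the cochain in the \emph{first} slot, $\langle y, x\frown\xi\rangle=\langle x\smile y,\xi\rangle$, not $\langle y\smile x,\xi\rangle$. For example, $[g;0,0]\frown[g;a_1,a_2]=[g;a_1,a_2]$ because $[g;0,0]\smile[g;a_1,a_2]=[g;a_1,a_2]$ (front vertex), whereas $[g;a_1,a_2]\smile[g;0,0]=0$. Likewise, your definition would give $[g;a_1,0]\frown[g;a_1,a_2]=0$ instead of $[a_1\cdot g;1,a_2]$. The magic-state section also uses $\int_\xi\mu\smile x_B\smile x_C=\int_{\mu\frown\xi}x_B\smile x_C$. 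So the consistency check you flagged as the main obstacle would fail.

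The fix costs nothing. Since $C_i=(C^i)^*$ and $\partial=\delta^T$, Stokes and nondegeneracy of the pairing are automatic. With the correct convention, in characteristic $2$:
\begin{align*}
\langle y,\partial(x\frown\xi)\rangle=\langle x\smile\delta y,\xi\rangle=\langle \delta(x\smile y)+\delta x\smile y,\xi\rangle=\langle y,\,x\frown\partial\xi+(\delta x)\frown\xi\rangle .
\end{align*}
Take $\xi\in C_\bullet(X,\mathcal{F}\otimes\mathcal{G})$ and $x\frown\xi\in C_\bullet(X,\mathcal{G})$, and use your mixed-sheaf Leibniz rule. This adjoint then reproduces the paper's coefficient example $((h_1a_1)^T\otimes(h_2a_2)^T\Lambda)\cdot c$ exactly.
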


In particular, when $x \in C^p(X,\F), \xi \in C_p(X,\F)$, $x \frown \xi \in C_0(X,\mathbb{F}_q)$. We can further sum over the components of $x \frown \xi$ to obtain a scalar in $\mathbb{F}_q$. The resulting scalar defines a \emph{pairing}: 
\begin{align}\label{eq:paring}
	\langle x, \ \xi \rangle \coloneq \sum_{\sigma \in X(p_1)} \left\langle x(\sigma), \ \xi(\sigma) \right\rangle.
\end{align}
We also use the notation $\int_\xi x$ for the pairing as it can be interpreted as a discrete integral. For any $y \in C^{p - 1}(X,\mathcal{F})$, the identity
\begin{align}
	\langle \delta y, \ \xi \rangle = \langle y, \ \partial \xi \rangle \iff \int_\xi \delta y = \int_{\partial \xi} y 
\end{align}
is the discrete form of Stokes' theorem.


\subsection{Quantum gates over $\mathbb{F}_q$ and the Knill--Laflamme condition}\label{sec:F_q}

Let $q = p^s$, where $p$ is a prime. A $q$-dimensional qudit has computational basis $\{ \ket{x} : x \in \mathbb{F}_q \}$. Let $\omega \coloneq e^{\frac{2\pi i}{p}}$ and let
\begin{align}
	\tr_{\mathbb{F}_q/\mathbb{F}_p} : \mathbb{F}_q \to \mathbb{F}_p
\end{align}
be the \emph{trace map}. For every $\alpha,\beta \in \mathbb{F}_q$, the Pauli operators $X^\alpha$ and $Z^\beta$ are defined by
\begin{align}\label{eq:qudit-Pauli}
	X^\alpha \ket{x} = \ket{x+\alpha}, \qquad
	Z^\beta \ket{x} = e^{\frac{2\pi i}{p} \tr_{\mathbb{F}_q/\mathbb{F}_p}(\beta x)} \ket{x}.
\end{align}
Equivalently,
\begin{align}
	X^\alpha = \sum_{x\in\mathbb{F}_q} \ket{x+\alpha}\bra{x}, \qquad
	Z^\beta = \sum_{x\in\mathbb{F}_q} e^{\frac{2\pi i}{p} \tr_{\mathbb{F}_q/\mathbb{F}_p}(\beta x)} \ket{x}\bra{x}.
\end{align}
Notably, in prime-power dimension there is not just one distinguished nontrivial $X$ or $Z$: the Pauli operators are naturally indexed by $\alpha,\beta\in\mathbb{F}_q$. In particular, the notation $X^\alpha$ and $Z^\beta$ should be regarded as a family of operators indexed by elements of $\mathbb{F}_q$, rather than as ordinary exponentiation by arbitrary field elements. These operators satisfy
\begin{align}\label{eq:qudit-Pauli-commutation}
	Z^\beta X^\alpha = e^{\frac{2\pi i}{p} \tr_{\mathbb{F}_q/\mathbb{F}_p}(\alpha\beta)} X^\alpha Z^\beta.
\end{align}
Throughout the paper, $p = 2$ and $q = 2^s$. Therefore
\begin{align}
	Z^\beta X^\alpha = (-1)^{\tr_{\mathbb{F}_q/\mathbb{F}_2}(\alpha\beta)} X^\alpha Z^\beta,
\end{align}
and $(X^\alpha)^2 = (Z^\beta)^2 = I$.

For $\boldsymbol{\alpha} = (\alpha_1,\ldots,\alpha_n)\in\mathbb{F}_q^n$ and
$\boldsymbol{\beta} = (\beta_1,\ldots,\beta_n)\in\mathbb{F}_q^n$, we write
\begin{align}
	X^{\boldsymbol{\alpha}} \coloneq \bigotimes_{j=1}^n X^{\alpha_j}, \qquad
	Z^{\boldsymbol{\beta}} \coloneq \bigotimes_{j=1}^n Z^{\beta_j}.
\end{align}
There is an important difference between working over $\mathbb{F}_2$ and over the extension field $\mathbb{F}_q$. Suppose that $\C \subseteq \mathbb{F}_q^n$ is an $\mathbb{F}_q$-linear code and that $\boldsymbol{c}_1,\ldots,\boldsymbol{c}_k$ is an $\mathbb{F}_q$-basis of $\C$. It is not sufficient, in general, to take only $\{	X^{\boldsymbol{c}_1},\ldots,X^{\boldsymbol{c}_k}\}$ as group generators if one wants the Pauli operators $X^{\boldsymbol{c}}$ for every $\boldsymbol{c}\in C$. Indeed, for a general $\alpha \in \mathbb{F}_q$, the expression $\left(X^{\boldsymbol{c}}\right)^\alpha$ is not defined, and the presence of $X^{\boldsymbol{c}}$ in a group does not imply the presence of $X^{\alpha \boldsymbol{c}}$. The same statement holds for $Z$-type operators. Consequently, one must include the operators
\begin{align}
	X^{\alpha \boldsymbol{c}_i}, \quad Z^{\alpha \boldsymbol{c}_j}, \ \alpha \in \mathbb{F}_q,
\end{align}
when describing the $\mathbb{F}_q$ Pauli operators. We will emphasize this point again in Section \ref{sec:k_HGP}.


We now discuss the spectra and measurement outcomes of qudit Pauli operators. Let
\begin{align}
	\chi(\alpha) \coloneq e^{\frac{2\pi i}{p} \tr_{\mathbb{F}_q/\mathbb{F}_p}(\alpha)}, \quad \alpha \in \mathbb{F}_q,
\end{align}
be the standard \emph{character} of $\mathbb{F}_q$. That is, we view $\mathbb{F}_q$ as the additive abelian group $(\mathbb{F}_q, +)$, with the unitary representation
\begin{align}
	\rho: (\mathbb{F}_q, +) \rightarrow U(\mathbb{C}^q), \quad \rho(\alpha) \coloneq X^\alpha,
\end{align}
and each character corresponds to a 1-dimensional irreducible representation of $(\mathbb{F}_q, +)$. We have $\chi(\alpha + \beta) = \chi(\alpha) \chi(\beta)$, and Schur orthogonality gives
\begin{align}\label{eq:Schur_orthogonal}
	\sum_{\beta \in \F_q}\chi(\alpha \beta) =
	\begin{cases}
		q,& \alpha = 0,\\ 0,& \alpha \neq 0.
	\end{cases}
\end{align}
These facts give a common orthonormal eigenbasis for $\{X^\alpha\}$:
\begin{align}\label{eq:X_basis}
	\ket{\widetilde{\beta}} \coloneq \frac{1}{\sqrt q} \sum_{\gamma \in \F_q} \chi(-\beta \gamma) \ket{\gamma}, \ \beta \in \mathbb{F}_q
\end{align}
with $X^\alpha \ket{\widetilde{\beta}} = \chi(\alpha \beta) \ket{\widetilde{\beta}}$. Similarly, the states $\ket{\gamma}$ form a common orthonormal eigenbasis for $\{Z^\beta\}$.

When $p = 2$, the values of $\chi$ are $\pm 1$, and the corresponding eigenspaces can have dimension greater than one. For example, $\frac{I - X^\alpha}{2}$ projects onto the eigenspace with eigenvalue $-1$. In a complete $X$-basis measurement, the rank-one projector $\ket{\widetilde{\beta}} \bra{\widetilde{\beta}}$ corresponds to the outcome $\beta \in \mathbb{F}_q$.


\medskip
Over $\mathbb{F}_q$, a physical $\CZ$ gate is defined by
\begin{align}\label{eq:CZ} 
	\begin{aligned} 
		\CZ = \sum_{x_1,x_2 \in \mathbb{F}_q} e^{\frac{2\pi i}{p}\tr_{\mathbb{F}_q/\mathbb{F}_p} (x_1 \cdot x_2) }
		\ket{x_1,x_2} \bra{x_1,x_2}. 
	\end{aligned} 
\end{align} 
Since $p$ is a prime, we identify $\mathbb{F}_p$ with $\mathbb{Z}/p\mathbb{Z}$ and use the standard representatives $0,\ldots,p-1$ in the exponent, which makes $e^{\frac{2\pi i}{p} \tr_{\mathbb{F}_q/\mathbb{F}_p} (x_1 \cdot x_2) } \in \mathbb{C}$ well-defined. If $q = p = 2$, Eq.~\eqref{eq:CZ} reduces to the qubit $\CZ$ gate. Multi-controlled-$Z$ gates are defined similarly. More precisely, the $(r-1)$-controlled-$Z$ gate on $r$ qudits is
\begin{align}\label{eq:multiCZ}
	\begin{aligned}
		\C^{r-1}Z = \sum_{x_1,\ldots,x_r\in\mathbb{F}_q} e^{\frac{2\pi i}{p} \tr_{\mathbb{F}_q/\mathbb{F}_p} (x_1x_2\cdots x_r)}
		\ket{x_1,\ldots,x_r} \bra{x_1,\ldots,x_r}.
	\end{aligned}
\end{align}
In particular, $r=2$ gives the $\CZ$ gate in Eq.~\eqref{eq:CZ}, while $r=3$ gives the qudit $\CCZ$ gate \cite{Wills2024magic,He2025addressable,Nguyen2025CCZ}.

Similarly, the qudit analogue of the $\CNOT$ gate is $\CNOT \ket{x_1,x_2} = \ket{x_1,x_1+x_2}$ for $x_1,x_2\in\mathbb{F}_q$. Equivalently,
\begin{align}\label{eq:CNOT}
	\CNOT = \sum_{x_1,x_2\in\mathbb{F}_q} \ket{x_1,x_1+x_2}\bra{x_1,x_2} = \sum_{x\in\mathbb{F}_q} \ket{x}\bra{x}\otimes X^x.
\end{align}
Its inverse is given by $\CNOT^{-1}\ket{x_1,x_2} = \ket{x_1,x_2-x_1}$. In particular, with $q=2^s$, subtraction and addition coincide and hence $\CNOT^{-1}=\CNOT$. For every $\alpha,\beta\in\mathbb{F}_q$, the Pauli operators transform as
\begin{align}\label{eq:CNOT_Pauli}
	\CNOT X_1^\alpha \CNOT & = X_1^\alpha X_2^\alpha, & \CNOT X_2^\alpha \CNOT & = X_2^\alpha, \\
	\CNOT Z_1^\beta \CNOT & = Z_1^\beta, & \CNOT Z_2^\beta \CNOT & = Z_1^\beta Z_2^\beta.
\end{align}

\begin{theorem}[Knill--Laflamme condition \cite{KnillLaflamme1997,KL2000}] \label{thm:Pauli_KL}
	Let $\mathcal{Q} \subseteq (\mathbb{C}^q)^{\otimes n}$ have distance $d$, and let $w \ge 0$ be an integer. Let
	\begin{align}
	    	\mathcal{P}_w \coloneq \{ X^{\boldsymbol{\alpha}} Z^{\boldsymbol{\beta}}: |\boldsymbol{\alpha}| + |\boldsymbol{\beta}| \leq w \}
	\end{align}
	The set $\mathcal{P}_w$ contains \emph{all} Pauli strings of weight at most $w$. Then the following are equivalent:
	\begin{enumerate}[label=(\roman*)]
		\item $d \geq 2w+1$.
		
		\item Every pair $U,V \in \mathcal{P}_w$ obeys
		\begin{equation}\label{eq:Pauli_KL}
			\bra{i} U^\dagger V \ket{j} = \gamma_{UV} \delta_{ij}
		\end{equation}
		in one, and hence every, orthonormal code basis $\{\ket{i}\}$, with constants independent of $i,j$.
	\end{enumerate}
\end{theorem}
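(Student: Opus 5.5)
The plan is to reduce both directions to the standard notion of distance: $\mathcal{Q}$ has distance $d$ iff every operator $E$ supported on at most $d-1$ qudits is \emph{detectable}, i.e.\ $\Pi E \Pi = c(E)\,\Pi$, where $\Pi$ is the projector onto $\mathcal{Q}$. Since $\Pi = \sum_i \ket{i}\bra{i}$ for any orthonormal code basis, the condition $\Pi E \Pi = c(E)\Pi$ holds iff $\bra{i}E\ket{j} = c(E)\delta_{ij}$. This identity is basis independent. Applied with $E = U^\dagger V$, it gives the clause ``in one, and hence every, orthonormal code basis'' in (ii). I will measure the weight of a Pauli string by the number of qudits on which it acts nontrivially. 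For $\mathcal{P}_w$ this means I read $|\boldsymbol{\alpha}|+|\boldsymbol{\beta}|\le w$ as bounding the joint support $|\supp\boldsymbol{\alpha}\cup\supp\boldsymbol{\beta}|$. The reason for this reading, and why it matters, is explained at the end.

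For (i)$\Rightarrow$(ii), take $U,V\in\mathcal{P}_w$. By the commutation relation \eqref{eq:qudit-Pauli-commutation}, $U^\dagger V$ equals, up to a phase, a single Pauli string $X^{\boldsymbol{\alpha}}Z^{\boldsymbol{\beta}}$. Its support is contained in $\supp U\cup\supp V$, which has size at most $2w\le d-1$. Hence $U^\dagger V$ is detectable, and $\bra{i}U^\dagger V\ket{j} = c(U^\dagger V)\delta_{ij}$. We then set $\gamma_{UV} \coloneq c(U^\dagger V)$.

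For (ii)$\Rightarrow$(i), fix a set $S$ of at most $2w$ qudits and an arbitrary operator $E$ supported on $S$. The operators $X^{\boldsymbol{\alpha}}Z^{\boldsymbol{\beta}}$ with $\boldsymbol{\alpha},\boldsymbol{\beta}$ supported in $S$ form a basis of the operators on $S$: there are $q^{2|S|}$ of them, and they are orthogonal in the Hilbert--Schmidt inner product by the Schur orthogonality \eqref{eq:Schur_orthogonal}. By linearity, it therefore suffices to show $\Pi P\Pi\propto\Pi$ for each such Pauli string $P$. To do this, split $S = S_1\sqcup S_2$ with $|S_1|,|S_2|\le w$, and factor $P = P_{S_1}\otimes P_{S_2}$. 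Put $U = P_{S_1}^\dagger$ and $V = P_{S_2}$. Each is, up to a phase, a Pauli string supported on at most $w$ sites, so $U,V\in\mathcal{P}_w$, and $U^\dagger V = P$. Condition (ii) then gives $\Pi P\Pi = \gamma_{UV}\Pi$. Hence every operator of support at most $2w$ is detectable, which gives $d\ge 2w+1$.

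The only delicate step is this factorisation $P=U^\dagger V$ into two halves each in $\mathcal{P}_w$, and it is where the weight convention matters. Suppose instead that $|\boldsymbol{\alpha}|+|\boldsymbol{\beta}|$ is taken literally, so that a site carrying both $X$ and $Z$ counts twice. Then a Pauli with full $X$ and $Z$ content on $2w$ sites has literal weight $4w$. Since the literal weight of $U^\dagger V$ is at most the sum of the literal weights of $U$ and $V$, such a Pauli cannot be written as $U^\dagger V$ with both factors of literal weight at most $w$. Under the literal reading, (ii)$\Rightarrow$(i) fails as stated. So I would make the support-size convention explicit in the proof, and note that for qudit codes $X^{\boldsymbol{\alpha}}Z^{\boldsymbol{\beta}}$ with $\supp\boldsymbol{\alpha}\cup\supp\boldsymbol{\beta}$ of size at most $w$ is the intended meaning of ``weight at most $w$.'' With that reading, both directions are the short linear-algebra argument above.
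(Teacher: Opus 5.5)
The paper gives no proof of this statement and only cites Knill--Laflamme, so there is no in-text proof to compare against. Your argument is the standard one and is correct under the support convention. You reduce (i) to detectability $\Pi E\Pi\propto\Pi$ of every operator on at most $2w$ qudits; strictly, your characterisation describes distance \emph{at least} $d$, which is all you use. You then use $\Pi=\sum_i\ket{i}\bra{i}$ for the basis-independence clause, expand in the Hilbert--Schmidt-orthogonal Pauli basis, and write a Pauli on $2w$ sites as $U^\dagger V$ with $U,V$ on at most $w$ sites each, absorbing any phase into $\gamma_{UV}$. This is the textbook route.

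Your diagnosis of the weight convention is also right. The paper's own gloss that $\mathcal{P}_w$ ``contains all Pauli strings of weight at most $w$'' presupposes the support reading. However, your stated reason only shows that \emph{your factorisation} breaks under the literal weight $|\boldsymbol{\alpha}|+|\boldsymbol{\beta}|$. It does not show that (ii)$\Rightarrow$(i) is actually false; that claim needs a concrete code.

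Such a code exists. Let $\mathcal{K}\subset\mathbb{F}_2^8$ be the extended Hamming $[8,4,4]$ code. Take the qubit stabilizer code generated by $Z^{\otimes 8}$ and the operators $\bigotimes_k Y^{a_k}$ for $a\in\mathcal{K}$; these commute because every $a\in\mathcal{K}$ has even weight.
\begin{itemize}
\item Since $\mathcal{K}^\perp=\mathcal{K}$ has minimum distance $4$, $\mathcal{K}$ surjects onto $\mathbb{F}_2^2$ on any two coordinates.
\item Hence the only nontrivial Paulis on at most two qubits that commute with every stabilizer are the $Y_iY_j$. These are not stabilizers, so they are nontrivial logicals, and $d=2$.
\item Each $Y_iY_j\propto X_iX_jZ_iZ_j$ has literal weight $4$. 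So every $U^\dagger V$ with $U,V$ of literal weight at most $1$ anticommutes with some stabilizer or is proportional to the identity, and is therefore detectable.
\end{itemize}
Thus (ii) holds under the literal reading with $w=1$, while $d=2<3$. This confirms that your reinterpretation is forced, not optional.
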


\begin{lemma}[Cleaning Lemma \cite{Bravyi2009Cleaning,Kalachev2022Cleaning}]\label{thm:Cleanning}
	For an $\mathbb{F}_q$-linear CSS code of Definition~\ref{def:CSS}, let $M \subseteq [n]$ satisfy $|M| < d$. Then:
	\begin{enumerate}[label=(\roman*)]
		\item For every $X$-type logical representative $x \in \ker H_Z = \ker \delta^i$, there is some $u \in C^{i-1}$ such that
		\begin{equation}\label{eq:Xclean}
			x' = x + \delta^{i-1} u, \qquad x'|_M = 0
		\end{equation}
		
		\item For every $Z$-type logical representative $z \in \ker H_X = \ker \partial_i$, there is some $v \in C_i$ such that
		\begin{equation}\label{eq:Zclean}
			z' = z + \partial_{i+1} v, \qquad z'|_M = 0.
		\end{equation}
	\end{enumerate}
	Namely, $\supp(x')$ and $\supp(z')$ are disjoint from $M$.
\end{lemma}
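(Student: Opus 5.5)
We prove the Cleaning Lemma \ref{thm:Cleanning} by linear algebra over $\mathbb{F}_q$, using only the definition of distance in Definition~\ref{def:CSS}. We treat part (i); part (ii) follows by the same argument applied to the dual chain complex $(C_\bullet,\partial_\bullet)$, with $d_Z$ in place of $d_X$.

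Fix $M$ with $|M|<d\le d_X$. Let $\pi_M: C^i \to \mathbb{F}_q^{M}$ be the restriction map $x\mapsto x|_M$. Also let $B=\Ima\delta^{i-1}$ and $Z=\ker\delta^i$, so that $B\subseteq Z$. The goal is to show $\pi_M(Z)=\pi_M(B)$. Once this holds, for any $x\in Z$ we can pick $u$ with $(\delta^{i-1}u)|_M = x|_M$. Then $x'=x+\delta^{i-1}u$ vanishes on $M$, since $q=2^s$ and so $-1=1$. In general characteristic one would instead subtract, which changes nothing.

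The containment $\pi_M(B)\subseteq\pi_M(Z)$ is trivial, so we prove the reverse by duality. It suffices to show that every functional $f\in\mathbb{F}_q^M$ annihilating $\pi_M(B)$ also annihilates $\pi_M(Z)$. Extend $f$ by zero to a vector $\tilde f\in C_i$ supported in $M$, using the standard pairing. The condition $\langle \tilde f,\delta^{i-1}u\rangle=0$ for all $u$ means $\partial_i\tilde f=0$. Since $|\tilde f|\le|M|<d\le d_Z$, the definition of $d_Z$ forces $\tilde f\in\Ima\partial_{i+1}$. Write $\tilde f=\partial_{i+1}v$. Then for any $x\in Z$,
\begin{align}
\langle f, x|_M\rangle=\langle \tilde f,x\rangle=\langle v,\delta^i x\rangle=0 .
\end{align}
This proves the equality of the two images, and hence part (i).

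The step to be careful about is the mixing of the two distances. Cleaning $X$-type representatives actually uses $|M|<d_Z$, and cleaning $Z$-type representatives uses $|M|<d_X$. Both conditions are guaranteed by the assumption $|M|<d=\min\{d_X,d_Z\}$. We also need nondegeneracy of the $\mathbb{F}_q$-bilinear pairing on $\mathbb{F}_q^M$. This holds for the standard dot product over any field, so the annihilator argument is valid over $\mathbb{F}_q$ without invoking the trace form.
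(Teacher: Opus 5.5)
Your proof is correct. The paper does not prove this lemma; it cites it from Bravyi--Terhal and Kalachev--Sadov. Your argument is the standard proof from those references specialized to $\mathbb{F}_q$-linear CSS codes: puncturing to $M$ is dual to shortening, so $\pi_M(\ker\delta^i)=\pi_M(\Ima\delta^{i-1})$ exactly when every $\tilde f\in\ker\partial_i$ supported in $M$ lies in $\Ima\partial_{i+1}$.

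The only issue is a labeling slip in your first paragraph. Part (i) uses $|M|<d_Z$, as you correctly state at the end, not the $d\le d_X$ you invoke at the start. So the dual argument for part (ii) uses $d_X$ in place of $d_Z$, not the other way round as you wrote. In (ii) the vector $v$ lives in $C_{i+1}$, not $C_i$ as the statement misprints. None of this affects validity, since $|M|<\min\{d_X,d_Z\}$ gives both bounds.
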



\section{Gauging HGP codes with scalar coefficients}\label{sec:gauged_HGP}

As a warm-up, we formulate the method to gauge HGP codes with scalar coefficients $C^\bullet(X,\mathbb{F}_q)$, i.e., $X$ is obtained with a trivial lift and the sheaf is constant. We show how this result recovers the well-known gauged toric codes and quantum doubles \cite{Kitaev_2003,twisted_quantum_double,Christos2026}. Then we present our main results of gauged sheaf codes in Section \ref{sec:gauged_sheaf}.

\subsection{Parameters of gauged HGP}\label{sec:k_HGP}

We only consider 2-dimensional HGP codes for brevity. To be precise, let $G_0$ be an $n$-regular connected graph. We also assume the connectedness of its double cover $G_0 \times K_2$ defined in \eqref{eq:K_3_double}. It has vertex and edge sets $V, E$ with $|V| = 2 |V_0|$ and $|E| = 2|E_0|$. For brevity, the symbols $V,E,V_0,E_0$ will also denote their cardinalities. Let $\delta: \mathbb{F}_q^{V} \to \mathbb{F}_q^{E}$ be the coboundary operator of the double cover of the graph as in Figure~\ref{eq:K_3_double}. In \eqref{eq:2D_HGP}, let $H_1 = H_2 = \delta$. Then the homological product defines a 2-dimensional HGP code. The associated cochain complex of \eqref{eq:2D_HGP} can also be written as
\begin{align}
	C^0(X,\mathbb{F}_q) \xrightarrow{\delta^0} C^1(X,\mathbb{F}_q) \xrightarrow{\delta^1} C^2(X,\mathbb{F}_q),
\end{align}
where $X$ is defined by the 2-fold Cartesian product of $G_0 \times K_2$.

To define a gauged HGP code, we need to find a nonzero 2-cycle $\xi \in C_2(X,\mathbb{F}_q)$, i.e., $\partial_2 \xi = 0$. By Eq.~\eqref{eq:2D_HGP_coboundary}
\begin{align}
	\partial_2 = \begin{pmatrix} I_E \otimes \partial \\ \partial \otimes I_E \end{pmatrix} \implies \xi \in \ker \partial_2 = \ker \partial \otimes \ker \partial
\end{align}
Without local codes, $\ker \partial$ is spanned by cycles in the base graph. For example, the base graph of a toric code is simply a cycle and thus $\xi$ is uniquely defined to support on every 2-cube of the torus. There will be more choices of $\xi$ for general graphs.

As a reminder, finding $\xi$ for sheaf codes with nontrivial local coefficients turns out to be extremely difficult and we will discuss more details in Section \ref{sec:gauged_sheaf}. For now, we only assume that $\xi$ is a generic 2-cycle. Let $v$ denote a standard basis vector of $C^0(X,\mathbb{F}_q)$. It is supported on a 0-cube geometrically. Similarly, let $p$ denote a standard basis vector of $C^0(X,\mathbb{F}_q)$. The $X$- and $Z$-stabilizer generators of HGP codes over $\mathbb{F}_q$ are
\begin{align}
	X_{\alpha \delta v}, \  Z_{\beta \partial p}, \quad \alpha, \beta \in \mathbb{F}_q,
\end{align}
where $X_{\alpha \delta v}$ applies the qudit Pauli $X^\alpha$ operator to each 1-cube in the support $\supp(\delta^0 v)$. As a reminder, we always abbreviate $\delta^0 v$ by $\delta v$ in the following context. The $Z$-stabilizers are defined similarly.

We now prepare three identical 2D HGP codes with scalar coefficients, labeled by $A,B,C$. Since $p = 2$, $e^{\frac{2\pi i}{p}} = -1$ and the dressed or gauged $X$ stabilizer generators are defined by
\begin{align}
	& \textcolor{red}{\mathcal{A}_{\alpha v}^A} = X_{\alpha \delta v}^A \sum_{x_B, x_C \in C^1}
	(-1)^{ \tr_{\mathbb{F}_q/\mathbb{F}_2}( \alpha \int_\xi v \smile x_B \smile x_C) } \ket{x_B,x_C} \bra{x_B,x_C}, \label{eq:X_A} \\
	& \textcolor{green}{\mathcal{A}_{\alpha v}^B} = X_{\alpha \delta v}^B \sum_{x_A, x_C \in C^1}
	(-1)^{ \tr_{\mathbb{F}_q/\mathbb{F}_2}( \alpha \int_\xi x_A \smile v \smile x_C) } \ket{x_A,x_C} \bra{x_A,x_C}, \label{eq:X_B} \\
	& \textcolor{blue}{\mathcal{A}_{\alpha v}^C} = X_{\alpha \delta v}^C \sum_{x_A, x_B \in C^1}
	(-1)^{ \tr_{\mathbb{F}_q/\mathbb{F}_2}( \alpha \int_\xi x_A \smile x_B \smile v) } \ket{x_A,x_B} \bra{x_A,x_B}, \label{eq:X_C}
\end{align}
As mentioned in Section \ref{sec:F_q}, writing $(-1)^{\alpha \int_\xi v \smile x_B \smile x_C}$ is illegal as $\int_\xi v \smile x_B \smile x_C \in \mathbb{F}_q$ may not have a natural embedding into $\mathbb{Z}$.

It is also important to verify that each $X$-stabilizer generator is dressed by a constant number of $\CZ$ gates. The operators $\textcolor{red}{\mathcal{A}_{\alpha v}^A}$ and $\textcolor{red}{\mathcal{A}_{v}^A}$ have the same weight when $\alpha \neq 0$, so we set $\alpha = 1$ here. In the cubical complex language, $v = [g;b_1,b_2]$ in 2D. Since the underlying cubical complex is sparse, we have the following observations:
\begin{enumerate}
	\item For $\textcolor{red}{\mathcal{A}_v^A}$ with $v = [g;0,0]$, by Eq.~\eqref{eq:cup_01} and \eqref{eq:1-cube_cup}, we know that for any $a_1,a_2$,
	\begin{align}\label{eq:cup_vee}
		[g;0,0] \smile [g; a_1,0] \smile [a_1 \cdot g; 1,a_2] = [g;a_1,a_2]
		= [g;0,0] \smile [g; 0,a_2] \smile [a_2 \cdot g; a_1,1].
	\end{align}
	There are $2n^2 = O(1)$ nontrivial cup products. As long as $\xi$ has a nontrivial support on these 2-cubes, the corresponding stabilizer is dressed by a physical $\CZ$ gate.
	
	\item If $v = [g;0,1], [g;1,0]$, or $[g;1,1]$, the cup products in Eq.~\eqref{eq:cup_vee} are always trivial and the corresponding $\textcolor{red}{\mathcal{A}_v^A}$ stabilizer is bare. In total, at most $\frac{1}{4}$ of the $\textcolor{red}{\mathcal{A}_v^A}$ stabilizers are truly dressed; each takes $2n^2$ $\CZ$ gates. 
	
	\item For a similar reason, at most $\frac{1}{4}$ of the $\textcolor{blue}{\mathcal{A}_v^C}$ stabilizers are dressed when $v = [g;1,1]$. Each takes $2n^2$ $\CZ$ gates.
	
	\item For $\textcolor{green}{\mathcal{A}_v^B}$, the dressed fraction is at most $\frac{1}{2}$ with $v = [g;0,1]$ and $[g;1,0]$. Each takes $n^2$ $\CZ$ gates according to the following identities:
	\begin{align}\label{eq:cup_eve}
		\hspace{-6mm} [g; a_1,0] \smile [a_1 \cdot g;1,0] \smile [a_1 \cdot g; 1,a_2] = [g;a_1,a_2]
		= [g; 0,a_2] \smile [a_2 \cdot g;0,1] \smile [a_2 \cdot g; a_1,1].
	\end{align}
\end{enumerate}

For any fixed $v$, as long as $\xi$ has support on the local view of $v$, i.e., 2-cubes that contain $v$, $\textcolor{red}{\mathcal{A}_v^A},\textcolor{green}{\mathcal{A}_v^B},\textcolor{blue}{\mathcal{A}_v^C}$ do not commute The $Z$ stabilizers are undressed $Z^A_{\beta\partial p},Z^B_{\beta\partial p},Z^C_{\beta \partial p}$. The entire collection of stabilizers contains both Pauli and Clifford operators and is non-Abelian. 

\begin{figure}[H]
	\centering
	\includegraphics[width=0.65\textwidth]{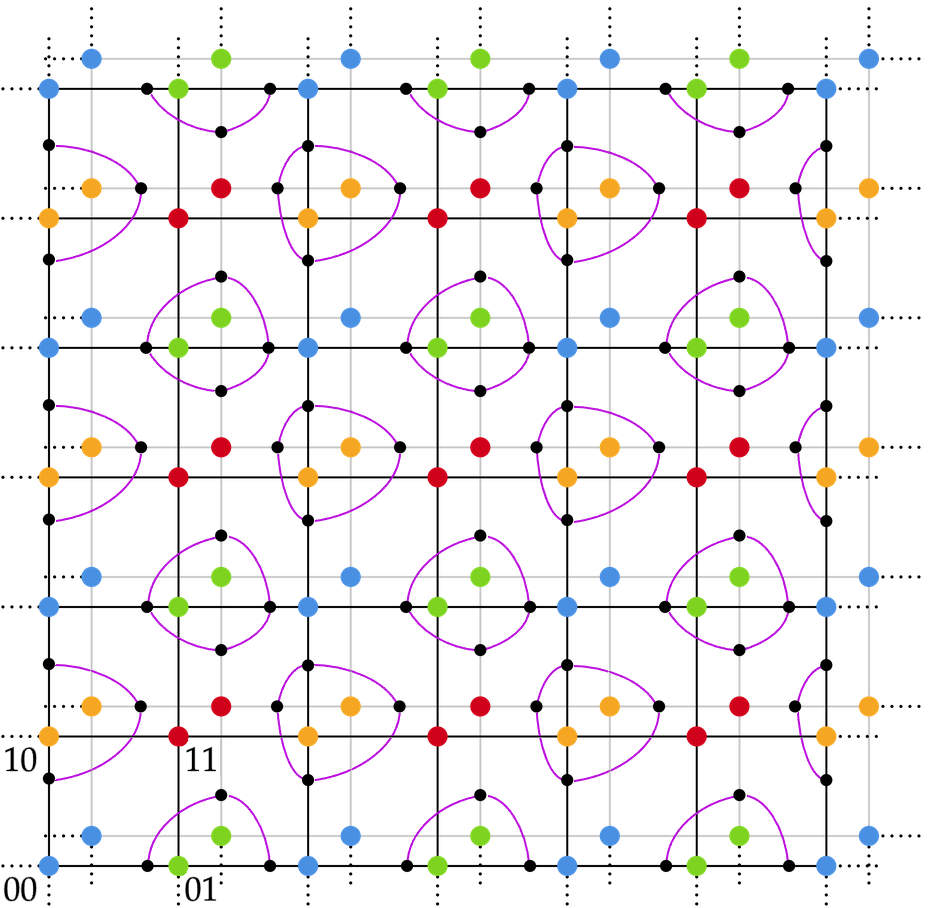}	
	\caption{Cup product $\CZ$ gate on 2D toric codes defined by the hypergraph product of the double cover of a triangle. Each purple curve represents a physical $\CZ$ gate. The binary bits $00,01,10,11$ on each $0$-cube are illustrated by their color: blue, green, brown, red. The configuration is slightly different from the conventional $\CZ$ gate \cite{Wang_2024,Breuckmann2024Cups,Christos2026} due to the definition of cubical complex. Nevertheless, the resulting logical operation is still a $\CZ$ action.}
	\label{fig:toric_cubical}
\end{figure} 


\begin{definition}
	Let $\mathbf{x} = (x_A, x_B, x_C)$ be any tuple of logical representatives of the three HGP code blocks. Then $\mathbf{x}$ is said to be \emph{gauged} if
	\begin{align}\label{eq:gauged_reps_HGP}
		\int_{\xi} x_B \smile x_C, \quad \int_{\xi} x_A \smile x_B, \quad \int_{\xi} x_A \smile x_C = 0.
	\end{align}
\end{definition}

We emphasize that Eq.~\eqref{eq:gauged_reps_HGP} can be derived from more fundamental identities in Definition~\ref{def:gauged_reps} and Theorem \ref{thm:code_space_sheaf}. On the other hand, we say $\mathbf{x}, \mathbf{x}'$ are inequivalent if there is at least one $x_i \nsim x_i'$ with respect to the corresponding coboundaries. If $\mathbf{x}$ is gauged, then by the Leibniz rule, any of its deformations via coboundaries is still gauged. As a result, the notion of a \emph{tuple of gauged cohomological classes} $[\mathbf{x}] = ([x_A],[x_B],[x_C]) \in H^1(X,\mathbb{F}_q) \times H^1(X,\mathbb{F}_q) \times H^1(X,\mathbb{F}_q)$ is well-defined. 

Let $L_G \subset H^1(X,\mathbb{F}_q) \times H^1(X,\mathbb{F}_q) \times H^1(X,\mathbb{F}_q)$ consist of gauged tuples of cohomological classes. As a caveat, it is only a subset but not a subspace. Then the following theorem holds.

\begin{theorem}\label{thm:code_space_HGP}
	Let $\mathbf{x} = (x_A, x_B, x_C)$ be an arbitrary tuple of gauged logical representatives and let $\mathbf{u} = (u_A, u_B, u_C)$ be an arbitrary vector from $C^0(X,\mathbb{F}_q) \times C^0(X,\mathbb{F}_q) \times C^0(X,\mathbb{F}_q)$. Given the above Clifford stabilizers, the non-Abelian code space is spanned by orthonormal logical states of the following form
	\begin{align}\label{eq:gauged_HGP_states}	
		\begin{aligned}
			\ket{\Psi^\xi_{\mathbf{x}}} = & \mathcal{N} \sum_{u_A,u_B,u_C \in C^0} c_{\mathbf{x}}^\xi (\mathbf{u}) 
			\Big\vert x_A + \delta u_A, x_B + \delta u_B, x_C + \delta u_C \Big\rangle.
		\end{aligned}
	\end{align}
	where $\mathcal{N}$ is a normalization factor and $c_{\mathbf{x}}^\xi (\mathbf{u}) = (-1)^{ \tr_{\mathbb{F}_q/\mathbb{F}_2} (F_{\mathbf{x}}^\xi (\mathbf{u}))}$ such that
	\begin{align}
		\begin{aligned}
			F_{\mathbf{x}}^\xi(\mathbf{u}) = & \int_{\xi} u_A \smile x_B \smile x_C + \int_{\xi} x_A \smile u_B \smile x_C + \int_{\xi} x_A \smile x_B \smile u_C \\
			& + \int_{\xi} u_A \smile \delta u_B \smile x_C + \int_{\xi} x_A \smile u_B \smile \delta u_C +  \int_{\xi} \delta u_A \smile x_B \smile u_C \\
			& + \int_{\xi} u_A \smile \delta u_B \smile \delta u_C.
		\end{aligned}
	\end{align}
	For cohomologically equivalent representatives $\mathbf{x} \sim \mathbf{x}'$, the associated logical states $\ket{\Psi^\xi_{\mathbf{x}}}$ and $\ket{\Psi^\xi_{\mathbf{x}'}}$ only differ by a phase factor; thus the code dimension equals the size of $L_G$.
\end{theorem}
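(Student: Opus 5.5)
The plan is to characterize the code space directly as the common $+1$ eigenspace of all generators. We then solve the resulting eigen-equations on the computational basis. The undressed $Z$-stabilizers $Z^{A,B,C}_{\beta\partial p}$ are diagonal, so any code state $\ket{\Psi}=\sum \psi(\mathbf{y})\ket{y_A,y_B,y_C}$ must be supported on triples of $1$-cocycles, $\mathbf{y}\in(\ker\delta^1)^3$. On such states the dressed operator $\mathcal{A}^A_{\alpha v}$ acts as a shift $y_A\mapsto y_A+\alpha\delta v$ together with the diagonal phase $(-1)^{\tr(\alpha\int_\xi v\smile y_B\smile y_C)}$, and similarly for $B$ and $C$. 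Each condition $\mathcal{A}\ket{\Psi}=\ket{\Psi}$ therefore becomes a twisted translation-invariance relation for $\psi$:
\begin{align*}
\psi(y_A+\alpha\delta v,y_B,y_C)=(-1)^{\tr_{\mathbb{F}_q/\mathbb{F}_2}(\alpha\int_\xi v\smile y_B\smile y_C)}\psi(y_A,y_B,y_C),
\end{align*}
and analogously for the other two blocks. These relations only connect points of the same orbit $\mathbf{x}+\delta(C^0)^3$, that is, of the same tuple of cohomology classes. Hence $\psi$ decomposes into independent pieces indexed by tuples $[\mathbf{x}]\in H^1\times H^1\times H^1$.

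Next I will show that on a single orbit the function $c^\xi_{\mathbf{x}}(\mathbf{u})$ given in the statement solves all these relations. Concretely, I will check the increment identity $F^\xi_{\mathbf{x}}(\mathbf{u}+\alpha v\,e_A)-F^\xi_{\mathbf{x}}(\mathbf{u})=\alpha\int_\xi v\smile(x_B+\delta u_B)\smile(x_C+\delta u_C)$ up to terms that vanish. I will verify the analogous identities for shifts in $u_B$ and $u_C$ as well. The tools are:
\begin{itemize}
\item bilinearity and associativity of the cup product (Proposition~\ref{prop:cup_cap});
\item the Leibniz rule together with $\delta x_i=0$;
\item the Stokes identity $\int_\xi\delta(\cdot)=\int_{\partial\xi}(\cdot)=0$, since $\xi$ is a $2$-cycle.
\end{itemize}
Stokes lets me move $\delta$ between factors. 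For example, $\int_\xi \delta u_A\smile x_B\smile u_C=\int_\xi u_A\smile x_B\smile\delta u_C$, which explains why $F$ contains exactly one mixed term per pair of blocks. The cubic term $\int_\xi u_A\smile\delta u_B\smile\delta u_C$ absorbs the quadratic cross terms when all three shifts act together. A solution exists on an orbit only if the phases are consistent around closed loops. Such loops come from $\mathbf{u}$ with $\delta\mathbf{u}=0$. Because the double cover is connected, $\ker\delta^0$ is spanned by the constant cochain $\mathbf{1}$ in each block. Inserting $u_A=\alpha\mathbf{1}$ (and likewise for $B$, $C$), and using that $\mathbf{1}$ is the unit for the cup product, the loop phase is $\tr(\alpha\int_\xi x_B\smile x_C)$. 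This phase is trivial for all $\alpha$ iff $\int_\xi x_B\smile x_C=0$, and likewise for the other two pairs. This is exactly the gauging condition~\eqref{eq:gauged_reps_HGP}. So non-gauged orbits force $\psi=0$, while each gauged orbit supports a one-dimensional solution. That solution is unique up to scale, because the shifts act transitively on the orbit.

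Finally come orthonormality and independence of the representative. States built on distinct orbits have disjoint supports in the computational basis, so they are orthogonal. Each $\ket{\Psi^\xi_{\mathbf{x}}}$ is normalized by $\mathcal{N}$, with $\mathcal{N}^{-2}$ equal to the orbit size times the multiplicity of each $\mathbf{u}$, i.e.\ $|\ker\delta^0|^{3}$. If $\mathbf{x}'=\mathbf{x}+\delta\mathbf{w}$, then $\ket{\Psi^\xi_{\mathbf{x}'}}$ is supported on the same orbit and satisfies the same eigen-relations. By uniqueness it equals $\ket{\Psi^\xi_{\mathbf{x}}}$ up to a scalar, and since both are normalized sign-valued sums that scalar is a phase, namely $c^\xi_{\mathbf{x}}(\mathbf{w})^{\pm1}$. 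Counting one state per gauged class then gives $\dim=|L_G|$. I expect the main obstacle to be the bookkeeping in the second step: confirming that the specific cubic polynomial $F$ satisfies all three increment identities simultaneously over $\mathbb{F}_q$. This includes the trace and the $\alpha$-scalings, and the fact that the cup product is noncommutative, so every Stokes manipulation must preserve factor order. I will first check it for single-block shifts, and then argue that general shifts follow by composing these, since the relations define a consistent twisted action.
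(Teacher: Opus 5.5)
Your proposal is correct and follows essentially the same route as the paper's proof, which is given for Theorem~\ref{thm:code_space_sheaf}: $Z$-stabilizers force cocycle support, the increment identity of Lemma~\ref{lemma:coefficient} shows that $c^\xi_{\mathbf{x}}$ satisfies the dressed-stabilizer relations, invariance of $c^\xi_{\mathbf{x}}$ under $(\ker\delta^0)^3$ is equivalent to the gauge condition, and distinct orbits give orthogonal states. The differences are minor: you get the phase for $\mathbf{x}\sim\mathbf{x}'$ from uniqueness rather than from the explicit formula \eqref{eq:coefficient_2}, and you state outright that non-gauged orbits force $\psi=0$, which the paper leaves implicit (one small slip: the normalization is $\mathcal{N}^{-2}=|C^0|^3\,|\ker\delta^0|^3$, i.e.\ the orbit size times the square of the multiplicity).
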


We postpone the proof to Section \ref{sec:gauged_sheaf} in the general setting of sheaf codes. Intuitively, if $c_{\mathbf{x}}^\xi (\mathbf{u}) \equiv 1$ in Eq.~\eqref{eq:gauged_HGP_states}, the quantum state falls back into the original HGP code space. We also show in Example \ref{example:gauged_toric} that Eq.~\eqref{eq:gauged_HGP_states} correctly yields a $22$-dimensional quantum double model for gauged toric codes.

We can extend the above construction beyond 2 dimensions. Taking $(t+1)$ copies of some $t$-dimensional HGP codes, we dress $X$ stabilizers by $\mathrm{C}^t Z$ gates as follows:
\begin{align}
	& \mathcal{A}_v^{(1)} = X_{\delta v}^{(1)} \sum_{x^{(2)},\ldots, x^{(t+1)} \in C^1}
	(-1)^{ \tr_{\mathbb{F}_q/\mathbb{F}_2}( \int_\xi v \smile x^{(2)} \smile \cdots \smile x^{(t+1)} ) } 
	\ket{ x^{(2)},\ldots, x^{(t+1)} } \bra{ x^{(2)},\ldots, x^{(t+1)} }, \label{eq:X_1} \\
	& \qquad \vdots \notag \\
	& \mathcal{A}_v^{(t+1)} = X_{\delta v}^{(t+1)} \sum_{x^{(1)},\ldots, x^{(t)} \in C^1}
	(-1)^{ \tr_{\mathbb{F}_q/\mathbb{F}_2}( \int_\xi x^{(1)} \smile \cdots \smile x^{(t)} \smile v ) } 
	\ket{ x^{(1)},\ldots, x^{(t)} } \bra{ x^{(1)},\ldots, x^{(t)} }. \label{eq:X_t}
\end{align}
The $Z$ stabilizers are simply $Z^{(1)}_{\beta \partial p},...,Z^{(t)}_{\beta \partial p},Z^{(t+1)}_{\beta \partial p}$. As before, the collection of stabilizers involves non-Clifford operators and Theorem \ref{thm:code_space_HGP} still holds. Let $d$ be the code distance of the HGP code and let $\epsilon = \epsilon_\delta(0)$ be the $0$-cocycle expansion of the HGP cochain complex (see Definition \ref{def:expansions}). Then we have the following bound on the gauged HGP code distance.

\begin{theorem}\label{thm:code_distance_HGP}
	The gauged HGP code has distance
	\begin{align}
		d_{\text{gauge}} \geq c(t,n) \cdot \epsilon \cdot d
	\end{align}
	where $c(t,n)$ is a function that depends only on the node degree $n$ and the dimension $t$ of the cochain complex.
\end{theorem}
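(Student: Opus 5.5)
We will prove the bound through the Knill--Laflamme condition of Theorem~\ref{thm:Pauli_KL}. We set $w \coloneq \lfloor c'(t,n)\,\epsilon\, d\rfloor$ for a suitable constant $c'$. The goal is to show that for every pair of Pauli strings $U,V\in\mathcal{P}_w$, and every pair of basis states $\ket{\Psi^\xi_{\mathbf{x}}},\ket{\Psi^\xi_{\mathbf{x}'}}$ from Theorem~\ref{thm:code_space_HGP}, the matrix element $\bra{\Psi^\xi_{\mathbf{x}}}U^\dagger V\ket{\Psi^\xi_{\mathbf{x}'}}$ equals $\gamma_{UV}\delta_{\mathbf{x}\mathbf{x}'}$. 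Since $U^\dagger V$ is, up to phase, a Pauli string $E=X^{\boldsymbol\alpha}Z^{\boldsymbol\beta}$ of weight at most $2w<d$, we decompose it blockwise as $E=\bigotimes_{i}X^{e_i}Z^{f_i}$ over the $t+1$ copies. The matrix element is a sum over $\mathbf{u},\mathbf{u}'$ of
\[
\overline{c^\xi_{\mathbf{x}}(\mathbf{u})}\,c^\xi_{\mathbf{x}'}(\mathbf{u}')\,\chi\big(\textstyle\sum_i\langle f_i,\cdot\rangle\big)
\]
restricted to configurations with $x_i+\delta u_i = x'_i+\delta u'_i+e_i$.

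\textbf{Step 1 ($X$-part).} A nonzero term requires $e_i \in x_i+x_i'+\Ima\delta$, so $e_i$ is a cocycle. Because $|e_i|<d$, it must be a coboundary, which gives $[x_i]=[x'_i]$. After using the phase freedom of Theorem~\ref{thm:code_space_HGP} we may take $\mathbf{x}=\mathbf{x}'$, and then $e_i=\delta w_i$ for some $w_i \in C^0$. If some $e_i$ is not a cocycle, every term vanishes and the matrix element is $0$ for all pairs, which is consistent with the condition. This is where expansion enters. By the definition of $\epsilon=\epsilon_\delta(0)$ we can choose $w_i$ with $\operatorname{dist}(w_i,\ker\delta^0)\le |e_i|/\epsilon$, that is, $w_i$ can be taken to have weight $O(w/\epsilon)$ after subtracting a $0$-cocycle. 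We then shift $u_i'=u_i+w_i$. The term becomes a sum over $\mathbf{u}$ of the ratio $c_{\mathbf{x}}(\mathbf{u}+\mathbf{w})/c_{\mathbf{x}}(\mathbf{u})$ times the $Z$-phase.

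\textbf{Step 2 (locality of the phase ratio).} We expand $F^\xi_{\mathbf{x}}(\mathbf{u}+\mathbf{w})-F^\xi_{\mathbf{x}}(\mathbf{u})$ using multilinearity and the Leibniz rule of Proposition~\ref{prop:cup_cap}. Every surviving term contains at least one factor $w_i$ or $\delta w_i$, so it is supported in the $O_{t,n}(1)$-neighbourhood $M$ of $\supp(\mathbf{w})\cup\supp(\mathbf{f})$, a set of size $O_{t,n}(w/\epsilon)$. The remaining factors are $x_j+\delta u_j$, which is the actual basis configuration, together with the $u_j$ themselves. We choose $c'(t,n)$ small enough that $|M|<d$. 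The cleaning lemma (Lemma~\ref{thm:Cleanning}) then lets us pick each representative $x_j$ to vanish on $M$ while staying gauged, since gauging is invariant under coboundaries. As a result the ratio and the $Z$-phase depend only on the restriction of $\delta u_j$ and $u_j$ to $M$, and no longer on the logical label $\mathbf{x}$.

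\textbf{Step 3 (independence of $\mathbf{x}$, the main obstacle).} We must show that the resulting sum $\sum_{\mathbf{u}}\Phi(\mathbf{u}|_M)$ is the same for all gauged $\mathbf{x}$. The difficulty is that the dressing phase involves terms such as $\int_\xi u_A\smile\delta u_B\smile x_C$, in which $u_A$ restricted to $M$ couples to $x_C$ through the cup product only locally; the cleaned representative kills exactly those couplings. The residual dependence is on $u|_M$ modulo the local $0$-cocycles, and for the term $\int_\xi w_A\smile \delta u_B\smile \delta u_C$ we have to check that the cubic phase averages identically across sectors. I would first attempt this by noting that after cleaning, the sum factorizes into an average over $u|_M$ that is uniform for every sector, with normalization $\mathcal{N}$ independent of $\mathbf{x}$. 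Alternatively I would show that $U^\dagger V$ restricted to $M$ commutes, up to a $\mathbf{x}$-independent operator, with the dressed stabilizers generated on $M$, thereby reducing to the statement that local operators act as scalars on the common $+1$ eigenspace. Carrying this out for general $t$ then gives $d_{\text{gauge}}\ge 2w+1\ge c(t,n)\,\epsilon\,d$.
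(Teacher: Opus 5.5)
Your proposal follows the paper's own argument, which is carried out for sheaf codes in the proof of Theorem~\ref{thm:code_distance_sheaf}: Knill--Laflamme, cocycle expansion to pick a light $w_i$ with $\delta w_i$ equal to the $X$-part of the error, a Leibniz expansion of $F^\xi_{\mathbf{x}}(\mathbf{u}+\mathbf{w})+F^\xi_{\mathbf{x}}(\mathbf{u})$, and cleaning of the representatives off a bounded neighbourhood of the error supports. Your Step~3 is not actually an obstacle: once cleaning kills every term containing some $x_j$ (including $\langle f_j,x_j\rangle$), the summand depends only on $\mathbf{u}$ and $\mathbf{w}$ (the cubic term $\int_\xi w_A\smile\delta u_B\smile\delta u_C$ contains no representative), and it is summed over the full space of $\mathbf{u}$ with an $\mathbf{x}$-independent normalization $\mathcal{N}$, so sector independence is immediate, exactly as the paper concludes; your second alternative should be dropped, since ``local operators act as scalars on the code space'' is precisely the Knill--Laflamme statement being proved.
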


Again, the proof is postponed to the general case of sheaf codes. We also provide an elementary example about gauged toric codes in Example \ref{example:gauged_toric}.


\subsection{Gauged toric codes and quantum double}\label{sec:toric}

We compute several examples of gauged HGP codes with explicit code parameters. To begin with,  we start from constructing \emph{canonical logical representatives} for a two-dimensional HGP code. This is used to count the number of gauged tuples of cohomological classes, which yields the gauged code dimension. For $i=1,2$, write
\begin{align}
	\mathbb{F}_q^{M_i} \xrightarrow{H_i} \mathbb{F}_q^{N_i}, \quad
	r_i = \rk H_i, \quad
	k_i = M_i-r_i = \dim \ker H_i,\quad k_i^T = N_i-r_i = \dim \ker H_i^T.
\end{align}
We use exactly the four groups of vectors introduced in ~\cite{Fu2025nogo,LSWLL2026Theory}:
\begin{align}
	e_1^{(i)},\ e_2^{(p)},\qquad \zeta_1^{(q)},\ \zeta_2^{(j)},\qquad
	f_1^{(i)},\ f_2^{(p)},\qquad \eta_1^{(q)},\ \eta_2^{(j)}.
\end{align}
We first construct $e_1^{(i)}$ and $f_1^{(i)}$. Applying elementary column reduction to $H_1$, and permuting its rows if necessary,  we obtain
\begin{align}\label{eq:HGP_systematic_output}
	\widetilde{H}_1 =
	\begin{pmatrix}
		I_{N_1-k_1^T} & 0 \\
		J_1 & 0
	\end{pmatrix}, \qquad J_1\in\mathbb{F}_q^{k_1^T \times (N_1-k_1^T)}.
\end{align}
When solving the linear system $x^T H_1 = 0$, the first $N_1-k_1^T$ rows of $\widetilde{H}_1$ are called \emph{pivot} rows and the last $k_1^T$ rows are \emph{free} rows. For $i\in[k_1^T]$, define
\begin{align}\label{eq:HGP_e1_definition}
	e_1^{(i)} = \begin{pmatrix} 0 \\ \varepsilon_i \end{pmatrix} \in \mathbb{F}_q^{N_1},
\end{align}
where $\varepsilon_i$ is the $i$-th standard basis vector of $\mathbb{F}_q^{k_1^T} \subset \mathbb{F}_q^{N_i}$. Namely, $e_1^{(i)}$ is the one-hot vector on the $i$-th free variable. By basic linear algebra, these vectors span a complement of $\Ima H_1$. 

Next observe that
\begin{align}\label{eq:J_I_kernel}
	\begin{pmatrix} J_1 & I_{k_1^T} \end{pmatrix}
	\begin{pmatrix}
		I_{N_1-k_1^T} & 0 \\
		J_1 & 0
	\end{pmatrix} = J_1+J_1=0.
\end{align}
Then we define $f_1^{(i)}$ to be the transpose of the $i$-th row of $\begin{pmatrix}J_1 & I_{k_1^T}\end{pmatrix}$. By definition,
\begin{align}
	f_1^{(i)} \in \ker H_1^T, \qquad \langle e_1^{(i)},f_1^{(i')}\rangle=\delta_{i,i'}.
\end{align}
We define $e_2^{(p)}$ and $f_2^{(p)}$, $p\in[k_2^T]$, by the same procedure applied to $H_2$.

On the other hand, we apply row reduction to $H_i$ and choose one-hot representatives
\begin{align}
	\eta_1^{(q)} \in \mathbb{F}_q^{k_1} \subset \mathbb{F}_q^{M_1}, \qquad \eta_2^{(j)} \in \mathbb{F}_q^{k_2} \subset \mathbb{F}_q^{M_2},
\end{align}
and choose the kernel bases of $\ker H_i$ compatibly so that
\begin{align}\label{eq:HGP_zeta_xi_pairing}
	\langle\zeta_1^{(q)},\eta_1^{(q')}\rangle = \delta_{q,q'}, \qquad \langle\zeta_2^{(j)},\eta_2^{(j')}\rangle = \delta_{j,j'}.
\end{align}

Recall that the HGP code is defined on $C^1$ consisting of two sectors:
\begin{align}
	C^1(X,\mathbb{F}_q) = (\mathbb{F}_q^{N_1} \otimes \mathbb{F}_q^{M_2}) \oplus (\mathbb{F}_q^{M_1} \otimes \mathbb{F}_q^{N_2}).
\end{align}
with
\begin{align}\label{eq:2D_HGP_coboundary}
	\delta^1 = \begin{pmatrix} 
		I_{N_1} \otimes H_2 & H_1 \otimes I_{N_2}
	\end{pmatrix}, \quad
	\delta^0 = \begin{pmatrix} H_1 \otimes I_{M_2} \\ I_{M_1} \otimes H_2 \end{pmatrix}
\end{align}
It turns out that two families of canonical $X$-logical representatives are
\begin{align}\label{eq:HGP_canonical_X}
	e_1^{(i)} \otimes \zeta_2^{(j)} \in \mathbb{F}_q^{N_1} \otimes \mathbb{F}_q^{M_2}, & \quad i \in [k_1^T], \ j\in[k_2], \\
	\zeta_1^{(q)} \otimes e_2^{(p)} \in \mathbb{F}_q^{M_1} \otimes \mathbb{F}_q^{N_2}, & \quad q \in [k_1], \ p\in[k_2^T],
\end{align}
It is easy to show that they are cocycles but not coboundaries. Their total number is
\begin{align}
	k_1^T k_2 + k_1 k_2^T = \dim H^1(X,\mathbb{F}_q),
\end{align}
so the cohomological classes of Eq.~\eqref{eq:HGP_canonical_X} form a basis of the cohomology.

Transposing the cochain complex and applying the same construction gives the canonical $Z$-logical representatives
\begin{align}\label{eq:HGP_canonical_Z}
	f_1^{(i)} \otimes \eta_2^{(j)} \in \mathbb{F}_q^{N_1} \otimes \mathbb{F}_q^{M_2}, & \quad i\in[k_1^T], \ j \in [k_2], \\
	\eta_1^{(q)} \otimes f_2^{(p)} \in \mathbb{F}_q^{M_1} \otimes \mathbb{F}_q^{N_2}, & \quad q\in[k_1], \ p \in [k_2^T].
\end{align}
The preceding pairings give
\begin{align}
	\left\langle e_1^{(i)}\otimes\zeta_2^{(j)}, f_1^{(i')} \otimes \eta_2^{(j')} \right\rangle & = \delta_{i,i'}\delta_{j,j'}, \\
	\left\langle \zeta_1^{(q)} \otimes e_2^{(p)}, \eta_1^{(q')} \otimes f_2^{(p')} \right\rangle & = \delta_{q,q'}\delta_{p,p'},
\end{align}
and pairings between different sectors vanish. 

\begin{example}\label{example:gauged_toric}
	As an immediate example, the toric code has exactly two canonical $X$-logical representatives
	\begin{align}
		x_V = e_1 \otimes \zeta_2, \quad x_H = \zeta_1 \otimes e_2.
	\end{align}
	To be precise, let $G_0 \times K_2$ be the double cover of a cycle $G_0$. When $V_0$ is odd, the double cover is still a connected cycle of $V = 2V_0$ vertices as in Eq.~\eqref{eq:K_3_double}. For the toric code constructed by the corresponding 2-dimensional hypergraph product, $\zeta_2 = \zeta_1 \in \ker H_i = \ker \delta \subset \mathbb{F}_2^{2V_0}$ is the all-ones vector with $k_i = \dim \ker H_i = 1$. The vectors $e_1, e_2$ are arbitrary one-hot vectors in $\mathbb{F}_2^{2E_0}$ as $k_i^T = \dim \ker H_i^T = 1$. Pictorially, $x_V$ represents the vertical non-contractible loop and $x_H$ is the horizontal non-contractible loop.
	
	We now check the gauge conditions for toric codes:
	\begin{align}
		\int_\xi x_B \smile x_C = \int_\xi x_A \smile x_B = \int_\xi x_A \smile x_C = 0.
	\end{align}
	Since $x_A,x_B,x_C$ are cocycles, any of them can be written as $ax_V + bx_H + \delta^0 u$ for some $u \in C^0$. By multi-linearity of cup products and the computation rules introduced in Section \ref{sec:cubical}, 
	\begin{align}
		\int_\xi x_B \smile x_C = \int_\xi (ax_V + b x_H) \smile  (c x_V + d x_H) = ad + bc,
	\end{align}
	so we have the following possibilities for the choices of $[(a,b),(c,d)]$:
	\begin{align}
		\begin{aligned}
			& [(0,0),(0,0)], \quad [(1,1),(1,1)], \\
			& [(1,0),(0,0)], \quad [(0,1),(0,0)], \quad [(0,0),(1,0)], \quad[(0,0),(0,1)], \\ 
			& [(1,1),(0,0)], \quad [(0,0),(1,1)], \quad [(1,0),(1,0)], \quad [(0,1),(0,1)].
		\end{aligned}
	\end{align}
	The total number of combinations of $[(a_1,b_1),(a_2,b_2),(a_3,b_3)]$ satisfying all three cup product conditions equals $22$, in agreement with the ground-state degeneracy of the $D_4$ quantum double on the torus\cite{Kitaev_2003,twisted_quantum_double}.
	
	Let $d = \Theta(\sqrt{N})$ denote the ordinary toric code distance. The proof method of Theorem \ref{thm:code_distance_sheaf} indicates that the gauged toric code has distance $\geq \frac{d}{4} = \Theta(\sqrt{N})$. The proof uses the cleaning lemma, which we present later. What we want to emphasize is the property of cocycle expansions. As mentioned, the toric code is formed by a product of cycles and it is well-known that they do not admit good expansion properties, i.e., the $\epsilon$ in Theorem \ref{thm:code_distance_HGP} cannot be a constant for toric codes. Nevertheless, the actual distance still attains  $\Theta(\sqrt{N})$. The reason goes as follows: let
	\begin{align}
		E = (\mathcal{X}_A, \mathcal{Z}_A; \mathcal{X}_B, \mathcal{Z}_B; \mathcal{X}_C, \mathcal{Z}_C).
	\end{align}
	be a tuple of Pauli errors on the gauged toric codes. We show in Theorem  \ref{thm:code_distance_sheaf} that $\mathcal{X}_i$ are always coboundaries, i.e., $\mathcal{X}_i = \delta w_i$ with $i = A,B,C$. For a cubical complex with a constant cocycle expansion in Definition \ref{def:expansions}, we can choose $w_i$ such that its weight is bounded by $\mathcal{X}_i$ and this further enables the use of the cleaning lemma.
	
	For toric codes, we do not have such convenience. Instead, suppose $|\mathcal{X}_i| < \frac{d}{2}$. Then we can always find some $u_i$ such that $\delta^0 u_i = \mathcal{X}_i$ and
	\begin{align}
		X_{\geq u_i}(2) \coloneq \{ \pi \in X(2): \pi \text{ contains some support of } u_i\}
	\end{align}	
	can always be cleaned. The reason is simple: we represent the torus as a $d \times d$ grid. On any row of the grid, if $u_i$ is not a constant vector, then $\delta^0 u_i$ contributes at least two nonzero entries to $\mathcal{X}_i$ in that row. Since we have $d$ rows and since $|\mathcal{X}_i| < \frac{d}{2}$, there are at least $\frac{3d}{4}$ rows on which $u_i$ is constant. By the same argument, there are at least $\frac{3d}{4}$ columns on which $u_i$ is constant. If $u_i \equiv 0$ there, then we simply place the canonical logical representatives on the corresponding columns and rows. Otherwise, by adding the all-ones vector in $C^0$, we can flip the value of $u_i$ over $\mathbb{F}_2$. In any case, the cup products in \eqref{eq:rep_error} vanish and all the other arguments in Theorem \ref{thm:code_distance_sheaf} hold to give the $\Theta(\sqrt{N})$ distance. 
\end{example}


\medskip
Suppose $G_0$ and its double cover are connected $n$-regular graphs. Let $\{[L_i]\}$ denote a basis of the cohomology via canonical $X$-logical representatives. As before
\begin{align}
	[x_A] = \left[ \sum_{i_A} a_{i_A} L_{i_A} \right], \quad [x_B] = \left[ \sum_{i_B} b_{i_B} L_{i_B} \right], \quad [x_C] = \left[ \sum_{i_C} c_{i_C} L_{i_C} \right]
\end{align}
and we have $H_1 = H_2$. Then $k_1 = k_2$ and $k_1^T = k_2^T$. Similarly to the case of toric codes, the logical representatives $L_i$ can be divided into two groups such that
\begin{align}
	L_i^H \smile L_j^V = \tau_{ij} = L_j^V \smile L_i^H, \quad 1 \leq i,j \leq \frac{\dim H^1(X,\mathbb{F}_q)}{2} = \frac{k}{2}, 
\end{align}
where $\tau_{ij}$ is a 2-cube. 

Assume $n$ is even, we can take $\xi$ to be supported on each 2-cube. As a result,
\begin{align}
	\int_{\xi} x_A \smile x_B = \sum_{1 \leq i_A,j_A,i_B,j_B \leq k/2} a_{i_A} b_{j_B} + a_{j_A} b_{i_B}, \quad a_{i_A}, b_{j_B}, a_{j_A}, b_{i_B} \in \mathbb{F}_q.
\end{align}
and the global constraints are
\begin{align}
	\sum_{1 \leq i_A,j_A,i_B,j_B \leq k/2} a_{i_A} b_{j_B} + a_{j_A} b_{i_B} = 0, \\
	\sum_{1 \leq i_B,j_B,i_C,j_C \leq k/2} b_{i_B} c_{j_C} + b_{j_B} c_{i_C} = 0, \\
	\sum_{1 \leq i_A,j_A,i_C,j_C \leq k/2} a_{i_A} c_{j_C} + a_{j_A} c_{i_C} = 0.
\end{align}
It is however intricate to obtain a closed-form formula for the gauged code dimension of general hypergraph products. Nevertheless, a simple counting argument indicates that the gauged code dimension has the same scaling as that of the original CSS code. To be precise, given $\dim H^1(X,\mathbb{F}_q) = k_1^T k_2 + k_1 k_2^T = k$, suppose $k_1^T k_2 = \Theta(k)$ without loss of generality. Then we merely take $x_A,x_B,x_C$ to lie in the span of logical representatives in $\mathbb{F}_q^{N_1} \otimes \mathbb{F}_q^{M_2}$. Then the cup product must always vanish and the gauge condition holds automatically. Therefore, there are at least $(q^{k_1^T k_2})^3$ gauged tuples of cohomological classes in $L_G$ and the gauged code dimension is $\geq (q^{k_1^T k_2})^3 = q^{\Theta(k)}$. The same argument is used in the setting of gauged sheaf codes to achieve good coding rate.


\section{Gauging sheaf codes}\label{sec:gauged_sheaf}

It is natural to generalize the dressed $X$ stabilizer from HGP codes to sheaf codes as  
\begin{align}
	& \textcolor{red}{\mathcal{A}_{\alpha v}^A} 
	= X_{\alpha \delta v}^A \sum_{x_B, x_C \in C^1(X,\F)}
	(-1)^{ \tr_{\mathbb{F}_q/\mathbb{F}_2}( \alpha \int_\xi v \smile x_B \smile x_C) }
	\ket{x_B,x_C} \bra{x_B,x_C}, \\
	& \textcolor{green}{\mathcal{A}_{\alpha v}^B} 
	= X_{\alpha \delta v}^B \sum_{x_A, x_C \in C^1(X,\F)}
	(-1)^{ \tr_{\mathbb{F}_q/\mathbb{F}_2}( \alpha \int_\xi x_A \smile v \smile x_C) }
	\ket{x_A,x_C} \bra{x_A,x_C},  \\
	& \textcolor{blue}{\mathcal{A}_{\alpha v}^C} 
	= X_{\alpha \delta v}^C \sum_{x_A, x_B \in C^1(X,\F)}
	(-1)^{ \tr_{\mathbb{F}_q/\mathbb{F}_2}( \alpha \int_\xi x_A \smile x_B \smile v) }
	\ket{x_A,x_B} \bra{x_A,x_B}, 
\end{align}
where $\alpha \in \mathbb{F}_q$ and $v \in C^0(X,\F)$ is a standard basis vector that is supported on a single 0-cube with a local standard basis vector in $\mathbb{F}_q^{m_1 \times m_2}$. However, before we initiate the generalization, we have to take a close look at the cup product $v \smile x_B \smile x_C$. It is defined by the map
\begin{align}
	C^0(X,\F) \times C^1(X,\F) \times C^1(X,\F) \rightarrow C^2(X, \F^{\otimes 3})  
\end{align}
analogous to \ref{eq:1-cube_cup_sheaf}. Recall that $\F^{\otimes 3}$ is defined by $\F_{\sigma,\tau}^{\otimes 3}: \F_\sigma^{\otimes 3} \rightarrow \F_\tau^{\otimes 3}$. On the other hand, for a well-defined pairing or integral in Eq.~\eqref{eq:paring}, $\xi$ must be a 2-cycle from $C_2(X, \F^{\otimes 3})$. Let $h_i \in \mathbb{F}_q^{m_i \times n}$ with $i = 1,2,3$ be the local parity-check matrices that define $\F$. For each $i = 1,2,3$, let 
\begin{align}
	h_i \ast h_i \ast h_i = h_i^{\ast 3}: \mathbb{F}_q^{n} \rightarrow \mathbb{F}_q^{m_i^3}
\end{align}
be obtained by taking column-wise tensor products (Khatri--Rao product in Definition \ref{def:KR_product}). Solving for $\xi$ needs $\ker  h_i^{\ast 3} \neq 0$ \cite{LSWLL2026Theory}. However, to achieve an (almost) good qLDPC code, a simple yet necessary condition is that the local codes $\ker h_i$ have to have good local distance and this requires $m_i$ to be large enough. Usually, $m_i^3 = O(n^3)$ and it is unlikely that $\ker  h_i^{\ast 3}$ can still survive. To be more precise, the framework needs two-way product-expanding local classical codes \cite{PK2023RobustlyTestable,DHLV2022,Dinur2024sheaf,KP2025Extendable} which are proven to exist by probabilistic methods and it is unclear whether or not $\ker h_i^{\ast 3}$ is trivial in the random sense. A recent work \cite{LLL2026nontrivial} demonstrates the product-expansion on punctured Reed--Solomon codes, whose parity-check matrices admit a Khatri--Rao product with nontrivial kernel. In order to achieve good qLTC codes, projective Reed--Solomon codes were anticipated to be useful \cite{Panteleev2024,KP2025Extendable} and demonstrated in \cite{GJ2026,BLN2026,CHLT2026}. They also support the existence of $\xi$ in the good qLTC setting \cite{LLL2609}. We will use these RS codes in the following construction of (almost) good qLDPC non-Abelian codes.   


\subsection{Gauged code space and dimension}\label{sec:k_sheaf}

We prepare $t+1$ pieces of $t$-dimensional (co)chain complexes: 
\begin{align}\label{eq:t_sheaf_codes}
	\begin{aligned}
		& C^0(X,\F) \xrightarrow{\delta^0} C^1(X,\F) \xrightarrow{\delta^1} C^2(X,\F) \xrightarrow{\delta^2} \cdots \xrightarrow{\delta^{t-1}} C^t(X,\F), \\
		& \qquad \vdots \\
		& C^0(X,\F) \xrightarrow{\delta^0} C^1(X,\F) \xrightarrow{\delta^1} C^2(X,\F) \xrightarrow{\delta^2} \cdots \xrightarrow{\delta^{t-1}} C^t(X,\F), \\
		& C^0(X,\mathcal{G}) \xrightarrow{\delta^0} C^1(X,\mathcal{G}) \xrightarrow{\delta^1} C^2(X,\mathcal{G}) \xrightarrow{\delta^2} \cdots \xrightarrow{\delta^{t-1}} C^t(X,\mathcal{G}),
	\end{aligned}
\end{align}
where the first $t$ pieces are identical and the last one uses a different sheaf $\mathcal{G}$. Both $\F$ and $\mathcal{G}$ are generated by two-way product expanding punctured RS codes \cite{LLL2026nontrivial}, but in different ways. This is due to the necessity to preserve the global coding rate explained below. Gauging good qLTCs with projective RS codes \cite{GJ2026,BLN2026,CHLT2026} exhibits the same pattern. For simplicity, we outline the process at the end of this subsection. 

For now, let $\{\C_i\}_{i=1}^t$ be a collection of two-way product expanding punctured RS codes of length $n$. Let $\dim \C_i = m_i \leq \frac{n}{2t}$ and let $h_i \in \mathbb{F}_q^{m_i \times n}$ be the generator matrices of $\C_i$, i.e., $\C_i = \Ima h_i^T$. They are Vandermonde matrices and are used to generate $\F$. As for $\mathcal{G}$, it is generated by $(h_1',\ldots,h_{t-1}',h_t)$ where $h_i'$ is the generator matrix of  
\begin{align}
	\C_i' \coloneq (\C_i^{\ast t})^\perp, \ i=1,\ldots,t-1
\end{align}
where $\C_i^{\ast t}$ is the code generated by the $t$-fold Schur product of $h_i$ and $(\C_i^{\ast t})^\perp$ is its dual. Given $m_i < \frac{n}{2t}$ and the fact that $h_i$ are Vandermonde matrices, $h_i^{\ast t}$ is still a Vandermonde matrix with repeated rows and its rank is $< \frac{n}{2}$. The dual code and thus the generator matrix $h_i'$ have rank $m_i' > \frac{n}{2}$. To summarize, $\F$ and $\mathcal{G}$ are defined by
\begin{align}
	(h_1,\ldots,h_{t-1},h_t), \quad (h_1',\ldots,h_{t-1}',h_t)
\end{align}
with ranks
\begin{align}\label{eq:local_rank}
	m_1,\ldots,m_{t-1},m_t < \frac{n}{2t}, \quad m_1',\ldots,m_{t-1}' > \frac{n}{2}.
\end{align}
For simplicity, we take $m_i \equiv m < \frac{n}{2t}$ and $m_i' \equiv m' > \frac{n}{2}$. With product-expanding local codes, it is proved in \cite{DHLV2022,Dinur2024sheaf} that the first $t$ identical copies of sheaf codes have parameters
\begin{align}\label{eq:sheaf_code_parameter_1}
	[\![ N, \Omega(1), \Theta(N/\mathrm{polylog}\,N)  ]\!]
\end{align}
and the last one has almost good parameters
\begin{align}\label{eq:sheaf_code_parameter_2}
	[\![ N, \Theta(N), \Theta(N/\mathrm{polylog}\,N)  ]\!]
\end{align}
If $t = 2$, the distance becomes optimal $d = \Theta(N)$. Good qLTCs are achieved via distinct cubical complexes \cite{GJ2026,BLN2026,CHLT2026}. The tensor product sheaf $\F^{\otimes t} \otimes \ \mathcal{G}$ is generated by
\begin{align}
	(h_1)^{\ast t} \ast h_1', \cdots, (h_{t-1})^{\ast t} \ast h_{t-1}', \ (h_t)^{\ast (t+1)}.
\end{align}
The last matrix has rank $< n$ and thus $\ker (h_t)^{\ast (t+1)} \neq 0$. The others are products between codes and dual codes; the resulting kernel must contain the all-ones vector. All these facts ensure the existence of a nontrivial cycle $\xi \in C_t(X,\F^{\otimes t} \otimes \ \mathcal{G})$ and it can be chosen to support on every $t$-cube of $X$ \cite{LLL2026nontrivial}.

The dressed $X$-stabilizers are defined below: for $1 \leq i \leq t$,
\begin{align}\label{eq:A_i_sheaf}
	\mathcal{A}_{\alpha v}^{(i)} = X_{\alpha \delta v}^{(i)} \sum_{ \substack{x^{(j)} \in C^1(X,\F), j \neq i, 1 \leq j \leq t \\ x^{(t+1)} \in C^1(X,\mathcal{G})} } 
	& (-1)^{ \tr_{\mathbb{F}_q/\mathbb{F}_2}( \left\langle x^{(1)} \smile \cdots \smile x^{(i-1)} \smile \alpha v \smile  x^{(i+1)} \smile \cdots \smile x^{(t+1)}, \ \xi \right\rangle )}  \\
	& \ket{ x^{(1)},\ldots,x^{(i-1)},x^{(i+1)},\ldots, x^{(t+1)} } \bra{ x^{(1)},\ldots,x^{(i-1)},x^{(i+1)},\ldots, x^{(t+1)} }, \notag 
\end{align}
where $v \in C^0(X,\F)$ is defined on a single 0-cube with a local standard basis vector in $\mathbb{F}_q^{m_1 \times \cdots \times m_{t}}$. For $i = t+1$,
\begin{align}\label{eq:A_t_sheaf} 
	\mathcal{A}_{\alpha v}^{(t+1)} = X_{\alpha \delta v}^{(t+1)} \sum_{x^{(1)},\ldots,x^{(t)} \in C^1(X,\F)}
	& (-1)^{ \tr_{\mathbb{F}_q/\mathbb{F}_2}( \left\langle x^{(1)} \smile \cdots \smile x^{(t)} \smile \alpha v, \ \xi \right\rangle )}
	\ket{ x^{(1)},\ldots, x^{(t)} } \bra{ x^{(1)},\ldots,x^{(t)}}, 
\end{align} 

As for the weights of these stabilizer generators, for example, $\mathcal{A}_v^{(1)}$ with $v$ based on some $[g;0,\ldots,0]$ has $t! n^t$ nontrivial cup products with $t$ pieces of $1$-cubes. Taking local coefficients into consideration (see Section \ref{sec:cubical}), there are at most $(t-1) m^{t-1} + m'^{t-2} m$ degrees of freedom on $x^{(2)}, \ldots, x^{(t)},x^{(t+1)}$ and a local $t$-controlled-$Z$ circuit would act on physical qudits indexed by these 1-cubes with local coefficients Regardless of the choice of $\xi$, since $n,m,m',t$ are constants, the stabilizer weights are bounded. Conversely, since each physical qudit corresponds to a coordinate in $x^{(i)}$, each of them is involved in a constant number of cup products that depend only on $n,m,m',t$ again. The $Z$-stabilizers are undressed and the physical qudits are decomposed into a constant number of qubits because $q$ is a constant. As a result, the non-Abelian code is an LDPC code.

\begin{lemma}\label{lemma:coefficient}
	Let $\xi$ be a fixed nontrivial cycle in $C_t(X,\F^{\otimes t} \otimes \mathcal{G})$ and let $\mathbf{x} = (x^{(i)})$ with $i \in [t+1]$ be fixed logical representatives of the sheaf code blocks. For arbitrary vectors $\mathbf{u} = (u^{(i)}) \in C^0(X,\F) \times \cdots \times C^0(X,\F) \times C^0(X,\mathcal{G})$, we define the function 
	\begin{align}
		c_{\mathbf{x}}^\xi (\mathbf{u}) = (-1)^{ \tr_{\mathbb{F}_q/\mathbb{F}_2} (F_{\mathbf{x}}^\xi (\mathbf{u}))} \in \mathbb{C}
	\end{align}
	by
	\begin{align}
		\begin{aligned}
			F_{\mathbf{x}}^\xi(\mathbf{u}) = & \sum_{i \in [t+1]} \left\langle x^{(1)} \smile \cdots \smile x^{(i-1)} \smile u^{(i)} \smile  x^{(i+1)} \smile \cdots \smile x^{(t+1)}, \ \xi \right\rangle \\
			& +  \sum_{i < j \in [t+1]} \left\langle x^{(1)} \smile \cdots \smile u^{(i)} \smile \cdots \smile \delta u^{(j)}  \smile \cdots \smile x^{(t+1)}, \ \xi \right\rangle \\
			& + \cdots + \left\langle u^{(1)} \smile \delta u^{(2)} \smile \cdots \smile \delta u^{(t+1)}, \ \xi \right\rangle. 
		\end{aligned}
	\end{align}
	Then the following identities hold:
	\begin{enumerate}
		\item For any $w^{(i)} \in C^0(X,\F)$ with $1 \leq i \leq t$ or any $w^{(t+1)} \in C^0(X,\mathcal{G})$, 
		\begin{align}\label{eq:coefficient_1}
			\begin{aligned}
				& F_{\mathbf{x}}^\xi(u^{(1)},\ldots, u^{(i)} + w^{(i)}, \ldots, u^{(t+1)}) \\
				= & F_{\mathbf{x}}^\xi(\mathbf{u}) + \left\langle  (x^{(1)} + \delta u^{(1)} ) \smile \cdots \smile w^{(i)}  \smile \cdots \smile (x^{(t+1)} + \delta u^{(t+1)} ), \ \xi \right\rangle.
			\end{aligned}
		\end{align}
		
		\item For equivalent representatives $\mathbf{x} = (x^{(i)} ) \sim \mathbf{x}' = (x^{(i)} + \delta w^{(i)})$, let $\mathbf{w} = (w^{(i)} )$. Then
		\begin{align}\label{eq:coefficient_2}
			c_{\mathbf{x}}^\xi (\mathbf{u}) = (-1)^{\tr_{\mathbb{F}_q/\mathbb{F}_2} (F_{\mathbf{x}}^\xi (\mathbf{w}))} c_{\mathbf{x}'}^\xi (\mathbf{u} + \mathbf{w}).
		\end{align}
		Here $(-1)^{\tr_{\mathbb{F}_q/\mathbb{F}_2} (F_{\mathbf{x}}^\xi (\mathbf{w}))}$ is a global phase factor independent of $\mathbf{u}$.
	\end{enumerate}
\end{lemma}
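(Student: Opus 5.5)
The key step is to recognize $F^\xi_{\mathbf{x}}(\mathbf{u})$ as a truncation of a multilinear expression. Set $y^{(j)} := x^{(j)} + \delta u^{(j)}$ and consider
\begin{align}
	P(\mathbf{u}) := \sum_{i \in [t+1]} \left\langle x^{(1)} \smile \cdots \smile x^{(i-1)} \smile u^{(i)} \smile y^{(i+1)} \smile \cdots \smile y^{(t+1)}, \ \xi \right\rangle .
\end{align}
In the $i$-th summand we expand each $y^{(j)}$ for $j>i$ by multilinearity of the cup product. This produces exactly one term for each subset $S \subseteq \{i+1,\ldots,t+1\}$: the factors in $S$ are replaced by $\delta u^{(j)}$ and the rest keep $x^{(j)}$. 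These are precisely the terms of $F^\xi_{\mathbf{x}}$ whose smallest index carrying a $u$ is $i$, since $u$ appears undifferentiated only in the first slot where $\mathbf{u}$ enters, and differentiated afterwards. The first task is therefore to confirm, by matching the displayed pattern (single $u^{(i)}$ terms, pairs $u^{(i)},\delta u^{(j)}$ with $i<j$, and so on up to $u^{(1)}\smile\delta u^{(2)}\smile\cdots$), that $F^\xi_{\mathbf{x}} = P$.

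For identity \eqref{eq:coefficient_1}, replace $u^{(i)}$ by $u^{(i)}+w^{(i)}$ and compare $P$ term by term.
\begin{itemize}
	\item The summands with index $k>i$ change only through $x^{(i)} \to x^{(i)}$, which is unaffected, so they do not change at all.
	\item The $i$-th summand gains $\langle x^{(1)}\smile\cdots\smile w^{(i)}\smile y^{(i+1)}\smile\cdots,\xi\rangle$.
	\item Each summand with index $k<i$ gains $\langle \cdots u^{(k)}\smile\cdots\smile\delta w^{(i)}\smile\cdots,\xi\rangle$, because $y^{(i)}$ becomes $y^{(i)}+\delta w^{(i)}$.
\end{itemize}
The target correction term has $y^{(k)}$ in the slots before $i$. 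Expanding $y^{(k)}=x^{(k)}+\delta u^{(k)}$ there, it therefore differs from the $i$-th-summand gain by terms of the form $\langle\cdots\delta u^{(k)}\cdots w^{(i)}\cdots,\xi\rangle$.

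To move the derivative from $\delta u^{(k)}$ onto $w^{(i)}$, we use the Leibniz rule (Proposition~\ref{prop:cup_cap}) together with Stokes' theorem. Since $\partial\xi=0$, we have $\langle\delta(\cdots),\xi\rangle=0$. The $x$'s and $\delta u$'s are all $\delta$-closed, because the $x^{(j)}$ are cocycles and $\delta^2=0$. Hence, in characteristic $2$,
\begin{align}
	\langle a\smile\delta u\smile b\smile w\smile c,\xi\rangle = \langle a\smile u\smile b\smile \delta w\smile c,\xi\rangle \quad \text{for closed } a,b,c .
\end{align}
Iterating this identity matches the leftover terms exactly with the gains of the summands $k<i$. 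The bookkeeping should be organized as an induction on the number of $\delta u$ factors placed before slot $i$. I expect this step to be the main obstacle. The paper glosses over the exact index pattern of $F$, so care is needed to check that every mixed term appears exactly once and that there are no sign issues; the latter are absent here because the characteristic is $2$.

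For identity \eqref{eq:coefficient_2}, note that $x'^{(j)}+\delta u^{(j)} = x^{(j)}+\delta(u^{(j)}+w^{(j)})$. Using the same Leibniz and Stokes manipulation, $F^\xi_{\mathbf{x}'}$ can be rewritten with the shifted variables, which gives
\begin{align}
	F^\xi_{\mathbf{x}'}(\mathbf{u}+\mathbf{w}) = F^\xi_{\mathbf{x}}(\mathbf{u}+\mathbf{w}) - F^\xi_{\mathbf{x}}(\mathbf{w}) + (\text{corrections}).
\end{align}
A cleaner route is to apply part 1 repeatedly, one coordinate at a time. Applying it to $\mathbf{x}$ with shifts $w^{(i)}$ gives $F_{\mathbf{x}}(\mathbf{u}+\mathbf{w})-F_{\mathbf{x}}(\mathbf{u})$ as a sum of terms $\langle\cdots w^{(i)}\cdots\rangle$ involving the $y$'s. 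Comparing this with $F_{\mathbf{x}'}(\mathbf{u}) - F_{\mathbf{x}}(\mathbf{u}+\mathbf{w})$, the two are related by the Stokes moves, and only the $\mathbf{u}$-independent term $F^\xi_{\mathbf{x}}(\mathbf{w})$ survives. Taking $(-1)^{\tr}$ of both sides then yields \eqref{eq:coefficient_2}, and the leftover factor is manifestly independent of $\mathbf{u}$.
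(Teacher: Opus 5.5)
Your part~1 is correct and is the paper's argument, done in more generality. The paper treats only $t=2$ with a shift of $u_C$: it collects the four terms of $F^\xi_{\mathbf{x}}$ containing $u_C$ and moves derivatives using \eqref{eq:swap_delta}. Your rewriting $F^\xi_{\mathbf{x}}=P$ matches the lemma's displayed pattern exactly; the paper's $t=2$ term $\delta u_A\smile x_B\smile u_C$ differs from your $u_A\smile x_B\smile\delta u_C$ by one integration by parts. The bookkeeping you flag as the main obstacle is a single telescoping step. In characteristic $2$,
\[
y^{(1)}\smile\cdots\smile y^{(i-1)}+x^{(1)}\smile\cdots\smile x^{(i-1)}=\sum_{k<i}x^{(1)}\smile\cdots\smile x^{(k-1)}\smile\delta u^{(k)}\smile y^{(k+1)}\smile\cdots\smile y^{(i-1)}.
\]
One integration by parts per $k$, moving $\delta$ from $u^{(k)}$ onto $w^{(i)}$, then turns the $k$-th term into exactly the gain of your summand $k$. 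No induction is needed.

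Part~2, however, is not proved. In your first display the left side should be $F^\xi_{\mathbf{x}'}(\mathbf{u})$; with that left side the correction is identically zero. As written, with $F^\xi_{\mathbf{x}'}(\mathbf{u}+\mathbf{w})$, the unspecified correction equals $F^\xi_{\mathbf{x}}(\mathbf{u})+F^\xi_{\mathbf{x}}(\mathbf{u}+\mathbf{w})$, which depends on $\mathbf{u}$. Your second route compares $F^\xi_{\mathbf{x}}(\mathbf{u}+\mathbf{w})-F^\xi_{\mathbf{x}}(\mathbf{u})$, which depends on $\mathbf{u}$, with $F^\xi_{\mathbf{x}'}(\mathbf{u})-F^\xi_{\mathbf{x}}(\mathbf{u}+\mathbf{w})$. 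It then asserts without computation that only $F^\xi_{\mathbf{x}}(\mathbf{w})$ survives, and these are not the right quantities to compare.

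The missing observation is that the increment in \eqref{eq:coefficient_1} depends on $(\mathbf{x},\mathbf{u})$ only through $\mathbf{x}+\delta\mathbf{u}$. Compare two coordinate-by-coordinate walks:
\begin{itemize}
\item from $\mathbf{w}$ to $\mathbf{w}+\mathbf{u}$ with base point $\mathbf{x}$;
\item from $\mathbf{0}$ to $\mathbf{u}$ with base point $\mathbf{x}'=\mathbf{x}+\delta\mathbf{w}$, which still consists of cocycles, so part~1 applies.
\end{itemize}
The configurations $\mathbf{x}+\delta\mathbf{w}+\delta(\cdot)$ and $\mathbf{x}'+\delta(\cdot)$ coincide at every step, so the increments coincide. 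Since $F^\xi_{\mathbf{x}'}(\mathbf{0})=0$, this gives
\[
F^\xi_{\mathbf{x}}(\mathbf{u}+\mathbf{w})=F^\xi_{\mathbf{x}}(\mathbf{w})+F^\xi_{\mathbf{x}'}(\mathbf{u}).
\]
Now replace $\mathbf{u}$ by $\mathbf{u}+\mathbf{w}$ and use $2\mathbf{w}=0$ to get $F^\xi_{\mathbf{x}'}(\mathbf{u}+\mathbf{w})=F^\xi_{\mathbf{x}}(\mathbf{u})+F^\xi_{\mathbf{x}}(\mathbf{w})$. This is exactly what the paper asserts after fully expanding $F^\xi_{\mathbf{x}'}(\mathbf{u}+\mathbf{w})$. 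Applying $(-1)^{\tr}$ and using $c_{\mathbf{x}}^\xi(\mathbf{w})^2=1$ yields \eqref{eq:coefficient_2}. You never make this substitution explicit, and it is where characteristic $2$ enters beyond the absence of signs: in general the natural shift would be $\mathbf{u}-\mathbf{w}$.
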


Again, we abbreviate $\delta^0 u^{(i)}$ by $\delta u^{(i)}$. As a reminder, Lemma \ref{lemma:coefficient} and Theorem \ref{thm:code_space_sheaf} in the following hold for general choices of sheaves as long as $\xi$ exists. One can set $\F = \mathcal{G}$, ask them to be trivial or be generated by the two-way product expanding local codes. A special choice of sheaves in the 2-dimensional case is made in Section \ref{sec:m_sheaf}. Different choices of sheaves influence the code dimension and distance as will be revealed in the following results. 

\begin{proof}[Proof of Lemma \ref{lemma:coefficient}]
	We consider three pieces of $2$-cochain complexes for simplicity:
	\begin{align}
		\begin{aligned}
			& F_{\mathbf{x}}^\xi(\mathbf{u}) = F_{\mathbf{x}}^\xi(u_A,u_B,u_C) \\
			= & \left\langle u_A \smile x_B \smile x_C, \ \xi \right\rangle + \left\langle x_A \smile u_B \smile x_C, \ \xi \right\rangle + \left\langle x_A \smile x_B \smile u_C, \ \xi \right\rangle \\
			& + \left\langle u_A \smile \delta u_B \smile x_C, \ \xi \right\rangle 
			+ \left\langle x_A \smile u_B \smile \delta u_C, \ \xi \right\rangle
			+ \left\langle \delta u_A \smile x_B \smile u_C, \ \xi \right\rangle \\
			& + \left\langle u_A \smile \delta u_B \smile \delta u_C, \ \xi \right\rangle 
		\end{aligned}
	\end{align}
	Or equivalently, as integrals:
	\begin{align}
		\begin{aligned}
			& \int_{\xi} u_A \smile x_B \smile x_C + \int_{\xi} x_A \smile u_B \smile x_C + \int_{\xi} x_A \smile x_B \smile u_C \\
			& + \int_{\xi} u_A \smile \delta u_B \smile x_C + \int_{\xi} x_A \smile u_B \smile \delta u_C +  \int_{\xi} \delta u_A \smile x_B \smile u_C \\
			& + \int_{\xi} u_A \smile \delta u_B \smile \delta u_C. 
		\end{aligned}
	\end{align}
	By the Leibniz rules and Stokes' theorem in Proposition \ref{prop:cup_cap}, we utilize a discrete version of integration by parts to obtain
	\begin{align}\label{eq:swap_delta}
		\int_{\xi} u_A \smile \delta u_B \smile \delta u_C = \int_{\xi} \delta u_A \smile  u_B \smile \delta u_C =	\int_{\xi} \delta u_A \smile \delta u_B \smile u_C.
	\end{align}
	Let $w_C \in C^0(X,\mathcal{G})$. Since the cup product is multi-linear, by Eq.~\eqref{eq:swap_delta}, we have
	\begin{align}
		\begin{aligned}
			F_{\mathbf{x}}^\xi (u_A,u_B,u_C + w_C) = &
			\int_{\xi} x_A \smile x_B \smile w_C
			+ \int_{\xi} x_A \smile u_B \smile \delta w_C + \int_{\xi} \delta u_A \smile x_B \smile w_C \\
			& + \int_{\xi} u_A \smile \delta u_B \smile \delta w_C + F_{\mathbf{x}}^\xi(u_A,u_B,u_C) \\
			= & \int_{\xi} (x_A + \delta u_A) \smile (x_B + \delta u_B) \smile w_C + F_{\mathbf{x}}^\xi(u_A,u_B,u_C).
		\end{aligned}
	\end{align}
	Similar arguments hold for $w_A, w_B$.
	
	Now, let $(x_A, x_B, x_C) \sim (x_A + \delta w_A, x_B + \delta w_B, x_C + \delta w_C)$ for some $(w_A,w_B, w_C) \in C^0(X,\F) \times C^0(X,\F) \times C^0(X,\mathcal{G})$. By Eq.~\eqref{eq:coefficient_1},
	\begin{align}
		& F_{\mathbf{x}'}^\xi(u_A + w_A, u_B + w_B, u_C + w_C) \notag \\
		= & \int_{\xi} (u_A + w_A) \smile (x_B + \delta w_B) \smile (x_C + \delta w_C) + \int_{\xi} (x_A + \delta w_A) \smile (u_B + w_B) \smile (x_C + \delta w_C) \notag \\
		& + \int_{\xi} (x_A + \delta w_A) \smile (x_B + \delta w_B) \smile (u_C + w_C) \notag \\
		& + \int_{\xi} (u_A + w_A) \smile \delta (u_B + w_B) \smile (x_C + \delta w_C) + \int_{\xi} (x_A + \delta w_A) \smile (u_B + w_B) \smile \delta (u_C + w_C) \notag \\
		& + \int_{\xi} \delta (u_A + w_A) \smile (x_B + \delta w_B) \smile (u_C + w_C) 
		+ \int_{\xi} (u_A + w_A) \smile \delta (u_B + w_B) \smile \delta (u_C + w_C). \notag 
	\end{align}
	The expansion equals 
	\begin{align}
		\begin{aligned}
			& \int_{\xi} w_A \smile x_B \smile x_C + \int_{\xi} x_A \smile w_B \smile x_C + \int_{\xi} x_A \smile x_B \smile w_C \\
			& + \int_{\xi} w_A \smile \delta w_B \smile x_C + \int_{\xi} x_A \smile w_B \smile \delta w_C +  \int_{\xi} \delta w_A \smile x_B \smile w_C \\
			& + \int_{\xi} w_A \smile \delta w_B \smile \delta w_C +  F_{\mathbf{x}}^\xi(u_A,u_B,u_C) 
			= F_{\mathbf{x}}^\xi(w_A,w_B,w_C) + F_{\mathbf{x}}^\xi(u_A,u_B,u_C).
		\end{aligned}
	\end{align}
	Since $\tr_{\mathbb{F}_q/\mathbb{F}_p}$ is linear, 
	\begin{align}
		c_{\mathbf{x}}^\xi (\mathbf{u}) =  (-1)^{\tr_{\mathbb{F}_q/\mathbb{F}_2} (F_{\mathbf{x}}^\xi (\mathbf{w}))} c_{\mathbf{x}'}^\xi (\mathbf{u} + \mathbf{w}).
	\end{align}
	Here the phase depends on $\mathbf{x}$ and $\mathbf{w}$, but is independent of $\mathbf{u}$.
\end{proof}

\begin{definition}\label{def:gauged_reps}
	Let $\boldsymbol{\eta} = (\eta^{(i)}) \in C^0(X,\F) \times \cdots \times C^0(X,\F) \times C^0(X,\mathcal{G})$ be 0-cocycles. Let $\mathbf{x} = (x^{(i)})$ with $i \in [t+1]$ be logical representatives of the sheaf code blocks. Then $\mathbf{x}$ is said to be \emph{gauged} if for any $i$ and for any $\boldsymbol{\eta}$,
	\begin{align}\label{eq:gauged_reps}
		\left\langle x^{(1)} \smile \cdots \smile x^{(i-1)} \smile \eta^{(i)} \smile  x^{(i+1)} \smile \cdots \smile x^{(t+1)}, \ \xi \right\rangle = 0.
	\end{align}
\end{definition}

In Section \ref{sec:k_HGP}, for HGP codes based on connected graphs with a constant sheaf of scalar coefficients, $\ker \delta^0$ is $1$-dimensional and spanned by $\eta$ that supports on every single vertex of the complex. By Eq.~\eqref{eq:cup_01}, $\eta \smile x = x \smile \eta = x$ for any $x \in C^1(X,\mathbb{F}_q)$. Explicitly, for $3$ copies of a 2-cochain complex, Eq.~\eqref{eq:gauged_reps} degenerates to Eq.~\eqref{eq:gauged_reps_HGP}:
\begin{align}
	\int_{\xi} x_B \smile x_C, \quad \int_{\xi} x_A \smile x_B, \quad \int_{\xi} x_A \smile x_C = 0.
\end{align}

As before, a \emph{tuple of gauged cohomological classes} $[\mathbf{x}] \in H^1(X,\F) \times \cdots \times H^1(X,\F) \times H^1(X,\mathcal{G})$ is well-defined. Let
\begin{align}
	L_G \coloneq \Big\{ [\mathbf{x}] \in H^1(X,\F) \times \cdots \times H^1(X,\F) \times H^1(X,\mathcal{G}): [\mathbf{x}] \text{ is gauged} \Big\}.
\end{align}
For any tuple $\mathbf{x}$ of explicit gauged representatives, let $\mathbf{u} = (u^{(i)})$ be an arbitrary vector from $C^0(X,\F) \times \cdots \times C^0(X,\F) \times C^0(X,\mathcal{G})$ and let
\begin{align}\label{eq:code_state}
	\begin{aligned}
		\ket{\Psi_{\mathbf{x}}^\xi} \coloneq & \mathcal{N} \sum_{\mathbf{u} \in C^0(X,\F) \times \cdots \times C^0(X,\F) \times C^0(X,\mathcal{G})} c_{\mathbf{x}}^\xi (\mathbf{u}) 
		\Big\vert \cdots, x^{(i)} + \delta u^{(i)} , \cdots \Big\rangle
	\end{aligned}
\end{align}
where $\mathcal{N}$ is the normalization factor and $c_{\mathbf{x}}^\xi (\mathbf{u}) = (-1)^{ \tr_{\mathbb{F}_q/\mathbb{F}_2} (F_{\mathbf{x}}^\xi (\mathbf{u}))}$. The following theorem holds.

\begin{theorem}\label{thm:code_space_sheaf}
	Quantum states $\ket{\Psi_{\mathbf{x}}^\xi}, \ket{\Psi_{\mathbf{x}'}^\xi}$ induced by equivalent gauged representatives $\mathbf{x} \sim \mathbf{x}'$ only differ by a phase factor. Moreover, up to phase factors, tuples of gauged cohomological classes induce an orthonormal basis $\{ \ket{\Psi_{\mathbf{x}}^\xi} \}$ of the non-Abelian code defined by the above gauged $X$-stabilizers and undressed $Z$-stabilizers. The code dimension equals the size of $L_G$.
\end{theorem}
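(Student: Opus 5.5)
The proof splits into four steps: identify the common $+1$ eigenspace of all stabilizers, show each $\ket{\Psi^\xi_{\mathbf{x}}}$ lies in it, show these states are orthonormal across distinct gauged classes, and show they span the eigenspace.

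\emph{Step 1: reduction to the coboundary cosets.} The undressed $Z$-stabilizers $Z^{(i)}_{\beta\partial p}$ force every code state to be supported on computational basis states $\ket{x^{(1)},\ldots,x^{(t+1)}}$ with each $x^{(i)}\in\ker\delta^1$. The dressed operators $\mathcal{A}^{(i)}_{\alpha v}$ act on such a basis state by the shift $x^{(i)}\mapsto x^{(i)}+\alpha\delta v$ together with a sign. So any joint $+1$ eigenvector decomposes along the orbits of the group generated by these shifts, namely the cosets $\mathbf{x}+\delta\mathbf{u}$ for $\mathbf{u}\in C^0(X,\F)^t\times C^0(X,\mathcal{G})$. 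Within one coset, a joint $+1$ eigenvector must have amplitudes $a(\mathbf{u})$, which we may define as functions of $\mathbf{u}$ and then check for consistency. Applying $\mathcal{A}^{(i)}_{\alpha v}$ with $w^{(i)}=\alpha v$ gives the requirement
\begin{align*}
a(\mathbf{u}+w^{(i)}e_i)=(-1)^{\tr(\langle (x^{(1)}+\delta u^{(1)})\smile\cdots\smile w^{(i)}\smile\cdots,\ \xi\rangle)}\,a(\mathbf{u}).
\end{align*}
Here one must be careful that the dressing phase is evaluated on the other blocks' current configuration, $x^{(j)}+\delta u^{(j)}$. Eq.~\eqref{eq:coefficient_1} of Lemma~\ref{lemma:coefficient} says exactly that $a=c^\xi_{\mathbf{x}}$ satisfies this recursion. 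Because the standard basis vectors $v$ and all $\alpha$ generate $C^0$ additively, and the trace is additive, the recursion determines $a$ up to an overall scalar.

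\emph{Step 2: consistency and the gauge condition.} The recursion is consistent only if $\mathbf{u}\mapsto\mathbf{x}+\delta\mathbf{u}$ is injective. When it is not, take $\mathbf{u}$ and $\mathbf{u}+\boldsymbol{\eta}$ with $\boldsymbol{\eta}\in\ker\delta^0$; these label the same basis state, so $c^\xi_{\mathbf{x}}$ must agree on them. By Eq.~\eqref{eq:coefficient_1}, and since $\delta\eta^{(i)}=0$, the discrepancy is $(-1)^{\tr\langle\cdots\smile\eta^{(i)}\smile\cdots,\xi\rangle}$, with the $x^{(j)}+\delta u^{(j)}$ entering. Varying $\eta^{(i)}$ by arbitrary $\mathbb{F}_q$-scalar multiples uses the nondegeneracy of the trace pairing, so the discrepancy vanishes for all $\boldsymbol{\eta}$ exactly when Eq.~\eqref{eq:gauged_reps} holds. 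One further check is needed: the condition for $x^{(j)}+\delta u^{(j)}$ reduces to the one for $x^{(j)}$. This follows from the Leibniz rule: the extra $\delta u^{(j)}$ terms are coboundaries cupped with cocycles and integrated against a cycle, so they vanish by Stokes' theorem. Consequently, a non-gauged coset supports no nonzero code state. For a gauged coset, the unnormalized sum over all $\mathbf{u}$ equals a constant multiple ($|\ker\delta^0|$) of the correctly normalized coset state, so $\ket{\Psi^\xi_{\mathbf{x}}}$ is a joint $+1$ eigenvector. It is nonzero because the consistency just established prevents cancellation.

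\emph{Step 3: phase independence and orthonormality.} That equivalent representatives give states differing only by a phase is Eq.~\eqref{eq:coefficient_2}: reindex $\mathbf{u}\to\mathbf{u}+\mathbf{w}$. For orthonormality, states built on inequivalent classes have disjoint computational supports, because their cosets $\mathbf{x}+\operatorname{Im}\delta^0$ differ. Together with Steps 1--2, the joint eigenspace is the direct sum of one-dimensional pieces, one per class in $L_G$, which gives the stated dimension.

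The main obstacle is Step 2: handling the kernel $\ker\delta^0$ of the sheaf, where the relevant cocycles $\eta^{(i)}$ need not be constants. Two things must be checked carefully there: the Leibniz/Stokes reduction, and that the sign discrepancy is linear in $\eta^{(i)}$ in a way that makes it trivial for all $\boldsymbol\eta$ exactly under Eq.~\eqref{eq:gauged_reps}.
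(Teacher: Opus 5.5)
Your proposal is correct and follows essentially the same route as the paper. Both arguments use Eq.~\eqref{eq:coefficient_1} of Lemma~\ref{lemma:coefficient} to show that the $\ket{\Psi^\xi_{\mathbf{x}}}$ are stabilized and that the dressed $X$-stabilizers force the amplitudes on each coboundary coset to equal $c^\xi_{\mathbf{x}}$ up to a scalar; both reduce compatibility under $\ker\delta^0$ shifts to Eq.~\eqref{eq:gauged_reps} via the cocycle property, the Leibniz rule and Stokes; and both obtain phase-independence from Eq.~\eqref{eq:coefficient_2} and orthonormality from disjoint supports. Your explicit remark that a non-gauged coset supports no nonzero code state, which the paper leaves implicit, is a welcome addition; note only the slip that the recursion is automatically consistent \emph{if} (not ``only if'') $\mathbf{u}\mapsto\mathbf{x}+\delta\mathbf{u}$ is injective.
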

\begin{proof}
	We still consider three pieces of $2$-cochain complexes. First, we have to show that Eq.~\eqref{eq:code_state} is well-defined. If $\eta_A, \eta_B, \eta_C$ are $0$-cocycles, i.e., $\delta \eta_i = 0$, then
	\begin{align}
		\ket{x_A + \delta u_A, x_B + \delta u_B, x_C + \delta u_C} = \ket{x_A + \delta u_A + \delta \eta_A, x_B + \delta u_B + \delta \eta_B, x_C + \delta u_C + \delta \eta_C},
	\end{align}
	To avoid ambiguity, the coefficients have to satisfy
	\begin{align}\label{eq:code_state_compatible}
		c_{\mathbf{x}}^\xi (\mathbf{u}) = c_{\mathbf{x}}^\xi (\mathbf{u} + \boldsymbol{\eta})  
	\end{align}
	for $0$-cocycles. By Lemma \ref{lemma:coefficient},
	\begin{align}\label{eq:eta_expansion}
		& F_{\mathbf{x}}^\xi (u_A + \eta_A, u_B + \eta_B, u_C + \eta_C) \notag \\
		= &	F_{\mathbf{x}}^\xi (u_A + \eta_A, u_B + \eta_B, u_C) + \int_\xi  (x_A + \delta (u_A+\eta_A) ) \smile (x_B + \delta (u_B+\eta_B) ) \smile \eta_C \notag \\
		= & F_{\mathbf{x}}^\xi (u_A + \eta_A, u_B, u_C) 
		+ \int_\xi (x_A + \delta (u_A+\eta_A) ) \smile \eta_B \smile (x_C + \delta u_C ) \notag \\
		& + \int_\xi  (x_A + \delta (u_A+\eta_A) ) \smile (x_B + \delta (u_B+\eta_B) ) \smile \eta_C \\
		= & F_{\mathbf{x}}^\xi (u_A, u_B, u_C) 			
		+ \int_\xi \eta_A \smile (x_B + \delta u_B) \smile (x_C + \delta u_C ) 
		+ \int_\xi (x_A + \delta (u_A+\eta_A) ) \smile \eta_B \smile (x_C + \delta u_C ) \notag \\
		& + \int_\xi  (x_A + \delta (u_A+\eta_A) ) \smile (x_B + \delta (u_B+\eta_B) ) \smile \eta_C. \notag
	\end{align}
	Provided that $\boldsymbol{\eta}$ consists of 0-cocycles, it can be further simplified:
	\begin{align}
		\begin{aligned}
			& F_{\mathbf{x}}^\xi (u_A, u_B, u_C) 			
			+ \int_\xi \eta_A \smile (x_B + \delta u_B) \smile (x_C + \delta u_C )
			+ \int_\xi (x_A + \delta u_A ) \smile \eta_B \smile (x_C + \delta u_C ) \\
			& + \int_\xi  (x_A + \delta u_A ) \smile (x_B + \delta u_B ) \smile \eta_C \\
			= \ & F_{\mathbf{x}}^\xi (u_A, u_B, u_C) + \int_\xi \eta_A \smile x_B \smile x_C + \int_\xi x_A  \smile \eta_B \smile x_C + \int_\xi x_A \smile x_B \smile \eta_C \\
		\end{aligned}
	\end{align}
	Since $\eta_A, \eta_B, \eta_C$ can be taken independently and since the trace map is nondegenerate, Eq.~\eqref{eq:code_state_compatible} is equivalent to
	\begin{align}
		\int_\xi \eta_A \smile x_B \smile x_C, \quad \int_\xi x_A  \smile \eta_B \smile x_C, \quad \int_\xi x_A \smile x_B \smile \eta_C = 0
	\end{align}
	Its generalization is exactly Eq.~\eqref{eq:gauged_reps}.
	
	Now, we justify that the states $\ket{\Psi_{\mathbf{x}}^\xi}$ are indeed common $+1$ eigenstates of the stabilizers. Since $\mathbf{x}$ consists of 1-cocycles, $\ket{\Psi_{\mathbf{x}}^\xi}$ must be invariant under the action of $Z$ stabilizers. For any dressed $X$-stabilizer,
	\begin{align}
		\begin{aligned}
			& \mathcal{A}_{\alpha v}^A \sum_{u_A,u_B,u_C} c_{\mathbf{x}}^\xi \ket{x_A + \delta u_A, x_B  + \delta u_B, x_C + \delta u_C} \\
			= & \sum_{u_A,u_B,u_C} c_{\mathbf{x}}^\xi (-1)^{ \tr_{\mathbb{F}_q/\mathbb{F}_2}( \alpha \int_\xi v \smile (x_B + \delta u_B) \smile (x_C +  \delta u_C )) }
			\ket{x_A + \delta u_A + \alpha \delta v, x_B + \delta u_B, x_C + \delta u_C}.
		\end{aligned}
	\end{align}
	On the other hand, the original phase of $\ket{x_A + \delta u_A + \alpha \delta v, x_B + \delta u_B, x_C + \delta u_C}$ is $F_{\mathbf{x}}^\xi (u_A + \alpha v,u_B,u_C)$. By Lemma \ref{lemma:coefficient}, 
	\begin{align}
		\begin{aligned}
			F_{\mathbf{x}}^\xi (u_A + \alpha v,u_B,u_C) = F_{\mathbf{x}}^\xi (u_A ,u_B,u_C) + \alpha \int_\xi v \smile (x_B + \delta u_B) \smile (x_C +  \delta u_C ).
		\end{aligned}
	\end{align}
	Together with the fact that $\tr_{\mathbb{F}_q/\mathbb{F}_p}$ is linear, we see that the state $\ket{\Psi_{\mathbf{x}}^\xi}$ is invariant under the action of $\mathcal{A}_{\alpha v}^A$. Similar arguments can justify the invariance under $\mathcal{A}_{\alpha v}^B$ and $\mathcal{A}_{\alpha v}^C$. 
	
	Conversely, suppose $\ket{\Psi}$ is a state in the non-Abelian code space. To be invariant under $Z$-stabilizers, $\ket{\Psi}$ must be a superposition of basis states given by 1-cocycles. Suppose $\ket{x_A',x_B',x_C'}$ is one term in the expansion. Let $u_A = \sum_v \alpha_v v \in C^0(X,\F)$. Then we define
	\begin{align}
		\mathcal{A}_{u_A}^A \coloneq \prod_v \mathcal{A}_{\alpha_v v}^A = X_{\delta u_A}^A \sum_{x_B \in C^1(X,\F), x_C \in C^1(X,\mathcal{G})}
		(-1)^{ \tr_{\mathbb{F}_q/\mathbb{F}_2}( \int_\xi u_A \smile x_B \smile x_C) } \ket{x_B,x_C} \bra{x_B,x_C}.
	\end{align}
	Acting with $\mathcal{A}_{u_A}^A$, we obtain
	\begin{align}
		& \mathcal{A}_{u_A}^A \ket{x_A', x_B', x_C'} = (-1)^{ \tr_{\mathbb{F}_q/\mathbb{F}_2}( \int_\xi u_A \smile x_B' \smile x_C') } \ket{x_A' + \delta u_A, x_B', x_C'}.
	\end{align}
	Similarly, 
	\begin{align}
		& \mathcal{A}_{u_B}^B \ket{x_A' + \delta u_A, x_B', x_C'} = (-1)^{ \tr_{\mathbb{F}_q/\mathbb{F}_2}( \int_\xi (x_A' + \delta u_A) \smile u_B \smile x_C') } \ket{x_A' + \delta u_A, x_B' + \delta u_B, x_C'}, \\
		& \mathcal{A}_{u_C}^C \ket{x_A' + \delta u_A, x_B' + \delta u_B, x_C'} = (-1)^{ \tr_{\mathbb{F}_q/\mathbb{F}_2}( \int_\xi (x_A' + \delta u_A) \smile (x_B' + \delta u_B)  \smile u_C) } \ket{x_A' + \delta u_A, x_B' + \delta u_B, x_C' + \delta u_C}. \notag
	\end{align}
	Without loss of generality, suppose the coefficient of $\ket{x_A',x_B',x_C'}$ in $\ket{\Psi}$ is $1$; then the invariance forces $\ket{x_A' + \delta u_A, x_B' + \delta u_B, x_C' + \delta u_C}$ to be in the expansion with coefficient equal to $c_{\mathbf{x}'}^{\xi}(\mathbf{u})$. Therefore, $\ket{\Psi}$ must lie in the span of states as in Eq.~\eqref{eq:code_state}.
	
	Now, suppose $(x_A, x_B, x_C) \sim (x_A + \delta w_A, x_B + \delta w_B, x_C + \delta w_C)$ for some $(w_A,w_B, w_C) \in C^0(X,\F) \times C^0(X,\F) \times C^0(X,\mathcal{G})$. We have 
	\begin{align}
		& \ket{\Psi_{\mathbf{x}}^\xi} = \sum_{u_A,u_B,u_C} c_{\mathbf{x}}^\xi \ket{x_A + \delta u_A, x_B  + \delta u_B, x_C + \delta u_C}, \\
		& \ket{\Psi_{\mathbf{x}'}^\xi} = \sum_{u_A',u_B',u_C'} c_{\mathbf{x}'}^\xi \ket{x_A' + \delta u_A', x_B'  + \delta u_B', x_C' + \delta u_C'}. 
	\end{align}
	To compare the states, we have to compare coefficients with respect to
	\begin{align}
		\ket{x_A + \delta u_A, x_B  + \delta u_B, x_C + \delta u_C} = \ket{x_A' + \delta u_A', x_B'  + \delta u_B', x_C' + \delta u_C'} \implies \delta u_i = \delta u_i' + \delta w_i
	\end{align}
	for $i = A,B,C$. By Lemma \ref{lemma:coefficient},
	\begin{align}
		c_{\mathbf{x}}^\xi (\mathbf{u}) = (-1)^{\tr_{\mathbb{F}_q/\mathbb{F}_2} (F_{\mathbf{x}}^\xi (\mathbf{w}))} c_{\mathbf{x}'}^\xi (\mathbf{u}') 
		\implies \ket{\Psi_{\mathbf{x}}^\xi} = (-1)^{\tr_{\mathbb{F}_q/\mathbb{F}_2} (F_{\mathbf{x}}^\xi (\mathbf{w}))} \ket{\Psi_{\mathbf{x}'}^\xi},
	\end{align}
	so the difference is a global phase. Together with the previous expansion of $\ket{\Psi}$, we demonstrate that all non-Abelian gauged states are spanned by the states in Eq.~\eqref{eq:code_state} for any explicit representatives of tuples in $L_G$. Since inequivalent representatives must admit distinct bit strings, they define an orthonormal basis for the non-Abelian code.
\end{proof}

To estimate the non-Abelian code dimension, we need to count the number of inequivalent cohomological representatives that satisfy \eqref{eq:gauged_reps}. To achieve a large code space, we need more trivial cup products. Compared with HGP codes in Section \ref{sec:k_HGP}, sheaf codes do not admit the canonical logical representatives through simple tensor products. Nevertheless, when the cubical complex $X$ and the underlying base graph $G_0$ are lifted via the cyclic group $\HH = C_l$, i.e., $\gamma_{(u,v)} \in C_l$ in \eqref{eq:1D_generator_action}, estimating the dimension can be achieved via the \emph{polarized logical representatives} proposed in \cite{LSWLL2026Theory}. As a brief introduction, for a 2-dimensional $X$, an associated sheaf code admits the following types of logical representatives polarized in two directions:
\begin{align}\label{eq:sheaf_basis}
	[v_1;a_1] \otimes \sum_{h,v_2,b_2} [(h,v_2);b_2], \qquad \sum_{h,v_1,b_1} [(h,v_1);b_1] \otimes [v_2;a_2],
\end{align}
where $[v_i;a_i]$ are 1-cubes while $[v_j;b_j]$ are 0-cubes as defined in Section \ref{sec:cubical}. We hide the local coefficient vectors for conciseness and $h \in C_l$ denotes a group element. Analogously to the HGP case, $[v_i;a_i]$ are one-hot vectors while the sums, e.g., $\sum_{h,v_2,b_2} [(h,v_2);b_2]$, are taken from the kernel of the 1-dimensional sheaved coboundary operator. The formal explanation of Eq.~\eqref{eq:sheaf_basis} uses the \emph{K\"{u}nneth formula} over semisimple modules \cite{Hatcher2015AT,Gallier2022}, but we do not go into the algebraic details. 

For our purpose, we simply note that the logical representatives are polarized in different directions and the cup products among logical representatives with the same polarization always vanish by Eq.~\eqref{eq:1-cube_cup}. This is an analogue of the HGP case in the general sheaf setting. In particular, when the local codes' ranks are taken as \eqref{eq:local_rank}, for the last code block in \eqref{eq:t_sheaf_codes}, there are $\Theta(N)$ logical representatives polarized in one direction and there are $\Omega(1)$ polarized logical representatives in the same direction for the first $t$ code blocks \cite{LSWLL2026Theory}. Their cup products equal zero and hence they are gauged. This proves the following corollary.

\begin{corollary}\label{coro:sheaf_dim}
	Take $t+1$ sheaf code family as \eqref{eq:t_sheaf_codes}. When the ranks of the local codes satisfy \eqref{eq:local_rank}, there is a constant $c \in (0,1)$ such that total number of gauged cohomological classes, and hence the gauged code dimension, is lower bounded by $q^{cN}$ as the system size $N$ increases. 
\end{corollary}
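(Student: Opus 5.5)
The plan is to exhibit an explicit subset of $L_G$ of size $q^{cN}$, built from polarized logical representatives, and then invoke Theorem~\ref{thm:code_space_sheaf}, which identifies the code dimension with $|L_G|$. I fix one polarization direction, say the first. For the last block, built from the sheaf $\mathcal{G}$ with local ranks as in \eqref{eq:local_rank}, I take the $\Theta(N)$ linearly independent classes in $H^1(X,\mathcal{G})$ represented by polarized cocycles of the form $[v_1;a_1]\otimes\sum[(h,v_2,\ldots);b_2,\ldots]$ (the first type in \eqref{eq:sheaf_basis}); call their span $W_{t+1}$. For each of the first $t$ blocks I take the $\Omega(1)$ classes polarized in the same direction; call their spans $W_1,\ldots,W_t$. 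By \cite{LSWLL2026Theory} these are genuine cocycles, and their classes are linearly independent in cohomology.

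Next I check that every tuple $\mathbf{x}\in W_1\times\cdots\times W_{t+1}$, extended by linearity, satisfies \eqref{eq:gauged_reps} for every choice of $0$-cocycles $\boldsymbol\eta$. Each summand of the pairing is multilinear, so it suffices to treat basis representatives. Every $1$-cochain appearing there is supported on $1$-cubes whose edge direction is the first coordinate, i.e.\ of the form $[g;a_1,b_2,\ldots,b_t]$. A cup product of two such $1$-cubes is zero: by \eqref{eq:1-cube_cup} and its higher-dimensional analogue, a nonzero cup product of $1$-cubes must use distinct edge directions to produce a $2$-cube. Inserting the $0$-cochain $\eta^{(i)}$ between or beside them only restricts supports through \eqref{eq:cup_00}--\eqref{eq:cup_01} and cannot change edge directions. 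Associativity (Proposition~\ref{prop:cup_cap}) then shows that $x^{(1)}\smile\cdots\smile\eta^{(i)}\smile\cdots\smile x^{(t+1)}$ contains a factor $x^{(j)}\smile x^{(k)}$, up to inserted $0$-cochains, with $j\ne k$ both polarized in direction $1$. This product vanishes identically as a cochain, so the pairing with $\xi$ is $0$. Note that $t\ge 2$ ensures at least two $1$-cochains appear.

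Finally, distinct tuples of linear combinations give distinct tuples of classes, so $|L_G|\ge\prod_i|W_i|\ge q^{\dim W_{t+1}}\ge q^{cN}$ for some $c\in(0,1)$, using $\dim W_{t+1}=\Theta(N)$. Theorem~\ref{thm:code_space_sheaf} then gives the dimension bound.

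The main obstacle is verifying that the polarized representatives of \cite{LSWLL2026Theory} are supported only on first-direction edges, including their sheaf local coefficients. The concern is that the cyclic lift via $C_l$ might mix directions. I would first check this directly from the K\"unneth-type construction: the polarized representative is a one-hot edge in direction $1$ tensored with a $0$-cochain kernel element in the remaining directions, so its geometric support consists only of direction-$1$ edges. The cubewise vanishing then holds regardless of the coefficients. A second point to confirm is that the $\Theta(N)$ count for the $\mathcal{G}$-block is attained in the same direction as the $\Omega(1)$ count for the $\mathcal{F}$-blocks. If it is not, I would simply take the trivial class in the $\mathcal{F}$-blocks, which still gives $q^{\Theta(N)}$.
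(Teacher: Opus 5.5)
Your proposal is correct and follows the paper's argument. You take polarized representatives in one common direction in all $t+1$ blocks, observe that cup products of $1$-cochains supported on same-direction edges vanish, so every condition in \eqref{eq:gauged_reps} holds (the $0$-cochains $\eta^{(i)}$ only restrict supports), and obtain $q^{\Theta(N)}$ gauged classes from the $\mathcal{G}$-block.

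One adjustment: for $\mathcal{G}=(h_1',\ldots,h_{t-1}',h_t)$ the $\Theta(N)$ polarized family should have its edge in the $t$-th direction, where the low-rank $h_t$ sits, rather than the first. Your fallback of putting the zero class in every $\mathcal{F}$-block makes this moot: for $t\ge 2$ every gauge condition then contains a zero factor. That fallback is precisely the simpler argument the paper itself uses for the qLTC extension.
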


\begin{remark}\label{remark:2D_dim}
	In dimension two, to achieve good code distance (cf. \eqref{eq:sheaf_code_parameter_2}), we have to replace $C_l$ by non-Abelian finite groups \cite{PK2022Good,DHLV2022,QuantumTanner2022}. For a suitable fixed degree $n$, the classical arithmetic constructions of Ramanujan graphs provide a fixed finite connected $n$-regular graph $G_0$ and a family of lifts $\widetilde{G}_i$ such that each of them is Ramanujan. The groups \(\HH_i\) used for lifts are non-Abelian with odd cardinality and $|\HH_i| \to \infty$ as $i \to \infty$ \cite{Glasner03,Bekka2008}. This yields an asymptotically good qLDPC code family in two dimensions. Since $\mathrm{char}\mathbb{F}_q = 2$ and $|\HH_i|$ is odd, Maschke's theorem makes $\mathbb{F}_q[\HH_i]$ semisimple \cite{Goodman2009}. Then the K\"{u}nneth formula still holds and we have polarized logical representatives as in \eqref{eq:sheaf_basis} to support Corollary \ref{coro:sheaf_dim} in this case. There are also subtle details about the gauged code distance in two dimensions and we address them in Corollary \ref{coro:2D_dist}. 	
\end{remark}

Theorem~\ref{thm:code_space_sheaf} and the exponential dimension bound in Corollary~\ref{coro:sheaf_dim} extend to the arithmetic cubical complexes with projective Reed--Solomon local codes used for good qLTCs~\cite{GJ2026,BLN2026,CHLT2026}. For fixed $r \geq 3$, take the $r$ sheaves $\mathcal{F}_1,\ldots,\mathcal{F}_r$ and the receiving $2r$-cycle $\xi$ on the $2r$-dimensional complex $X$ of Ref.~\cite{LLL2609}, including its coefficient multiplication in each cup product. Keep the first $r-1$ code blocks in degree two and use the last block in degree three. The logical representatives $x^{(i)}$ then have degrees $3,2,\ldots,2$, and the $v^{(i)}, u^{(i)}$ have degrees $2,1,\ldots,1$. 

The coefficient identities and the proof of Theorem~\ref{thm:code_space_sheaf} use only multilinearity, the Leibniz rule, $\partial \xi = 0$, and the decomposition into coboundary orbits. These properties hold for the good qLTC sheaves and thus the proof still holds. The code dimension is $|L_G|$ in the qLTCs. To estimate the code dimension, a simpler argument holds when the notion of polarized logical representatives is absent: set all logical classes except the first to zero. Then every gauge condition retains a zero factor and thus
\begin{align}
	|L_G| \geq q^{\dim_{\mathbb{F}_q}H^2(X,\mathcal{F}_r)}=q^{\Omega(N)},
\end{align}
where $N$ is the total number of physical qubits after the degree change. As before, we can select local codes to make the last block has constant rate. This proves the same exponential dimension scaling without an abelian-lift decomposition.


\subsection{Gauged code distance}\label{sec:d_sheaf}

Let $d$ be the minimal distance of the sheaf code blocks. Let $\epsilon_\delta^{\F}(0), \epsilon_\delta^{\mathcal{G}}(0)$ be the 0-cocycle expansions in Definition \ref{def:expansions} of the cochain complexes of sheaves $\F$ and $\mathcal{G}$, respectively. Let $\epsilon$ be the minimum. We have the following bound on the non-Abelian code distance. 

\begin{theorem}\label{thm:code_distance_sheaf}
	The gauged sheaf code has distance
	\begin{align}
		d_{\text{gauge}} \geq c(t,n,m,m') \cdot \epsilon \cdot d
	\end{align}
	where $c(t,n,m,m')$ is a function that depends only on the node degree $n$, the dimensions $m,m'$ of the local coefficient spaces and the dimension $t$ of the cochain complex.
\end{theorem}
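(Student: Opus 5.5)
Our plan is to verify the Knill--Laflamme condition of Theorem~\ref{thm:Pauli_KL} directly on the orthonormal basis $\{\ket{\Psi_{\mathbf{x}}^\xi}\}$ of Theorem~\ref{thm:code_space_sheaf}. Set $w$ to be a small constant multiple of $\epsilon d$, and let $U,V$ be Pauli strings of weight at most $w$. Then $E = U^\dagger V$ is, up to phase, a Pauli string $X^{\mathcal{X}}Z^{\mathcal{Z}}$ of weight at most $2w$. Write it blockwise as $E=(\mathcal{X}_1,\mathcal{Z}_1;\ldots;\mathcal{X}_{t+1},\mathcal{Z}_{t+1})$. The basis states are superpositions of computational strings $(x^{(i)}+\delta u^{(i)})_i$ with all entries 1-cocycles, so $\bra{\Psi_{\mathbf{x}}}E\ket{\Psi_{\mathbf{x}'}}$ vanishes unless every $x'^{(i)}+\mathcal{X}_i+\delta u'^{(i)}$ is again a cocycle. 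This forces $\delta^1\mathcal{X}_i=0$.

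Since $|\mathcal{X}_i|<d$, each $\mathcal{X}_i$ is a coboundary, $\mathcal{X}_i=\delta w_i$. Using the $0$-cocycle expansion $\epsilon$ of Definition~\ref{def:expansions}, we choose $w_i$ at distance at most $|\mathcal{X}_i|/\epsilon$ from $\ker\delta^0$, and we absorb that cocycle part. Consequently $|w_i|\le |\mathcal{X}_i|/\epsilon$.

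The first step is to reduce $E$ to a diagonal operator. Because $\mathcal{X}_i=\delta w_i$, the product of dressed stabilizers $\prod_i\mathcal{A}^{(i)}_{w_i}$ has the same $X$-part. On the code space we may replace $X^{\mathcal{X}}$ by $\prod_i\mathcal{A}_{w_i}^{(i)}$ times a diagonal phase operator. By Lemma~\ref{lemma:coefficient} (Eq.~\eqref{eq:coefficient_1}), that phase is $(-1)^{\tr\langle\cdots\rangle}$, a polynomial in the cup products of the $w_i$ with the shifted cocycles $x^{(j)}+\delta u^{(j)}$. The stabilizers act trivially on code states, so it suffices to treat a diagonal operator $D=\Phi\, Z^{\mathcal{Z}}$. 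Here $\Phi$ is a function supported only on qudits in the neighbourhood $X_{\ge w}$ of $\bigcup_i\supp(w_i)$, which has size $O_{t,n,m,m'}(|\mathcal{X}|/\epsilon)$ by sparsity.

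The second step is to show that $\bra{\Psi_{\mathbf{x}}}D\ket{\Psi_{\mathbf{x}'}}=\gamma_D\delta_{\mathbf{x}\mathbf{x}'}$. Let $M$ be the union of $\supp\mathcal{Z}$ and $X_{\ge w}$, so that $|M|\le c^{-1}(t,n,m,m')\,\epsilon^{-1}\cdot 2w<d$ by our choice of $w$. Apply the Cleaning Lemma~\ref{thm:Cleanning} to every block. This lets us pick representatives of the classes $[x^{(i)}]$ whose supports avoid $M$; by Theorem~\ref{thm:code_space_sheaf} this changes the basis states only by global phases. Diagonality in $\mathbf{x}$ is immediate, since distinct classes give disjoint sets of computational strings and $D$ is diagonal. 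For the constant, we must show that $\bra{\Psi_{\mathbf{x}}}D\ket{\Psi_{\mathbf{x}}}$ is independent of $\mathbf{x}$. We expand it as $|\mathcal{N}|^2\sum_{\mathbf{u}}D(\mathbf{x}+\delta\mathbf{u})$. On $M$, the value of $\mathbf{x}+\delta\mathbf{u}$ is just $\delta\mathbf{u}|_M$. The remaining issue is the cross terms in $\Phi$ of the form $\langle\cdots\smile w_i\smile\cdots,\xi\rangle$ that involve $x^{(j)}$ on cubes near $\supp w_i$; these vanish because the cleaned $x^{(j)}$ is zero there (this is Eq.~\eqref{eq:rep_error}). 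After this, $D(\mathbf{x}+\delta\mathbf{u})=D(\delta\mathbf{u})$, and the sum equals that for $\mathbf{x}=0$. The normalizations agree because every coboundary orbit has the same size.

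The main obstacle is the phase bookkeeping in the reduction step. Conjugating the $X$-part past the dressed stabilizers produces cross terms from all the $\mathrm{C}^tZ$ dressings, including products $w_i\smile\delta u^{(j)}\smile\cdots$ that involve several $w$'s. We must check that each of them is localized on cubes meeting $\supp w_i$, so that cleaning on $M$ kills every $\mathbf{x}$-dependence. We would first handle this using the multilinear expansion of Lemma~\ref{lemma:coefficient} term by term, with the Leibniz rule and $\partial\xi=0$ used to move $\delta$'s off the $x^{(j)}$. The factor $\epsilon$ enters only through the weight bound on $w_i$. The constant $c(t,n,m,m')$ collects the local neighbourhood sizes and the factor $2$ from $U^\dagger V$. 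Finally, $d_{\text{gauge}}\ge 2w+1$ follows from Theorem~\ref{thm:Pauli_KL}.
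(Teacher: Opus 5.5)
Your proposal is correct and follows the paper's proof. Both verify Knill--Laflamme on the basis of Theorem~\ref{thm:code_space_sheaf}, force the $X$-errors to be coboundaries $\delta w_i$ with $|w_i|\le|\mathcal{X}_i|/\epsilon$ via $0$-cocycle expansion, and clean the representatives off $\supp\mathcal{Z}$ and the local neighbourhood of $\bigcup_i\supp w_i$ so that the matrix element is $\mathbf{x}$-independent. The only difference is packaging: you strip off $\prod_i\mathcal{A}^{(i)}_{w_i}$ and use locality of the leftover diagonal phase, whereas the paper expands $F^\xi_{\mathbf{x}}(\mathbf{u}+\mathbf{w})+F^\xi_{\mathbf{x}}(\mathbf{u})$ and kills the cross terms of Eq.~\eqref{eq:rep_error} one by one, a check your locality observation makes automatic.
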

\begin{proof}[Proof of Theorem \ref{thm:code_distance_sheaf}]
	We only prove the 2-dimensional case as it generalizes easily to high dimensions. By Theorem \ref{thm:Pauli_KL}, we consider an arbitrary Pauli error $E$ that is specified by six $1$-(co)chains for $A,B,C$:
	\begin{align}
		E = (\mathcal{X}_A, \mathcal{Z}_A; \mathcal{X}_B, \mathcal{Z}_B; \mathcal{X}_C, \mathcal{Z}_C).
	\end{align}
	Recall the (co)cycle expansions in Definition \ref{def:expansions}:
	\begin{align}
		\epsilon_{\delta}(i) = \min_{x\in C^i\setminus\ker\delta^i} \frac{|\delta^i x|}{\operatorname{dist}(x,\ker\delta^i)},\quad
		\epsilon_{\partial}(i) = \min_{z\in C_i\setminus\ker\partial_i} \frac{|\partial_i z|}{\operatorname{dist}(z,\ker\partial_i)}.
	\end{align}	
	Since we only gauge the $X$-stabilizers, we will only use the cocycle expansion. Actually, in 2 dimensions, only small-set (co)boundary expansions are known \cite{DHLV2022} and (almost) good (co)cycle expansions are obtained in higher dimensions \cite{Dinur2024sheaf,GJ2026,BLN2026,CHLT2026}. For conciseness, we proceed with cocycle expansion; after the proof, we explain the changes needed when using small-set coboundary expansion.
	
	For local coefficients, let $m_0 = \max\{m,m'\}$. We are going to prove that when the Pauli errors have weights bounded by
	\begin{align}\label{eq:error_weight}
		\vert \mathcal{X}_A \vert, \ \vert \mathcal{X}_B \vert, \ \vert \mathcal{X}_C \vert < \frac{\epsilon}{16 n^2 m_0 } d, \qquad 
		\vert \mathcal{Z}_A \vert, \ \vert \mathcal{Z}_B \vert, \ \vert \mathcal{Z}_C \vert < \frac{1}{2m_0} d,
	\end{align}
	they are always detecable. Here, $\epsilon$ depends on the expansion property of the cubical complex $X$ as well as the local codes used to generate the sheaf. The factor $\frac{1}{16 n^2 m_0}$ accounts for the dimensions of the complex and the local coefficients. A valid upper bound for general $t$ dimensions is given in \eqref{eq:local_view_w_j}.
	
	Given any codeword from the original sheaf codes, we have
	\begin{align}\label{eq:error_expectation}
		\begin{aligned}
			& \bra{x_A', x_B', x_C'} E  \ket{x_A, x_B, x_C} \\
			= &  (-1)^{ \tr_{\mathbb{F}_q/\mathbb{F}_2}( \langle \mathcal{Z}_A, x_A \rangle + \langle \mathcal{Z}_B, x_B \rangle + \langle \mathcal{Z}_C, x_C \rangle )} 
			\bra{x_A', x_B', x_C'} x_A + \mathcal{X}_A, x_B + \mathcal{X}_B, x_C +  \mathcal{X}_C \rangle.
		\end{aligned}
	\end{align}
	Since $x_A,x_B,x_C$ are cohomological representatives, unless $\mathcal{X}_A,\mathcal{X}_B,\mathcal{X}_C$ are $1$-cocycles, the above inner product vanishes. If one of them belongs to $\ker \delta^1 \setminus \Ima \delta^0$, then the $X$-error has weight at least the original CSS code distance and is therefore ruled out by \eqref{eq:error_weight}. To analyze the situation with possibly smaller weight, we consider $X$-errors such that
	\begin{align}\label{eq:error_coboundary}
		\mathcal{X}_A = \delta^0 w_A, \quad \mathcal{X}_B = \delta^0 w_B, \quad \mathcal{X}_C = \delta^0 w_C.
	\end{align}
	That is, they are coboundaries. By the definition of cocycle expansion and the assumption in \eqref{eq:error_weight}, we can find $w_i$ such that
	\begin{align}
		\vert w_A \vert, \ \vert w_B \vert, \ \vert w_C \vert < \frac{1}{16 n^2 m_0} d.  
	\end{align}
	
	For any $E$ with these error bounds, we demonstrate below that it is possible to select inequivalent cohomological representatives $\{x_A\}$, $\{x_B\}$ and $\{x_C\}$ whose associated orthonormal gauged code basis in Theorem \ref{thm:code_space_sheaf} satisfies the KL condition. Accordingly, it holds for any orthonormal basis of the gauged code space. To be precise, for fixed $\mathcal{X}_i, w_i, \mathcal{Z}_j$ satisfying \eqref{eq:error_weight}, we take an arbitrary gauged tuple of logical representatives of $L_G$ in Theorem \ref{thm:code_space_sheaf}. We clean their supports in the following. Let $\ket{\Psi} = \ket{\Psi_{x_A,x_B,x_C}^\xi}$ and $\ket{\Psi'} = \ket{\Psi_{x_A',x_B',x_C'}^\xi}$. By Eq.~\eqref{eq:error_expectation} and \eqref{eq:error_coboundary},
	\begin{align}
		\bra{\Psi'} E \ket{\Psi} \text{ could be nonzero only if } (x_A,x_B,x_C) \sim (x_A',x_B',x_C').
	\end{align}
	With respect to a given orthonormal basis, this forces $(x_A,x_B,x_C) = (x_A',x_B',x_C')$ and thus
	\begin{align}\label{eq:error_expectation_2}
		\begin{aligned}
			\bra{\Psi'} E \ket{\Psi} = 
			\mathcal{N}^2 \sum_{u_A,u_B,u_C} &
			(-1)^{ \tr_{\mathbb{F}_q/\mathbb{F}_2}(  F_{\mathbf{x}}^\xi (u_A + w_A, u_B + w_B, u_C + w_C ) + F_{\mathbf{x}}^\xi (u_A,u_B,u_C)) }  \\
			& \cdot (-1)^{ \tr_{\mathbb{F}_q/\mathbb{F}_2}( \langle \mathcal{Z}_A, x_A + \delta u_A \rangle + \langle \mathcal{Z}_B, x_B + \delta u_B \rangle + \langle \mathcal{Z}_C, x_C + \delta u_C \rangle )}
		\end{aligned}
	\end{align}
	where the normalization factor $\mathcal{N}$ is given by the number of $C^0$ elements, independent of gauged logical states. The goal is to verify that the right-hand side is independent of $\mathbf{x}$, then Theorem \ref{thm:Pauli_KL} gives a lower bound on the code distance.
	
	To this end, the first step is to prove $\langle \mathcal{Z}_i,x_i \rangle = 0$ for $i = A,B,C$. Recall that in Section~\ref{sec:cubical} we defined the geometric support $\supp(x_i)$ of $x_i$ as the collection of 1-cubes with nontrivial local coefficient vectors. Then the inner product vanishes if $\supp(x_i) \cap \supp(\mathcal{Z}_i) = \emptyset$. This can be achieved if $x_i$ is cleaned out in a subset of $C^1(X,\F)$ or $C^1(X,\mathcal{G})$ of size
	\begin{align}
		\vert \supp(\mathcal{Z}_i) \vert \cdot m_0 < \frac{d}{2}.
	\end{align}
	Here, we used the fact that $\vert \supp(\mathcal{Z}_i) \vert \leq \vert \mathcal{Z}_i \vert$ as the local coefficient space has dimension $\leq m_0$ on each geometric site. For general $t$, we replace the upper bound in \ref{eq:error_weight} by $\vert \mathcal{Z}_i \vert < \frac{1}{2m_0^{t-1}} d$.
	
	For $X$-errors, we have the following expansion analogous to Eq.~\eqref{eq:eta_expansion}
	\begin{align}\label{eq:w_expansion}
		& F_{\mathbf{x}}^\xi (u_A + w_A, u_B + w_B, u_C + w_C) \\
		= & F_{\mathbf{x}}^\xi (u_A, u_B, u_C) 			
		+ \int_\xi w_A \smile (x_B + \delta u_B) \smile (x_C + \delta u_C ) 
		+ \int_\xi (x_A + \delta (u_A + w_A) ) \smile w_B \smile (x_C + \delta u_C ) \notag \\
		& + \int_\xi  (x_A + \delta (u_A + w_A) ) \smile (x_B + \delta (u_B + w_B) ) \smile w_C. \notag
	\end{align}
	Note that the components of $\boldsymbol{w} = (w_A, w_B, w_C)$ need not be 0-cocycles, and we further expand:
	\begin{align}
		\begin{aligned}
			& \int_\xi  (x_A + \delta (u_A + w_A) ) \smile (x_B + \delta (u_B + w_B) ) \smile w_C \\
			= & \int_\xi x_A  \smile x_B \smile w_C + \int_\xi x_A \smile u_B \smile \delta w_C
			+ \int_\xi x_A \smile \delta w_B \smile w_C \\
			& + \int_\xi u_A \smile x_B \smile \delta w_C + \int_\xi \delta u_A \smile u_B \smile \delta w_C + \int_\xi u_A \smile \delta w_B \smile \delta w_C \\
			& + \int_\xi \delta w_A \smile x_B \smile w_C + \int_\xi \delta w_A \smile \delta u_B \smile w_C + \int_\xi \delta w_A \smile \delta w_B \smile w_C,
		\end{aligned}
	\end{align}
	and
	\begin{align}
		\begin{aligned}
			& \int_\xi (x_A + \delta (u_A + w_A) ) \smile w_B \smile (x_C + \delta u_C ) \\
			= & \int_\xi x_A \smile w_B \smile x_C + \int_\xi u_A \smile \delta w_B \smile x_C + \int_\xi \delta w_A \smile w_B \smile x_C \\
			& + \int_\xi x_A \smile \delta w_B \smile u_C + \int_\xi u_A \smile \delta w_B \smile \delta u_C + \int_\xi \delta w_A \smile w_B \smile \delta u_C,
		\end{aligned}
	\end{align}
	and
	\begin{align}
		\begin{aligned}
			& \int_\xi w_A \smile (x_B + \delta u_B) \smile (x_C + \delta u_C ) \\
			= & \int_\xi w_A \smile x_B \smile x_C + \int_\xi \delta w_A \smile u_B \smile x_C 
			+ \int_\xi \delta w_A \smile x_B \smile u_C + \int_\xi \delta w_A \smile u_B \smile \delta u_C,
		\end{aligned}
	\end{align}
	where we applied integration by parts several times.
	
	Now, we select terms that involve both logical representatives and the errors (recall $\delta w_i = \mathcal{X}_i$):
	\begin{align}\label{eq:rep_error}
		\begin{aligned}
			& \int_\xi x_A \smile x_B \smile w_C + \int_\xi x_A \smile u_B \smile \mathcal{X}_C
			+ \int_\xi x_A \smile \mathcal{X}_B \smile w_C \\
			+ & \int_\xi u_A \smile x_B \smile \mathcal{X}_C + \int_\xi \mathcal{X}_A \smile x_B \smile w_C \\
			+ & \int_\xi x_A \smile w_B \smile x_C + \int_\xi u_A \smile \mathcal{X}_B \smile x_C + \int_\xi \mathcal{X}_A \smile w_B \smile x_C + \int_\xi x_A \smile \mathcal{X}_B \smile u_C \\
			+ & \int_\xi w_A \smile x_B \smile x_C + \int_\xi \mathcal{X}_A \smile u_B \smile x_C + \int_\xi \mathcal{X}_A \smile x_B \smile u_C. 
		\end{aligned}
	\end{align}
	If these terms vanish for any $(x_A,x_B,x_C)$ with respect to the given $\mathcal{X}_i, w_i$, then the proof is complete. The following conditions are sufficient:
	\begin{enumerate}
		\item The two-term cup products vanish:
		\begin{align}
			& x_A \smile w_B, \ x_A \smile \mathcal{X}_B = 0, \\
			& w_A \smile x_B, \ \mathcal{X}_A \smile x_B, \ x_B \smile w_C, \ x_B \smile \mathcal{X}_C = 0, \\
			& w_B \smile x_C, \ \mathcal{X}_B \smile x_C = 0.
		\end{align}
		Note that $x_i, \mathcal{X}_j, w_j$ are defined on cubes with local coefficients. Their cup products are computed in two steps: taking cup products among the underlying geometrical cubes and then processing the local coefficient vectors as in Eq.~\eqref{eq:cup_sheaf_2}. Although there are nontrivial local structures, we still have a rather simple method to trivialize these cup products: given the support $\supp(w_j)$, let $X_{\geq \supp(w_j)}(t)$ be the collection of $t$-cubes that contain at least one cube in the support of $w_j$. With local coefficients, we have $\vert \supp(w_i) \vert \leq \vert w_i \vert$. Then by \eqref{eq:error_weight},
		\begin{align}
			\vert X_{\geq \supp(w_j)}(2) \vert \leq \vert \supp(w_j) \vert \cdot n^2 < \frac{1}{16 n^2 m_0} d \cdot n^2.
		\end{align}
		By definition, $X_{\geq \supp(w_j)}(2)$ includes at most $4\vert X_{\geq \supp(w_j)}(2) \vert < \frac{1}{4 n^2 m_0} d \cdot n^2$ 1-cubes. As long as the supports of our logical representatives bypass all these 1-cubes, the cup products are zero. Since each $x_i$ would form cup products with $w_{i-1}, w_{i+1}$ when they are defined, the logical representatives have to vanish on a subset of $C^1(X,\F)$ or $C^1(X,\mathcal{G})$ of size bounded by
		\begin{align}\label{eq:cleaning_w_j}
			8 \cdot \vert X_{\geq \supp(w_j)}(2) \vert \cdot m_0^{t-1} < \frac{d}{2}. 
		\end{align}
		By a similar argument, we can also bound $\vert X_{\geq \supp(\mathcal{X}_j)}(2) \vert$. Since $\delta w_j = \mathcal{X}_j$, $X_{\geq \supp(\mathcal{X}_j)}(2) \subseteq X_{\geq \supp(w_j)}(2)$ and \eqref{eq:cleaning_w_j} suffices to determine the size of the region that needs to be cleaned.
		
		\item We still require, e.g., $x_A \smile u_B \smile \mathcal{X}_C = 0$ for arbitrary $u_B \in C^0(X,\F)$. Since $u_B$ is supported on 0-cubes,
		\begin{align}
			\supp( u_B \smile \mathcal{X}_C ) \subseteq \supp(\mathcal{X}_C)
		\end{align} 
		and it suffices when $x_A$ bypasses $\supp(\mathcal{X}_C)$, as already accounted for in the worst-case bound in \eqref{eq:cleaning_w_j}.
	\end{enumerate}
	For general $t$, we replace \eqref{eq:error_weight} by
	\begin{align}\label{eq:local_view_w_j}
		\vert \mathcal{X}_i \vert < \frac{\epsilon}{2 t^2 2^{t-1} \cdot n^t m_0^{t-1}} d
	\end{align}
	as $X_{\geq \supp(w_j)}(t)$ includes $\vert X_{\geq \supp(w_j)}(t) \vert \cdot t 2^{t-1}$ 1-cubes. Combining $Z$- and $X$-errors together, we have to clean the region of size $< d$, which is doable by the cleaning lemma \cite{Kalachev2022Cleaning}. As a result, for any $E_a, E_b$ with half the error bounds, the KL condition holds for $E = E_a^\dagger E_b$ and hence the gauged code distance $\geq \min\{ \frac{\epsilon}{4 t^2 2^{t-1} \cdot n^t m_0^{t-1}} d, \frac{1}{4 m_0^{t-1}} d \}$. It is straightforward to generalize this result to $t$ dimensions.
\end{proof}

\begin{corollary}\label{coro:2D_dist}
	The following distance bounds holds for different gauged code models
	\begin{enumerate}
		\item For the 2-dimensional sheaf codes based on left-right Cayley complex \cite{DHLV2022}, $d = \Theta(N)$.
		
		\item For high dimensional sheaf codes based on abelian lifts \cite{Dinur2024sheaf}, $d = \Theta(N/\mathrm{polylog}\,N)$.
		
		\item For good qLTCs based on  Ramanujan cubical complexes \cite{GJ2026,BLN2026,CHLT2026}, $d = \Theta(N)$.
	\end{enumerate}
\end{corollary}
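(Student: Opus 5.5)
The plan is to derive each item from Theorem~\ref{thm:code_distance_sheaf}, which gives $d_{\text{gauge}} \geq c(t,n,m,m')\,\epsilon\, d$. Since $t,n,m,m'$ are constants in every model considered, the only work is to check two things for each family: that the underlying sheaf code distance $d$ has the claimed scaling, and that the expansion parameter entering the proof is bounded below by a constant. In the proof, $\epsilon$ is used in exactly one place. For a coboundary error $\mathcal{X}_i = \delta^0 w_i$ of small weight, it lets us choose $w_i$ with $|w_i| \lesssim |\mathcal{X}_i|/\epsilon$. Any other property with the same effect can replace the $0$-cocycle expansion.

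For item 3, the good qLTCs on Ramanujan cubical complexes \cite{GJ2026,BLN2026,CHLT2026} come with constant cocycle expansion in all relevant degrees, including degree $0$, and with linear distance $d=\Theta(N)$. Theorem~\ref{thm:code_distance_sheaf} then applies verbatim, with the degree shift recorded at the end of Section~\ref{sec:k_sheaf}: logical representatives in degrees $3,2,\dots,2$ and $w^{(i)}$ in degrees $2,1,\dots,1$. The needed expansion is now the cocycle expansion in the degree of $w^{(i)}$, which these complexes supply. For item 2, Dinur et al.\ \cite{Dinur2024sheaf} provide (co)cycle expansion for the abelian-lift sheaf complexes with $d=\Theta(N/\mathrm{polylog}\,N)$. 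Here I would check that the $0$-cocycle expansion of $C^\bullet(X,\F)$ and $C^\bullet(X,\mathcal{G})$ is bounded below by a constant, or at worst by $1/\mathrm{polylog}\,N$. The latter is still absorbed into the stated $\Theta(N/\mathrm{polylog}\,N)$. Degree $0$ is controlled by the local codes' product expansion together with the spectral expansion of the lifted graph.

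Item 1, the two-dimensional left-right Cayley complex \cite{DHLV2022}, is the main obstacle. There only $(\alpha,\beta,\gamma)$-small-set coboundary expansion (Definition~\ref{def:small_set}) is known, not global cocycle expansion, so I would rerun the $X$-error step of the proof with this property in its place. Take $\mathcal{X}_i\in\Ima\delta^0$ with $|\mathcal{X}_i| < \alpha\dim C^1$. Apply small-set expansion to $x=\mathcal{X}_i$, which gives $u$ with $|\mathcal{X}_i+\delta^0u|\le \beta^{-1}|\delta^1\mathcal{X}_i| = 0$ and $|u|\le\gamma|\mathcal{X}_i|$. Then $w_i=u$ (up to sign) satisfies $\delta^0 w_i=\mathcal{X}_i$ and $|w_i|\le \gamma|\mathcal{X}_i|$, which is exactly what the cleaning argument consumes, with $\epsilon$ replaced by $\gamma^{-1}$.

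The rest of the argument is unchanged. The cup-product terms in \eqref{eq:rep_error} vanish once the representatives avoid the neighborhoods $X_{\geq\supp(w_j)}(2)$, whose total size is $O(n^2 m_0\gamma)\cdot|\mathcal{X}_j|$. This set is cleaned by Lemma~\ref{thm:Cleanning} whenever it has size $<d$. Since $d=\Theta(N)$ and $\alpha\dim C^1=\Theta(N)$, both the small-set restriction and the cleaning restriction allow error weights $\Theta(N)$, so $d_{\text{gauge}}=\Theta(N)$. The polarized-representative dimension count for this non-abelian lift was already handled in Remark~\ref{remark:2D_dim}, so only the distance needs this modification. I would also double-check that small-set expansion in \cite{DHLV2022} is stated for the sheaves $\F$ and $\mathcal{G}$ built from the punctured RS local codes, i.e.\ that two-way product expansion of those codes is all that result needs.
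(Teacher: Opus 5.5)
Your proposal is correct and matches the paper's proof. Items 2 and 3 follow directly from Theorem~\ref{thm:code_distance_sheaf} using the known distances and cocycle expansions of those families. For item 1, small-set coboundary expansion applied to a cocycle ($\delta^1 \mathcal{X}_i = 0$) gives $\mathcal{X}_i = \delta^0 u$ with $|u| \le \gamma |\mathcal{X}_i|$, and this replaces the $0$-cocycle expansion in the cleaning argument. Your extra care is a refinement the paper leaves implicit rather than a different route: using the cocycle expansion in the degree of $w^{(i)}$ for the degree-shifted qLTC blocks, and absorbing a possible $1/\mathrm{polylog}\,N$ expansion loss.
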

\begin{proof}
	The last two statements hold immediately by using the corresponding code parameters and cocycle expansions of these codes families.
	
	When $t = 2$, the $(\alpha,\beta,\gamma)$-small-set-coboundary expansion says that for any $x \in C^1$ such that $\vert x \vert < \alpha \dim C^1$, there is some $u \in C^0$ such that
	\begin{align}
		\vert \delta^1 x \vert \geq \beta \vert x + \delta^0 u \vert \text{ and } \vert u \vert \leq \gamma \vert x \vert. 
	\end{align}
	In our case, the Pauli $X$ errors under consideration are 1-cocycles, so $\delta^1 x = 0$. When their weights are upper bounded by $\alpha \dim C^1$, we can find $\delta^0 u = x$ such that $\vert u \vert \leq \gamma \vert x \vert \leq \gamma \alpha \dim C^1$. This provides the same type of bound on $\vert u \vert$ as 0-cocycle expansion. The small-set coboundary expansion with constant $(\alpha,\beta,\gamma)$ is proved in \cite{DHLV2022} for good qLDPC codes based on $\mathbb{F}_2$ and can be easily generalized to a sufficiently large $\mathbb{F}_q$, which further ensures the existence of product-expanding punctured RS codes \cite{LLL2026nontrivial}.
\end{proof}

\subsection{Long-range magic throughout the gauged code space}\label{sec:m_sheaf}

We construct an asymptotic family of gauged sheaf codes with long-range magic throughout their code spaces. By the dimension criterion in~\cite{wei2025longrangenonstabilizernessquantumcodes}, it suffices to show that the code dimension is not a power of $2$. We establish this by proving that the dimension is divisible by $3$.

Our construction uses the quantum code induction scheme with the covering-space method developed in~\cite{LSWLL2026Theory,LLL2026nontrivial}. As a brief introduction, recall the definition of a cubical complex in Section \ref{sec:cubical}. Taking the same base graph $G_0$, we define cubical complexes $X, X'$. The first one admits a nontrivial lift while the second one uses the trivial lift. Then there is a natural covering map $X \rightarrow X'$ defined by
\begin{align}\label{eq:projection}
	\sigma = [(h,v_1,\ldots, v_t);\,(a_j)_{j\in S},\,(b_j)_{j\notin S}] \mapsto \sigma' = [(v_1,\ldots, v_t);\,(a_j)_{j\in S},\,(b_j)_{j\notin S}],
\end{align}
for any $h \in \HH$ indexing the lifts. The preimage $\{\sigma\}$ is the \emph{fiber} of $\sigma'$. Accordingly, the sheaf cochain complex of $X$ can be viewed as the \emph{covering space} of the HGP cochain complex of $X'$:
\begin{align}
	C^\bullet(X,\F) \longrightarrow C^\bullet(X',\F'),
\end{align}
where $\F$ is the pullback cellular sheaf of $\F'$ via the covering map: for any $\sigma$, we define $\F(\sigma) \coloneq \F'(\sigma')$. The local linear maps are defined similarly. Accordingly, we identify $\F$ with $\F'$. 

We still use three blocks of 2-dimensional sheaf codes to illustrate the proofs. As mentioned after Lemma \ref{lemma:coefficient}, we choose three different sheaves that would facilitate our proofs:
\begin{align}\label{eq:HGP_sheaf}
	\F=(h_1,h_2), \qquad \mathcal{G} = (h_1,h_3), \qquad \mathcal{H} = (h_4,h_2),
\end{align}
where $(\C_1,\C_2)$ generated by $h_i$, $i = 1,2$, are still product-expanding punctured RS codes and $h_3, h_4$ are taken to be the generator matrices of the dual codes $(\C_2 \ast \C_2)^\perp$,  $(\C_1 \ast \C_1)^\perp$, respectively. More details about the local codes is given after \eqref{eq:HGP_rank} when we present concrete computations. For now, we take three sheaf codes:
\begin{align}\label{eq:sheaf_magic}
	\begin{aligned}
		& C^0(X,\F) \xrightarrow{\delta^0} C^1(X,\F) \xrightarrow{\delta^1} C^2(X,\F), \\
		& C^0(X,\mathcal{G}) \xrightarrow{\delta^0} C^1(X,\mathcal{G}) \xrightarrow{\delta^1} C^2(X,\mathcal{G}), \\
		& C^0(X,\mathcal{H}) \xrightarrow{\delta^0} C^1(X,\mathcal{H}) \xrightarrow{\delta^1} C^2(X,\mathcal{H}).
	\end{aligned}
\end{align}
Suppose $[\mathbf{x}] = ([x_A],[x_B],[x_C]) \in L_G$ is a tuple of gauged cohomological classes. For any $s \in \HH$ and any logical representative $x_A \in C^1(X,\F)$, let
\begin{align}\label{eq:permute}
	(s \cdot x_A)(\sigma) \coloneq x_A(s^{-1} \cdot \sigma), \quad s^{-1} \cdot \sigma = [(s^{-1} \cdot h,v_1,v_2);\,(a_j)_{j\in S},\,(b_j)_{j\notin S}].
\end{align}
Namely, we permute the local coefficient vectors of $x_A$ within each fiber $\{\sigma\}$ by the translation action of $\HH$. If $x_A = x_A' + \delta u_A$, then
\begin{align}\label{eq:action_coboundary_commute}
	s \cdot x_A = s \cdot x_A' + s \cdot \delta^0 u_A = s \cdot x_A' + \delta^0 (s \cdot u_A).
\end{align}
The last equality holds immediately when $\HH$ is an abelian group. As a result, $s \cdot [x_A] \coloneq [s \cdot x_A]$ is well-defined on the cohomological classes.

Similar facts hold for $x_B$ and $x_C$. Let $[\mathbf{x}] = ([x_A], [x_B], [x_C]) \in L_G$. The gauge condition says that for any $0$-cocycle $\eta_A \in C^0(X,\F)$,
\begin{align}
	\langle \eta_A \smile x_B \smile x_C, \xi \rangle = 0 
\end{align}
Since $\delta(s \cdot\eta_A) = s \cdot (\delta \eta_A)$, $s \cdot \eta_A$ is also a cocycle. Then by permuting cup products within the fiber of any 2-cube, we obtain  
\begin{align}
	\left\langle \eta_A \smile (s \cdot x_B) \smile (s \cdot x_C), \xi \right\rangle = \left\langle s \cdot (s^{-1} \cdot \eta_A) \smile (s \cdot x_B) \smile (s \cdot x_C), \xi \right\rangle = 0, 
\end{align}
and hence $s \cdot [\mathbf{x}] \coloneq (s \cdot [x_A], s \cdot [x_B], s \cdot [x_C])$ is still gauged. 

\begin{lemma}\label{lemma:group_action}
	Let $\HH$ be the abelian group of odd order used in the definition of $X$. Then it induces a group action on the finite set $L_G$ of gauged cohomological classes. Moreover, fixed points of the group action admit a one-to-one correspondence with the gauged cohomological classes of HGP codes based on $X'$ given by \eqref{eq:projection}.
\end{lemma}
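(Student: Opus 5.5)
The plan has two parts: check that the translation action is a genuine group action on $L_G$, then identify its fixed points with gauged classes downstairs via transfer.

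\textbf{Step 1: the action on $L_G$.} The preceding discussion already shows that $s\cdot[\mathbf{x}]$ is well defined on cohomology (Eq.~\eqref{eq:action_coboundary_commute}) and remains gauged. To make the invariance of the pairing precise, I would take $\xi$ to be $\HH$-invariant, $s\cdot\xi=\xi$; for instance, $\xi$ pulled back from a cycle $\xi'$ on $X'$ and supported on every $2$-cube, which is the choice after \eqref{eq:local_rank}. The identities $e\cdot[\mathbf{x}]=[\mathbf{x}]$ and $(st)\cdot[\mathbf{x}]=s\cdot(t\cdot[\mathbf{x}])$ follow directly from \eqref{eq:permute}. For the cochain identity $\delta(s\cdot u)=s\cdot\delta u$, I use that the translation $h\mapsto s h$ is an automorphism of $X$ for abelian $\HH$: right multiplication by $\gamma_{(v,v')}$ commutes with $s$, so $a_j^\mu$ commutes with $s$. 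The same fact shows the cup product is equivariant, $s\cdot(x\smile y)=(s\cdot x)\smile(s\cdot y)$, since the formulas \eqref{eq:1-cube_cup} only involve $a_j\cdot g$. Together with $s\cdot\xi=\xi$ this gives $\langle s\cdot z,\xi\rangle=\langle z,\xi\rangle$, and hence the gauge conditions are preserved.

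\textbf{Step 2: transfer maps.} Let $\pi:X\to X'$ be the covering map. I would use the pullback $\pi^*:C^\bullet(X',\F')\to C^\bullet(X,\F)$, which copies coefficients along each fiber, and the pushforward (transfer) $\pi_*$, which sums over fibers. Both are chain maps, and $\pi_*\pi^*=|\HH|\cdot\mathrm{id}$. Since $|\HH|$ is odd and $\mathrm{char}\,\mathbb{F}_q=2$, this multiple is invertible. The standard transfer argument, i.e.\ Maschke's theorem in the semisimple setting of Remark~\ref{remark:2D_dim}, then gives an isomorphism $H^1(X,\F)^{\HH}\cong H^1(X',\F')$ via $\pi^*$, with inverse $|\HH|^{-1}\pi_*$. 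Concretely, if $[x]$ is fixed, averaging $\frac{1}{|\HH|}\sum_s s\cdot x$ produces an invariant representative, and invariant cochains are exactly pullbacks. The same holds for the sheaves $\mathcal{G}$ and $\mathcal{H}$. Consequently a fixed point of the action on $L_G$, which is fixed componentwise, corresponds to a unique tuple $([x'_A],[x'_B],[x'_C])$ of classes on $X'$.

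\textbf{Step 3: matching the gauge conditions.} It remains to show that $\pi^*\mathbf{x}'$ is gauged on $X$ exactly when $\mathbf{x}'$ is gauged on $X'$, where the latter is taken with respect to $\xi'$. Pullback commutes with cup products, and $\langle\pi^*z,\pi^*\xi'\rangle=|\HH|\langle z,\xi'\rangle$. For $0$-cocycles, the isomorphism of Step 2 applied in degree $0$ gives $\ker\delta^0_X=\pi^*\ker\delta^0_{X'}\oplus(\text{non-invariant part})$. On the non-invariant part, averaging over $\HH$ kills the pairing with an invariant product: $\langle\eta\smile\pi^*x_B\smile\pi^*x_C,\xi\rangle$ depends only on $\pi_*\eta$, which is $0$ on that complement. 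So Eq.~\eqref{eq:gauged_reps} for all $\boldsymbol{\eta}$ upstairs reduces to the same condition for all $\eta'$ downstairs, multiplied by the unit $|\HH|$. This yields the bijection.

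\textbf{Main obstacle.} The delicate step is Step 3, specifically the claim that the gauge pairing depends on $\eta$ only through $\pi_*\eta$. I would prove it by writing $\langle\eta\smile\pi^*y,\pi^*\xi'\rangle=\langle\pi_*(\eta\smile\pi^*y),\xi'\rangle=\langle(\pi_*\eta)\smile y,\xi'\rangle$, which is the projection formula for finite covers. This requires checking that the cup-product formulas \eqref{eq:1-cube_cup} are compatible with $\pi$ cube by cube.
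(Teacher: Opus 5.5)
Your proof is correct. Its core mechanism is the paper's averaging: the paper's invariant representative $\overline{x}_i=\frac{1}{|\HH|}\sum_{s\in\HH}s\cdot x_i$ is exactly $\pi^*\pi_*x_i$ in your notation. Since $|\HH|$ is odd, $|\HH|=1$ in $\mathbb{F}_q$, so the normalization is vacuous.

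The difference is in how much gets proved. The paper's proof stops once a fixed class has an invariant representative that descends to $X'$, and then asserts a ``natural counterpart'' $[\mathbf{x}']$. It does not check three things:
\begin{itemize}
\item that this assignment is well defined and bijective on classes;
\item that a pulled-back tuple satisfies the gauge condition on $X$ (quantified over all $0$-cocycles of $X$) exactly when its descent satisfies the gauge condition on $X'$;
\item that $\xi$ is $\HH$-invariant, which it uses only tacitly when ``permuting cup products within the fiber.''
\end{itemize}

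Your transfer pair handles the first point via $\pi_*\pi^*=\mathrm{id}$ and $\pi^*\pi_*=\sum_s s$. Your projection formula $\langle\cdots\smile\eta\smile\cdots,\pi^*\xi'\rangle=\langle\cdots\smile\pi_*\eta\smile\cdots,\xi'\rangle$ handles the second. You can drop the direct-sum decomposition of $\ker\delta^0_X$: $\pi_*$ maps $0$-cocycles of $X$ onto $0$-cocycles of $X'$ (surjectivity follows from $\pi_*\pi^*=\mathrm{id}$), so the two families of gauge conditions coincide directly.

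Making $\xi=\pi^*\xi'$ an explicit hypothesis is also right. Without a distinguished cycle $\xi'$ on $X'$, the ``gauged classes of HGP codes based on $X'$'' are not even defined.

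This extra verification is what actually justifies the congruence $|L_G|\equiv|L_G^{\mathrm{HGP}}| \bmod \omega$ that the paper draws from the lemma. The paper only uses the analogous reduction to averaged $\eta$ later, in its explicit HGP computation.
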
 
\begin{proof}
	We have justified that the action is well-defined on $L_G$. Suppose $s \cdot [\mathbf{x}] \equiv [\mathbf{x}]$ for all $s \in \HH$, i.e., $[\mathbf{x}]$ is a fixed point. Let
	\begin{align}
		\overline{x}_i = \frac{1}{|\HH|} \sum_{s \in \HH} s \cdot x_i, \quad i = A,B,C.
	\end{align}
	Then for $s' \in \HH$, $s' \cdot \overline{x}_i = \overline{x}_i$. Since the translation action on $\HH$ is transitive, the local coefficient vectors restricted to any fiber $\{\sigma\}$ must be identical and thus the covering map \eqref{eq:projection} sends $\overline{x}_i$ to a codeword $\overline{x}_i'$ in the HGP code \cite{LSWLL2026Theory}. By assumption, 
	\begin{align}
		s \cdot [\mathbf{x}] \equiv [\mathbf{x}] \implies s \cdot x_i + \delta^0 u_i = x_i
		\implies \overline{x}_i + \delta \overline{u}_i = x_i 
	\end{align}
	for some $\overline{u}_i$. In other words, $[\mathbf{x}]$ can be represented by $\overline{x}_i$, which has a natural counterpart in the HGP codes, denoted by $[\mathbf{x}']$.
\end{proof}

For nonfixed points, $\{s \cdot [\mathbf{x}]\}$ is a nontrivial orbit. By Lagrange's theorem for finite groups, $|\{s \cdot [\mathbf{x}]\}|$ divides the group cardinality. Suppose $|\HH| = \omega^k$ is a power of some prime $\omega$. Let $L_G^{\mathrm{HGP}}$ be the collection of tuples of gauged cohomological classes of these HGP codes. Then
\begin{align}\label{eq:mod_sheaf_HGP}
	| L_G | \equiv | L_G^{\mathrm{HGP}} | \mod \omega.
\end{align}
Eq.~\eqref{eq:mod_sheaf_HGP} largely simplifies the task of showing that the gauged sheaf code dimension is not a power of $q$: we only need to compute $| L_G^{\mathrm{HGP}} |$ for the HGP codes. Moreover, the code induction scheme proposed in \cite{LSWLL2026Theory} can further reduce the task to the computation on three HGP codes of constant-size. To be precise, given any constant-size base graph $G_0$, the Cartesian product defines the cubical complex $X_1'$. We can either apply a constant-size lift in each graph direction to produce the next Cartesian-product complex $X_2'$, or take a large abelian lift of $X_1'$ to produce the high-dimensional expander $X_1$. By doing so inductively, we obtain a double sequence of cubical complexes
\begin{equation}\label{eq:induct_lift}
	\begin{array}{ccccc}
		X_1' & \xrightarrow{\HH_1'} & X_2' & \xrightarrow{\HH_2'} & \cdots \\
		{\scriptstyle \HH_1}\downarrow && {\scriptstyle \HH_2}\downarrow && \\
		X_1 & \dashrightarrow & X_2 & \dashrightarrow & \cdots
	\end{array}
\end{equation}
The upper row consists of HGP complexes, whereas the lower row consists of HDXs. Equipped with the sheaves $\F,\mathcal{G},\mathcal{H}$ and their pullbacks, the double sequence yields asymptotic families of sheaf codes and HGP codes, respectively. Since $X_i'$ and $X_{i+1}'$ are connected by lifts via the group $\HH_i'$, Lemma \ref{lemma:group_action} also holds between HGP codes. For simplicity, let $\HH_i' \equiv C_3$ be the cyclic group of three elements and let $\HH_i = C_3 \times \cdots \times C_3$ be the product with $|\HH_i| = \Theta(\exp(|X_i'|))$ (see \cite{Agarwal2016,Jeronimo2021}), then by Eq.~\eqref{eq:mod_sheaf_HGP} we have
\begin{align}\label{eq:sheaf_HGP_mod_3}
	| L_G^{i+1} | \equiv | (L_G^{\mathrm{HGP}})^{i+1} | \equiv | (L_G^{\mathrm{HGP}})^i | \equiv |L_G^i| \equiv \cdots \equiv | (L_G^{\mathrm{HGP}})^1 | \mod 3.
\end{align}


We now provide an explicit example of initial HGP codes for which $|(L_G^{\mathrm{HGP}})^1|$ is divisible by $3$, yielding an asymptotic family of gauged sheaf codes with constant coding rate and almost good distance whose code states are all long-range magic states. According to \cite{LLL2026nontrivial}, for sufficiently large $n$ and $q$, we can find a pair of two-way product expanding punctured RS codes $(\C_1,\C_2)$ generated by $(h_1,h_2)$ such that
\begin{align}\label{eq:HGP_rank}
	\rk h_1 = \rk h_2 = m = n/10 = \text{ power of } 3
\end{align}
Note that with this choice, $n$ is an even number. We use \eqref{eq:HGP_sheaf}. Usually, $h_3, h_4$ are generator matrices of $(\C_2 \ast \C_2)^\perp$,  $(\C_1 \ast \C_1)^\perp$, respectively. If that is the case, their rank equals $8m + 1$. However, we deliberately discard the last row of each matrix, so the actual rank is $8m$. This allows us to prove that $| (L_G^{\mathrm{HGP}})^1 |$ is divisible by $3$, yet does not affect the existence of $\xi$ and the product-expansion property \cite{LLL2026nontrivial}.  

Given $27m > n$, we define the base graph $G_0$ by taking vertices as elements of $C_{27m}$. By standard results on expander graphs \cite{Alon1986Expander,Alon1994}, there is a subset $\{a^1,\ldots,a^n\} \subset C_{27m}$ such that the graph $G_0$ defined by
\begin{align}\label{eq:HGP_base}
	V_0 = \{v \in C_{27m} \}, \quad E_0 = \{ ( v, v+a^i ): v \in V_0, i = 1,\ldots,n \}
\end{align}
is almost Ramanujan with second largest eigenvalue equal to $O(\sqrt{n \log n})$. Then, the double cover $G_0 \times K_2$ is taken as in Section \ref{sec:cubical} and $X_1'$ is defined by the 2-fold Cartesian product of $G_0 \times K_2$. For simplicity, we denote it by $X'$ with three HGP codes $C^\bullet(X',\F)$, $C^\bullet(X',\mathcal{G})$ and $C^\bullet(X',\mathcal{H})$.

Let $s \in C_{27m}$. Then $s + v \in C_{27m} = V_0$ and $( (s+v), (s+v+a^i)) \in E_0$ show that any translation action of $C_{27m}$ preserves the graph. As $s$ ranges over the translation group, the resulting group action on edges has $n$ orbits, each containing $27m$ edges. Let $\delta: \mathbb{F}_q^{2V_0} \rightarrow \mathbb{F}_q^{2E_0}$ be the coboundary operator of the double cover. Obviously, the induced permutations $s_v$ and $s_e$ satisfy $s_e \cdot \delta = \delta \cdot s_v$.

Now consider the HGP coboundary operators of $C^\bullet(X',\F)$:
\begin{align}\label{eq:HGP_coboundary_magic}
	\delta^1 = \begin{pmatrix} 
		I_{2E_0} \otimes H_2 & H_1 \otimes I_{2E_0}  
	\end{pmatrix}, \quad
	\delta^0 = \begin{pmatrix} H_1 \otimes I_{2mV_0} \\ I_{2mV_0} \otimes H_2 \end{pmatrix}
\end{align}
where $H_i$ is the sheaved coboundary operator of $\delta$ as defined after \eqref{eq:DLV_scalar}. Similarly to the above case of permuting local coefficient vectors within fibers of $X$, $X'$ admits a natural $C_{27m} \times C_{27m}$ action defined by
\begin{align}\label{eq:permute_HGP}
	\big( (s_1,s_2) \cdot x_A' \big)(\sigma) \coloneq x_A'( (-s_1, -s_2) \cdot \sigma), \quad (-s_1,-s_2) \cdot \sigma = [(v_1 - s_1, v_2 - s_2);\,(a_j)_{j\in S},\,(b_j)_{j\notin S}]
\end{align}
Using the representative permutation matrices on vertices and edges, we have
\begin{align}
	\begin{pmatrix} 
		I_{2E_0} \otimes H_2 & H_1 \otimes I_{2E_0}  
	\end{pmatrix} \begin{pmatrix} s_e \otimes s_v & 0 \\ 0 & s_v \otimes s_e	\end{pmatrix} & = 
	s_e \otimes s_e \begin{pmatrix} 
		I_{2E_0} \otimes H_2 & H_1 \otimes I_{2E_0}  
	\end{pmatrix} \\
	\begin{pmatrix} H_1 \otimes I_{2mV_0} \\ I_{2mV_0} \otimes H_2 \end{pmatrix} s_v \otimes s_v & =
	\begin{pmatrix} s_e \otimes s_v & 0 \\ 0 & s_v \otimes s_e \end{pmatrix}  \begin{pmatrix} H_1 \otimes I_{2mV_0} \\ I_{2mV_0} \otimes H_2 \end{pmatrix} 
\end{align}
Let $s = (s_1,s_2)$. If $x_A'^{(1)} = x_A'^{(2)} + \delta u_A'$, then
\begin{align}
	s \cdot x_A'^{(1)} = s \cdot x_A'^{(2)} + s \cdot \delta^0 u_A' = s \cdot x_A'^{(2)} + \delta^0 (s \cdot u_A').
\end{align}
As a result, $s \cdot [\mathbf{x}']$ is well-defined and the image of any gauged cohomological class is still gauged. Therefore, Lemma \ref{lemma:group_action} holds in the present case of HGP codes with a special choice of the base graph $G_0$ and
\begin{align}\label{eq:mod_HGP_all_ones}
	| L_G^{\mathrm{HGP}} | \equiv | L_G^{\mathrm{fixed}} | \mod 3.
\end{align}
where $L_G^{\mathrm{fixed}}$ consists of all tuples of gauged cohomological classes $([x_A'],[x_B'],[x_C'])$ such that $s \cdot x_i' \equiv x_i'$ for certain representatives. As a caveat, the double cover $G_0 \times K_2$ is bipartite and the translation action splits its vertex set into two orbits.

Recall that any cohomological class admits a representative that can be expanded by canonical logical representatives in Section \ref{sec:toric}. Given $s \cdot [x_i'] \equiv [x_i']$, suppose $x_i''$ is a linear combination of canonical logical representatives. As in Lemma \ref{lemma:group_action}
\begin{align}
	\overline{x}_i'' \coloneq \frac{1}{| 27m \times 27m|} \sum_{s \in C_{27m} \times C_{27m}} s \cdot x_i''
\end{align} 
satisfies $s \cdot \overline{x}_i'' \equiv  \overline{x}_i''$ and we simply identify it with $x_i'$. Note that each canonical logical representative is also averaged in the expansion. For example, let $s = (s_1,s_2)$ with fixed $s_1$ and let $e \otimes \zeta$ be one of the canonical logical representatives. Then
\begin{align}
	\sum_{s_2 \in C_{27m}} (s_1,s_2) \cdot e \otimes \zeta = (s_1 \cdot e) \otimes \sum_{s_2 \in C_{27m}} s_2 \cdot \zeta.
\end{align}
Since $H_2 (s_2 \cdot \zeta) = s_2 H_2 \zeta = 0$ and since the local Vandermonde matrices are always of full rank, when $G_0$ and its double cover $G_0 \times K_2$ are taken to be connected, translation invariance guarantees that the local coefficient vectors of $\zeta$ are identical over all 0-cubes $[v_2;b_2]$. 

As for $s_1 \cdot e = [s_1 + v_1; a_1]$, different free variables could merge if they lie in the same orbit of the group action. With all these facts, we write
\begin{align}\label{eq:translation_logical}
	x_i' = \sum_{r_i} x_{i,H}^{(r_i)} + \sum_{s_i} x_{i,V}^{(s_i)}, \quad i = A, B, C,
\end{align}
where we separate each $x_i'$ into the vertical and horizontal parts. We use $r_i$ with $|\{r_i\}|$ being the number of orbits of the free variables. Obviously $|\{r_i\}| \leq n$ and for further computation, their locations need to be clearly determined. For brevity, we only study \eqref{eq:HGP_coboundary_magic} defined by $\F = (h_1, h_2)$. Let $H_1$ be the sheaved coboundary operator of $G_0 \times K_2$ via $h_1$. For any $z \in \mathbb{F}_q^{2 m V_0}$, we have
\begin{align}
	(H_1 z)([v_1;a_1^r]) = (h_1 a_1^r)^T z([v_1;0]) + (h_1 a_1^r)^T z([v_1 + a_1^r;1]) \in \mathbb{F}_q.
\end{align}
After averaging, for any $v_1$,
\begin{align}
	\overline{H_1 z} ([v_1;a_1^r]) = (H_1 \overline{z}) ([v_1;a_1^r]) \equiv (h_1 a_1^r)^T (\overline{z}_0 + \overline{z}_1)
\end{align}
where we use $\overline{z}_0, \overline{z}_1$ to represent the averaged local coefficient vectors. Going through all $a_1^1,\ldots,a_1^n$, $\overline{H_1 z}$ can be represented by a codeword in $\Ima h_1^T = \C_1$. Since each $e \otimes \zeta$ in Eq.~\eqref{eq:translation_logical} is also averaged, it is a coboundary if and only if it can be expressed by $\C_1$ or $\C_2$ codewords in different directions. Since $h_i$ is a Vandermonde matrix, any of its $m$ columns are linearly independent and thus by taking $r_A, s_A = m+1,\ldots,n$, any nonzero expansion in Eq.~\eqref{eq:translation_logical} is a nontrivial cocycle. For the same reason, for code blocks $B$ and $C$, we have
\begin{align}
	& m+1 \leq r_B \leq n, \quad 8m+1 \leq s_B \leq n, \\
	& 8m+1 \leq r_C \leq n, \quad m+1 \leq s_C \leq n.
\end{align} 

For $\left\langle \eta_A \smile x_B \smile x_C, \xi \right\rangle$, permuting the positions of cubes in $X'$ gives 
\begin{align}
	\left\langle \eta_A \smile x_B \smile x_C, \xi \right\rangle = \left\langle (s \cdot \eta_A) \smile (s \cdot x_B) \smile  (s \cdot x_C), \xi \right\rangle 
\end{align}
If $s \cdot x_i \equiv x_i$,
\begin{align}
	\left\langle \eta_A \smile x_B \smile x_C, \xi \right\rangle = 0 \iff \sum_{s \in C_{27m} \times C_{27m}}	\left\langle (s \cdot \eta_A) \smile x_B \smile x_C, \xi \right\rangle = 0.
\end{align}
Note that the condition holds only for an odd number of translations. Otherwise, the right-hand side above becomes trivial.

As before, $\overline{\eta}_A \coloneq \frac{1}{|27m \times 27m|} \sum_{s \in C_{27m} \times C_{27m}} s \cdot \eta_A$ has a constant local coefficient vector, still denoted by $\overline{\eta}_A \in \mathbb{F}_q^{m \times m}$, over all 0-cubes. Let
\begin{align}
	C_A = h_1^T \overline{\eta}_A h_2,\qquad C_B = h_1^T \overline{\eta}_B h_3,\qquad C_C = h_4^T \overline{\eta}_C h_2.
\end{align}
For $i = A,B,C$, let $x_{i,H}(r,\text{-}), x_{i,V}(\text{-},s)$ be the local coefficients of $x_i$ in Eq.~\eqref{eq:translation_logical}. For simplicity, we omit the ``prime'' and let $1 \leq r,s \leq n$. For those $r,s$ that do not appear in the expansion, simply set $x_{i,H}(r,\text{-}), x_{i,V}(\text{-},s) = 0$. In computing cup products, by \eqref{eq:cup_sheaf_2}, their horizontal and vertical evaluations via different local codes are:
\begin{align}\label{eq:uv}
	\begin{array}{c|cc}
		& u_i = (u_{i,rs}) & w_i = (w_{i,rs}) \\ \hline
		A & x_{A,H} h_2 & h_1^T x_{A,V} \\
		B & x_{B,H} h_3 & h_1^T x_{B,V} \\
		C & x_{C,H} h_2 & h_4^T x_{C,V}.
	\end{array}
\end{align}
For example $u_{B,rs} = x_{B,H}(r,\text{-}) (h_3 a_2^{s})$ and $w_{B,rs} = (h_1 a_1^{r})^T x_{B,V}(\text{-},s)$. Then
\begin{align}\label{eq:cup_A}
	& \langle \overline{\eta}_A \smile x_B' \smile x_C', \xi \rangle \notag \\
	= & \sum_{(v_1,v_2) \in V_0^2} \sum_{r,s} (C_A)_{rs} (u_{B,rs}w_{C,rs} + w_{B,rs} u_{C,rs}) 
	= \sum_{r,s} (C_A)_{rs} (u_{B,rs} w_{C,rs} + w_{B,rs} u_{C,rs}) \\
	= & \sum_{r,s} \sum_{k,l} h_{1, kr} \overline{\eta}_{A,kl} h_{2, ls} (u_{B,rs} w_{C,rs} + w_{B,rs} u_{C,rs})
	= \sum_{a,b} \overline{\eta}_{A,ab} \sum_{r,s} h_{1, ar} (u_{B,rs} w_{C,rs} + w_{B,rs} u_{C,rs}) h_{2, bs} \notag
\end{align}
because $V_0^2 = (27m)^2$ is still odd and translation invariance implies that the contribution of each 2-cube to the pairing is identical. We have similar identities for  $\left\langle x_A' \smile \overline{\eta}_B \smile x_C', \xi \right\rangle$ and $\left\langle x_A' \smile x_B' \smile \overline{\eta}_C, \xi \right\rangle$.  Since the local coefficients $\overline{\eta}_i$ are arbitrary, all the pairings vanish exactly when
\begin{align}\label{eq:three_conditions}
	h_1(u_B \odot w_C + w_B \odot u_C) h_2^T & = 0,\\
	h_1(u_A \odot w_C + w_A \odot u_C) h_3^T & = 0,\\
	h_4(u_A \odot w_B + w_A \odot u_B) h_2^T & = 0.
\end{align}
The $\odot$ here is entrywise Hadamard multiplication. 

The local matrices $h_i$ are (weighted) Vandermonde matrices:
\begin{equation}\label{eq:Vandermonde}
	\begin{aligned}
		h_i(\text{-},r) & = (1,\beta_{i,r},\ldots,\beta_{i,r}^{m-1})^T
		\quad (i=1,2), \\
		h_4(\text{-},r) & = \lambda_{1,r}(1,\beta_{1,r},\ldots,\beta_{1,r}^{8m-1})^T,\\
		h_3(\text{-},r) & = \lambda_{2,r}(1,\beta_{2,r},\ldots,\beta_{2,r}^{8m-1})^T.
	\end{aligned}
\end{equation}
with respect to two sets of distinct evaluation points $(\beta_{i,1},\ldots,\beta_{i,n})$, $i = 1,2$ and Lagrangian multipliers $\lambda_{i,r} = \prod_{j \neq r}(\beta_{i,r}-\beta_{i,j})^{-1}$. Then we explicitly expand Eq.~\eqref{eq:cup_A} by using
\begin{align}
	\begin{aligned}
		u_{B,rs} & = \lambda_{2,s} \sum_{j=0}^{8m-1} x_{B,H}(r,j) \beta_{2,s}^j, &
		w_{C,rs} & = \lambda_{1,r} \sum_{i=0}^{8m-1} x_{C,V}(i,s) \beta_{1,r}^i, \\
		w_{B,rs} & = \sum_{i=0}^{m-1} x_{B,V}(i,s) \beta_{1,r}^i, &
		u_{C,rs} & = \sum_{j=0}^{m-1} x_{C,H}(r,j) \beta_{2,s}^j.
	\end{aligned}
\end{align}
and $(C_A)_{rs} = \sum_{a,b=0}^{m-1} \overline{\eta}_{A,ab} \beta_{1,r}^a \beta_{2,s}^b$, which yields:
\begin{align}\label{eq:cup_A_expand}
	\begin{aligned}
		& \sum_{r = m+1}^n \sum_{s = m+1}^n \lambda_{1,r}\lambda_{2,s} \sum_{i,j=0}^{8m-1} \beta_{1,r}^{a+i} \beta_{2,s}^{b+j} x_{B,H}(r,j) x_{C,V}(i,s) \\
		&\quad +
		\sum_{r = 8m+1}^n \sum_{s = 8m+1}^n \sum_{i,j=0}^{m-1} \beta_{1,r}^{a+i} \beta_{2,s}^{b+j} x_{C,H}(r,j) x_{B,V}(i,s) = 0,
		\qquad 0 \leq a,b < m.
	\end{aligned}
\end{align}
Similarly, the pairing involving $\overline{\eta}_{B,ab}$ with $0 \leq a < m$ and $0 \leq b < 8m$ is
\begin{align}\label{eq:cup_B_expand}
	\begin{aligned}
		& \sum_{r = m+1}^n \sum_{s = m+1}^n \lambda_{1,r} \lambda_{2,s} \sum_{i = 0}^{8m-1} \sum_{j = 0}^{m-1} \beta_{1,r}^{a+i} \beta_{2,s}^{b+j}
		x_{A,H}(r,j)x_{C,V}(i,s) \\
		& \quad + \sum_{r = 8m+1}^n \sum_{s=m+1}^n \lambda_{2,s} \sum_{i,j=0}^{m-1} \beta_{1,r}^{a+i} \beta_{2,s}^{b+j} x_{C,H}(r,j) x_{A,V}(i,s) = 0.
	\end{aligned}
\end{align}
For $\overline{\eta}_{C,ab}$ with $0 \leq a < 8m$ and $0 \leq b < m$, we have
\begin{align}\label{eq:cup_C_expand}
	\begin{aligned}
		&\sum_{r=m+1}^n \sum_{s=8m+1}^n \lambda_{1,r} \sum_{i,j=0}^{m-1} \beta_{1,r}^{a+i} \beta_{2,s}^{b+j} x_{A,H}(r,j) x_{B,V}(i,s) \\
		& \quad + \sum_{r=m+1}^n \sum_{s=m+1}^n \lambda_{1,r} \lambda_{2,s} \sum_{i=0}^{m-1} \sum_{j=0}^{8m-1} \beta_{1,r}^{a+i} \beta_{2,s}^{b+j} x_{B,H}(r,j) x_{A,V}(i,s) = 0.
	\end{aligned}
\end{align}

We simplify the above equations by defining six matrices as follows:
\begin{equation}\label{eq:hat_matrices}
	\begin{aligned}
		\widehat{x}_{A,H}(\ell,j) & = \sum_{r>m} \lambda_{1,r}\beta_{1,r}^{\ell} x_{A,H}(r,j), &
		\widehat{x}_{B,H}(\ell,j) & = \sum_{r>m} \lambda_{1,r}\beta_{1,r}^{\ell} x_{B,H}(r,j), \\
		\widehat{x}_{C,H}(\ell,j) & = \sum_{r>8m} \beta_{1,r}^{\ell} x_{C,H}(r,j), \\
		\widehat{x}_{A,V}(i,\ell) & = \sum_{s>m} \lambda_{2,s}\beta_{2,s}^{\ell} x_{A,V}(i,s), &
		\widehat{x}_{C,V}(i,\ell) & = \sum_{s>m} \lambda_{2,s}\beta_{2,s}^{\ell} x_{C,V}(i,s), \\
		\widehat{x}_{B,V}(i,\ell) & = \sum_{s>8m} \beta_{2,s}^{\ell} x_{B,V}(i,s).
	\end{aligned}
\end{equation}
For fixed \(i,j,a,b\), the first sum in Eq.\eqref{eq:cup_A_expand} is
\begin{align}
	\left(\sum_{r>m}\lambda_{1,r}\beta_{1,r}^{a+i}x_{B,H}(r,j)\right) \left(\sum_{s>m}\lambda_{2,s}\beta_{2,s}^{b+j}x_{C,V}(i,s)\right) = \widehat{x}_{B,H}(a+i,j) \widehat{x}_{C,V}(i,b+j)
\end{align}
Rearranging other terms gives
\begin{align}
	& \sum_{i,j=0}^{8m-1} \widehat{x}_{B,H}(a+i,j) \widehat{x}_{C,V}(i,b+j)
	+ \sum_{i,j=0}^{m-1} \widehat{x}_{C,H}(a+i,j) \widehat{x}_{B,V}(i,b+j) = 0,
	&& 0 \leq a,b < m, \label{eq:cup_A_hat} \\
	& \sum_{i=0}^{8m-1} \sum_{j=0}^{m-1} \widehat{x}_{A,H}(a+i,j) \widehat{x}_{C,V}(i,b+j) 
	+ \sum_{i,j=0}^{m-1} \widehat{x}_{C,H}(a+i,j) \widehat{x}_{A,V}(i,b+j) = 0,
	&& \substack{0 \leq a < m \\ 0 \leq b < 8m}, \label{eq:cup_B_hat} \\
	& \sum_{i,j=0}^{m-1} \widehat{x}_{A,H}(a+i,j) \widehat{x}_{B,V}(i,b+j) 
	+ \sum_{i=0}^{m-1} \sum_{j=0}^{8m-1} \widehat{x}_{B,H}(a+i,j) \widehat{x}_{A,V}(i,b+j) = 0,
	&& \substack{0 \leq a < 8m \\ 0 \leq b < m}. \label{eq:cup_C_hat}
\end{align}
Given the range of $i,j,a,b$, the sizes of these hatted matrices are
\begin{equation}\label{eq:hatsizes}
	\begin{array}{c|cc}
		& \widehat{x}_{i,H} & \widehat{x}_{i,V} \\ \hline
		A & (9m-1) \times m & m \times (9m-1) \\
		B & (9m-1) \times 8m & m \times (2m-1) \\
		C & (2m-1) \times m & 8m \times (9m-1).
	\end{array}
\end{equation}
Let $N_m$ be the number of six-tuples of hatted matrices in \eqref{eq:hatsizes} satisfying Eqs.~\eqref{eq:cup_A_hat}--\eqref{eq:cup_C_hat}. We note that, for example, $\widehat{x}_{A,H}(\ell,j) = \sum_{r>m} \lambda_{1,r}\beta_{1,r}^{\ell} x_{A,H}(r,j)$ is defined by applying a submatrix $(\lambda_{1,r}\beta_{1,r}^{\ell})$ of size $(9m - 1) \times 9m$ of the weighted Vandermonde matrix. The corresponding linear map is surjective and thus for any fixed $\widehat{x}_{A,H}$, it has a preimage of size equal to $q^m$. The other five hatted matrices have preimages of sizes equal to $q^m,q^{8m},q^m,q^m,q^{8m}$, respectively. As a result, solving Eqs.~\eqref{eq:cup_A_expand}--\eqref{eq:cup_C_expand} reduces to solving for the hatted matrices with
\begin{align}
	|L_G^{\mathrm{fixed}}| = q^{20m}N_m.
\end{align}

\begin{claim}
	Let $m$ be a power of $3$ and let $q=2^s > n = 10m$ with odd $s$. If $3 | q + 1$, then
	\begin{align}
		|L_G^{\mathrm{HGP}}| \equiv |L_G^{\mathrm{fixed}}| \equiv q^{20m}N_m \equiv 0 \pmod{3}.
	\end{align}
\end{claim}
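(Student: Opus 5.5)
The plan is to reduce the claim to a statement about $N_m$ alone, and then to count $N_m$ modulo $3$ using the linear structure of Eqs.~\eqref{eq:cup_A_hat}--\eqref{eq:cup_C_hat}. The first congruence $|L_G^{\mathrm{HGP}}| \equiv |L_G^{\mathrm{fixed}}| \pmod 3$ is Eq.~\eqref{eq:mod_HGP_all_ones}. The identity $|L_G^{\mathrm{fixed}}| = q^{20m}N_m$ was derived just before the claim. Since $s$ is odd, $q = 2^s \equiv -1 \pmod 3$, which is exactly the hypothesis $3 \mid q+1$. Hence $q^{20m} \equiv 1 \pmod 3$, and it remains to prove $N_m \equiv 0 \pmod 3$. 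The general principle used throughout is that every factor $q^k$ appearing in a count contributes $(-1)^k$ modulo $3$.

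To organize the count of $N_m$ I would exploit the block structure. Every term in the three equations is a product of one horizontal matrix of one block and one vertical matrix of a different block. So once the $A$-matrices $(\widehat{x}_{A,H},\widehat{x}_{A,V})$ and the $B$-matrices $(\widehat{x}_{B,H},\widehat{x}_{B,V})$ are fixed, two things happen. First, Eq.~\eqref{eq:cup_C_hat} becomes a condition on $(A,B)$ only. Second, Eqs.~\eqref{eq:cup_A_hat} and \eqref{eq:cup_B_hat} become a homogeneous linear system in the $C$-unknowns $(\widehat{x}_{C,H},\widehat{x}_{C,V})$. Writing $M_{A,B}$ for the coefficient matrix of that system, I get
\begin{align}
N_m = \sum_{(A,B)\,:\,\eqref{eq:cup_C_hat}\text{ holds}} q^{\dim\ker M_{A,B}} \equiv \sum_{(A,B)} (-1)^{\dim \ker M_{A,B}} \pmod 3 .
\end{align}
The condition \eqref{eq:cup_C_hat} is itself bilinear in $(A,B)$. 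I would linearize it in the same way: fix $A$, so that it becomes a linear system in $B$. Alternatively, I would write the indicator of each bilinear constraint as a character sum, $q^{-E}\sum_{\boldsymbol\eta}\chi(\langle \boldsymbol\eta, \text{equations}\rangle)$. This is the same Fourier inversion that produced the dual variables $\overline{\eta}_i$ in Eq.~\eqref{eq:cup_A}. After summing out the innermost block, every term becomes a power of $q$ times a count of kernel vectors of an explicit structured matrix.

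The evaluation then amounts to computing parities of dimensions and ranks. The matrices $M_{A,B}$ are Hankel/Toeplitz-type: the index patterns $a+i$ and $b+j$ in Eq.~\eqref{eq:cup_A_hat} show that they are built by sliding $\widehat{x}_{B,H}$ and $\widehat{x}_{B,V}$. The ambient dimensions are read off from \eqref{eq:hatsizes}: the $C$-unknowns number $(2m-1)m + 8m(9m-1)$, and the equation counts are $m^2$ and $8m^2$. The aim is to show that the signed sum vanishes modulo $3$. Two natural tools are available. One is the scaling symmetry $(\widehat{x}_{i,H},\widehat{x}_{i,V}) \mapsto (\mu\,\widehat{x}_{i,H},\nu\,\widehat{x}_{i,V})$, which preserves all equations because every term is a single $H\cdot V$ product. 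The other is the zero/nonzero decomposition of each block. These reduce the sum to the zero tuple plus orbits of size $(q-1)^k \equiv 1 \pmod 3$. The residual contributions should then combine into expressions such as $q^{a}-q^{b}$, or $q^{a}(1+q)$, with $a-b$ odd. Such expressions are divisible by $q+1$, and hence by $3$. The hypothesis that $m$ is a power of $3$ should enter here: it fixes the parities of exponents such as $m^2$ and $9m^2-m$, and I would also use it to cancel any stray multiples of $m$ modulo $3$.

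The main obstacle is exactly this parity bookkeeping for $\dim\ker M_{A,B}$ as $(A,B)$ ranges over all solutions. The rank of a Hankel-structured system can drop in complicated ways. My first attempt would be to choose the order of summation so that the innermost system is always of full rank or always trivial. Summing out $C$ first looks best, since its unknown matrices are the largest. If that fails, I would fall back on the explicit Gauss-sum evaluation. There the sum over the innermost block gives $q^{\#\text{vars}}$ times an indicator that a linear map vanishes, which turns the problem into a recursive count with the same structure but strictly smaller matrices.
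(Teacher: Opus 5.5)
Your reduction to $N_m\equiv 0\pmod 3$ is correct. So is your observation that, once the $A$- and $B$-matrices are fixed, Eqs.~\eqref{eq:cup_A_hat} and \eqref{eq:cup_B_hat} are linear in the $C$-unknowns, which gives $N_m\equiv\sum_{(A,B)}(-1)^{\dim\ker M_{A,B}}\pmod 3$. But this is a reformulation, not a proof. The signed sum runs over the quadratic variety cut out by \eqref{eq:cup_C_hat}, along which the rank of the Hankel-type system $M_{A,B}$ jumps in an uncontrolled way, and you leave exactly this evaluation open.

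The tools you name cannot close it. The scaling group $(\mu,\nu)\in(\mathbb{F}_q^\times)^2$ has order $(q-1)^2$, and $q-1=2^s-1\equiv 1\pmod 3$ for odd $s$. So every orbit has size prime to $3$, and the symmetry produces no cancellation modulo $3$; it only replaces counting points by counting orbits, which is no easier. Your divisibility heuristic is also reversed: $q^a-q^b\equiv(-1)^a-(-1)^b\pmod{q+1}$ vanishes only when $a-b$ is \emph{even}, and equals $\pm 2$ when $a-b$ is odd. The Gauss-sum fallback is not carried out, and nothing indicates why that recursion would end in a multiple of $3$.

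What is missing is a symmetry of the equations whose order is divisible by $3$. The paper uses the anisotropic torus $T(\mathbb{F}_q)=\{g(u,v):u^2+uv+v^2=1\}$, a cyclic group of order $q+1$, acting on binary forms $V_D=\mathbb{F}_q[x,y]_D$ and on their duals.

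\begin{itemize}
\item The Vandermonde structure of $h_1,\dots,h_4$ lets each hatted matrix be read as a tensor in $V_{D_1}^*\otimes V_{D_2}$ or $V_{D_1}\otimes V_{D_2}^*$. For instance, the $j$-th column of $\widehat{x}_{C,H}$ is the functional $f\mapsto\sum_r x_{C,H}(r,j)f(1,\beta_{1,r})$ on $V_{2m-2}$.
\item The index shifts $a+i$, $b+j$ in Eqs.~\eqref{eq:cup_A_hat}--\eqref{eq:cup_C_hat} then encode multiplication of forms followed by the dual pairing. This is equivariant, so the solution set $Y$ is $T\times T$-stable.
\item For an algebraic action of this one-dimensional anisotropic torus, each non-fixed orbit through a rational point carries $q+1$ rational points. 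Hence $N_m=|Y(\mathbb{F}_q)|\equiv|Y^{T\times T}(\mathbb{F}_q)|\pmod{q+1}$.
\item Since $\dim V_D^T$ is $1$ for even $D$ and $0$ for odd $D$ (all degrees here are below $q+1$), and $m$ is odd, only $\widehat{x}_{B,V}=b_0P\otimes\beta$ and $\widehat{x}_{C,H}=c_0\,\alpha\otimes Q$ can be invariant.
\item The one surviving equation \eqref{eq:cup_A_hat} forces $b_0c_0=0$. So $N_m\equiv 2q-1\equiv -3\pmod{q+1}$, and $3\mid q+1$ finishes the proof.
\end{itemize}

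This also shows where the hypothesis on $m$ actually enters. Oddness of $m$ fixes the parities of $m-1,\,2m-2,\,8m-1,\,9m-2$ in the fixed-point computation. ``Power of $3$'' is what makes $C_{27m}\times C_{27m}$ a $3$-group in \eqref{eq:mod_HGP_all_ones}; it is not a device for cancelling stray multiples of $m$.
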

\begin{proof}
	For any odd $s$, we always have $2^s \equiv -1 \pmod{3}$ and thus the assumption holds. We are going to prove by representation theory that $N_m \equiv 2q-1 \equiv -3 \pmod{q+1}$. Since $3 | q + 1$, the result follows.

	We briefly introduce some basic notions: consider the algebraic group
	\begin{align}
		T(\mathbb{F}_q) \coloneq \left \{ g(u,v) = \begin{pmatrix} u & v \\ v & u+v \end{pmatrix}: u,v \in \mathbb{F}_q,\ \det g = u^2 + uv + v^2 = 1 \right\}.
	\end{align}
	Then for degrees $0 \leq D \leq 9m-2 < q+1$, consider the space of homogeneous polynomials and its dual:
	\begin{align}\label{eq:poly_space}
		V_D = \mathbb{F}_q[x,y]_D = \left\{ \sum_{i=0}^D c_i x^{D-i} y^i: c_i \in \mathbb{F}_q \right\},
		\qquad V_D^\ast = \mathrm{Hom}_{\mathbb{F}_q}(V_D, \mathbb{F}_q).
	\end{align}
	Both have dimension $D+1$. For $g(u,v) \in T(\mathbb{F}_q)$, we define the action
	\begin{align}\label{eq:poly_action}
		(g(u,v) \cdot f)(x,y) = f(ux + vy, vx + (u+v)y).
	\end{align}
	Let $\rho_D(g) \in \mathrm{GL}_{D+1}(\mathbb{F}_q)$ be its matrix representation under the ordered basis $x^D, x^{D-1}y, \ldots,y^D$. The dual basis consists of coefficient extractors $\epsilon_i(f) = c_i$. If \(F = \sum_i F_i \epsilon_i\), then
	\begin{align}\label{eq:dual_action}
		F(f) = \sum_{i=0}^D F_i c_i, \qquad (g \cdot F)(f) = F(g^{-1} \cdot f),
	\end{align}
	Any $\mathbb{F}_q$-vector is algebraically fixed if it is fixed over the algebraic closure $\overline{\mathbb{F}}_q$ under the action of $T(\overline{\mathbb{F}}_q)$. Our case is simpler because we do not need to work with $\overline{\mathbb{F}}_q$ as the largest polynomial degree is much smaller than $q$. Then
	\begin{align}\label{eq:fixed_space}
		\begin{aligned}
			V_D^T & = \{f \in V_D: g \cdot f = f \text{ for every } g \in T(\mathbb{F}_q)\}, \\
			(V_D^*)^T & = \{F \in V_D^*: g \cdot F = F \text{ for every }g\in T(\mathbb{F}_q) \}.
		\end{aligned}
	\end{align}
	A standard result \cite{Milne_2017} about the action \eqref{eq:poly_action} on polynomial space says that
	\begin{align}
		\dim V_D^T = \dim(V_D^*)^T = \begin{cases} 1, & D \text{ is even} \\ 0, & D \text{ is odd} \end{cases}
	\end{align}
	
	In our case, we rewrite the six hatted matrices as elements in the tensor product of the polynomial spaces and their duals. For example, consider $\widehat{x}_{C,H}$ of size $(2m-1) \times m$. We use its $j$-th column to define a functional $L_j: V_{2m-2} \to \overline{\mathbb{F}}_q$, for $0 \leq j < m$. Then for any polynomial $f = \sum_{\ell = 0}^{2m-2} f_\ell x^{2m-2-\ell} y^\ell$ with $0 \leq \ell \leq 2m-2$ and $0 \leq j < m$, we have
	\begin{align}\label{eq:hat_matrix_action}
		\begin{aligned}
			\widehat{x}_{C,H}(\ell,j) & = \sum_{r=8m+1}^{10m} \beta_{1,r}^{\ell} x_{C,H}(r,j) \implies L_j = \sum_{\ell=0}^{2m-2} \widehat{x}_{C,H}(\ell,j) \epsilon_\ell, \\
			L_j(f) & = \sum_{\ell=0}^{2m-2} \widehat{x}_{C,H}(\ell,j) f_\ell = \sum_{r = 8m+1}^{10m} x_{C,H}(r,j) f(1,\beta_{1,r}), \\
			\implies \widehat{x}_{C,H} & \coloneq \sum_{j=0}^{m-1} L_j\otimes x^{m-1-j} y^j \in V_{2m-2}^*\otimes V_{m-1}, 
		\end{aligned}
	\end{align}
	The action is defined by the tensor product of \eqref{eq:poly_action} and \eqref{eq:dual_action}: 
	\begin{align}\label{eq:tensor_action}
		(g,h) \cdot \widehat{x}_{C,H} = \rho_{2m-2}(g)^{-T} \widehat{x}_{C,H} \rho_{m-1}(h)^T
	\end{align}
	The spaces of all these matrices and the dimensions of their fixed subspaces under the actions are
	\begin{align}
		\begin{array}{c|c|c}
			\text{matrix} & \text{space} & \text{dimension fixed by } T\times T \\ \hline
			\widehat{x}_{A,H} & V_{9m-2}^*\otimes V_{m-1} & 0 \\
			\widehat{x}_{A,V} & V_{m-1}\otimes V_{9m-2}^* & 0 \\
			\widehat{x}_{B,H} & V_{9m-2}^*\otimes V_{8m-1} & 0 \\
			\widehat{x}_{B,V} & V_{m-1}\otimes V_{2m-2}^* & 1 \\
			\widehat{x}_{C,H} & V_{2m-2}^*\otimes V_{m-1} & 1 \\
			\widehat{x}_{C,V} & V_{8m-1}\otimes V_{9m-2}^* & 0
		\end{array}
	\end{align}
	
	Now, let $Y(\mathbb{F}_q)$ be the set of solutions of Eqs.~\eqref{eq:cup_A_hat}--\eqref{eq:cup_C_hat}, so $|Y(\mathbb{F}_q)| = N_m$. It is the affine algebraic set defined by $17m^2$ quadratic equations in the affine coordinate space with $166m^2-20m$ coordinates over $\mathbb{F}_q$. It is straightforward to check by the definition of Eq.~\eqref{eq:tensor_action} that the action preserves the set of solutions as multiplication of homogeneous polynomials and the dual pairing are equivariant. Therefore, the following two actions on the solution set and the corresponding invariant sets are well-defined: 
	\begin{align}
		& \qquad\qquad T \times Y \longrightarrow Y,\qquad (g,z) \longmapsto (g,I) \cdot z, \\
		& Y^{T \times \{1\}}(\mathbb{F}_q) = \{z \in Y(\mathbb{F}_q): (g,I) \cdot z = z\text{ for every } g \in T(\mathbb{F}_q)\}, \\
		& \qquad\qquad T \times Y^{T \times \{1\}} \longrightarrow Y^{T \times \{1\}},
		\qquad (h,z) \longmapsto (I,h) \cdot z, \\
		& Y^{T \times T}(\mathbb{F}_q) = \{z \in Y^{T \times \{1\}}(\mathbb{F}_q): (I,h) \cdot z = z \text{ for every } h \in T(\mathbb{F}_q)\}.
	\end{align}
	We use another standard result of representation theory and algebraic group \cite{Milne_2017}: if $T(\mathbb{F}_q)$ acts algebraically on an affine algebraic set $X$, then $|X(\mathbb{F}_q)| \equiv |X^T(\mathbb{F}_q)| \pmod{q+1}$. We apply this result twice to our case and obtain
	\begin{align}
		|Y(\mathbb{F}_q)| \equiv |Y^{T \times \{1\}}(\mathbb{F}_q)| \equiv |Y^{T \times T}(\mathbb{F}_q)| \pmod{q+1}.
	\end{align}
	
	We now turn to study the solutions that are fixed by the action. Let $P \in V_{m-1}^T$ and $Q \in V_{m-1}^T$ be nontrivial polynomials in the corresponding 1-dimensional spaces. By definition, for any $g \in T(\mathbb{F}_q)$, since it acts on the variables, $g \cdot (P^2) = (g\cdot P)^2 = P^2$. Similarly, $g \cdot (Q^2) = Q^2$. A nonzero polynomial has a nonzero square and thus $P^2$ and $Q^2$ span the one-dimensional spaces $V_{2m-2}^T$ and $V_{2m-2}^T$, respectively. They correspond to the unique invariant functionals $\alpha \in (V_{2m-2}^*)^T$ and $\beta \in (V_{2m-2}^*)^T$ with $\alpha(P^2) = \beta(Q^2) = 1$. Then we can express the two nontrivial invariant matrices by
	\begin{align}
		\widehat{x}_{B,V} = b_0 P \otimes \beta,\qquad
		\widehat{x}_{C,H} = c_0 \alpha \otimes Q, \qquad b_0,c_0 \in \mathbb{F}_q.
	\end{align}
	In coordinates, we have the following expansions:
	\begin{align}
		\begin{aligned}
			\widehat{x}_{B,V}(i,j) & = b_0 P_i \beta_j, && 0 \leq i < m, \quad 0 \leq j < 2m-1,\\
			\widehat{x}_{C,H}(i,j) & = c_0 \alpha_i Q_j, && 0 \leq i < 2m-1,\quad 0 \leq j < m.
		\end{aligned}
	\end{align}
	Here $P_i,Q_j$ are monomial coefficients, and $\alpha_i,\beta_j$ are dual-basis coefficients.
	
	Checking Eqs.~\eqref{eq:cup_A_hat}--\eqref{eq:cup_C_hat} again, we find that only the second term in Eq.~\eqref{eq:cup_A_hat} remains. Since this holds for any $0 \leq a < m$ and $0 \leq b < m$, we apply arbitrary testing polynomials $u_1 \in V_{m-1}$ and $u_2 \in V_{m-1}$ to the second sum:
	\begin{equation}\label{eq:twofactorproduct}
		\begin{aligned}
			& \sum_{a,b=0}^{m-1} u_{1,a} u_{2,b}
			\sum_{i,j=0}^{m-1} \widehat{x}_{C,H}(a+i,j) \widehat{x}_{B,V}(i,b+j) \\
			= & b_0 c_0 \left(\sum_{a=0}^{m-1} \sum_{i=0}^{m-1} u_{1,a} P_i \alpha_{a+i} \right)
			\left(\sum_{b=0}^{m-1} \sum_{j=0}^{m-1} u_{2,b} Q_j \beta_{b+j} \right) = b_0 c_0 \alpha(u_1 P) \beta(u_2 Q).
		\end{aligned}
	\end{equation}
	Since $u_1,u_2$ are arbitrary, we must have $b_0 c_0 = 0$ to force the sum to be zero. Consequently, 
	\begin{align}
		\begin{aligned}
			N_m = |Y(\mathbb{F}_q)| & \equiv |Y^{T \times \{1\}}(\mathbb{F}_q)| \equiv |Y^{T \times T}(\mathbb{F}_q)| \\
			& = \#\{(b_0,c_0) \in \mathbb{F}_q^2: b_0 c_0 = 0 \}
			= q + (q-1) = 2q-1 \equiv -3 \pmod{q+1}.
		\end{aligned}
	\end{align}
	Since $3 | q+1$, we can conclude that $N_m \equiv 0 \pmod{3}$.
\end{proof}

In conclusion, we have constructed a constant-size instance of an HGP code such that the gauged code dimension is a multiple of $3$. As a result, it cannot be any power of $q = 2^s$. Applying the code induction scheme in \eqref{eq:induct_lift} yields an asymptotic family of non-Abelian qLDPC codes with almost good parameters whose code spaces are filled with long-range magic states:
\begin{align}
	| L_G^{i+1} | \equiv | (L_G^{\mathrm{HGP}})^{i+1} | \equiv | (L_G^{\mathrm{HGP}})^i | \equiv |L_G^i| \equiv \cdots \equiv | (L_G^{\mathrm{HGP}})^1 | \equiv 0 \mod 3.
\end{align}
In order to attain good distance, we have to apply the non-Abelian lift in the construction of cubical complexes as explained in Remark \ref{remark:2D_dim} and Corollary \ref{coro:2D_dist}. However, the starting point of our method in this section assumes abelian lifts in Eq.~\eqref{eq:action_coboundary_commute}, and this identity does not hold for non-Abelian lifts. We leave this problem as a future research opportunity.


\subsection{Gauging measurements and magic state generation}

Finally, we demonstrate that the standard gauging/ungauging process \cite{Christos2026,Zhu2026} yields a joint magic state by using three sheaf code blocks over a 2-dimensional expander $X$: 
\begin{align}\label{eq:sheaf_ungauge_measurement}
	\begin{aligned}
		& C^0(X,\F) \xrightarrow{\delta^0} C^1(X,\F) \xrightarrow{\delta^1} C^2(X,\F), \\
		& C^0(X,\mathcal{G}) \xrightarrow{\delta^0} C^1(X,\mathcal{G}) \xrightarrow{\delta^1} C^2(X,\mathcal{G}), \\
		& C^0(X,\mathcal{H}) \xrightarrow{\delta^0} C^1(X,\mathcal{H}) \xrightarrow{\delta^1} C^2(X,\mathcal{H}), \\
	\end{aligned}
\end{align}
where $\mathcal{F}$ is the constant sheaf of scalar coefficients and $\mathcal{G}$ is generated by product-expanding punctured RS codes. The associated Vandermonde matrices $h_1,h_2$ have ranks $m_1 < \frac{n}{2}$ and $m_2 > \frac{n}{2}$. The sheaf $\mathcal{H}$ is generated by the dual local codes of $\mathcal{G}$. This construction is different from \eqref{eq:t_sheaf_codes}, \eqref{eq:local_rank} and \eqref{eq:sheaf_magic}. The first sheaf $\F$ is trivial here, but we do not use them to construct the good non-Abelian code. Instead, we show below that they are sufficient to support a gauging measurement and produce magic states in the original code blocks of $B$ and $C$.

We begin with the code state in $B$ and $C$ as
\begin{align}
	\ket{\overline{+}_B,\overline{+}_C} \propto \sum_{y_B \in \ker\delta^1(\mathcal{G}), y_C \in \ker\delta^1(\mathcal{H})} \ket{y_B,y_C},
\end{align}
where $\ker\delta^1(\mathcal{G})$ denotes the kernel of $\delta^1$ of $C^\bullet(X,\mathcal{G})$ and $\ker\delta^1(\mathcal{H})$ is defined similarly. By definition, $\ket{\overline{+}_B,\overline{+}_C}$ is an eigenstate with $+1$ eigenvalue of every $X$ stabilizer of $B$ and $C$, analogous to the product state $\frac{1}{2} (\ket{0} + \ket{1}) \otimes (\ket{0} + \ket{1})$ in the qubit case. We will often use the symbol ``$\propto$'' in the following to avoid explicit normalization factors. 

Then with respect to each standard basis vector in $C^0(X,\F)$, we assign one ancilla qudit. When the local coefficients are trivial, the ancilla qudits can be intuitively thought of as placed on each $0$-cube. Let 
\begin{align}\label{eq:U_{aA}}
	U = \sum_{u_A \in C^0(X,\F)} \sum_{x_A \in C^1(X,\F)} \ket{ u_A, x_A + \delta u_A} \bra{ u_A, x_A}.
\end{align} 
By definition, $U$ is a constant-depth circuit of $\CNOT$ operators acting on each $0$-cube with local coefficients and its coboundary. The joint initial state is
\begin{align}
	U \ket{+_a,0_A}, \qquad \ket{+_a} \propto \sum_{u_A \in C^0(X,\F)} \ket{u_A}
\end{align}
where $\ket{+_a}$ is the uniform superposition state of the ancilla qudits. 

The full initial state of three code blocks with ancilla qudits is
\begin{align}
	\ket{\Psi_0} = \Big( U \ket{+_a,0_A} \Big) \otimes \ket{\overline{+}_B,\overline{+}_C}.
\end{align}
For further use, let $X_{\alpha v}$ denote the single-qudit Pauli $X^\alpha = \sum_{\beta \in \mathbb{F}_q} \ket{\alpha + \beta} \bra{\beta}$ based on the standard basis vector $v \in C^0(X,\F)$. Then by \eqref{eq:CNOT_Pauli}, 
\begin{align}
	& U \Big( \ket{u + w} \bra{u} \Big) U^\dagger \notag \\ 
	= & \sum_{\substack{u_A \in C^0(X,\F) \\ x_A \in C^1(X,\F)}} \ket{ u_A, x_A + \delta u_A} \bra{ u_A, x_A} \Big( \ket{u + w} \bra{u} \Big) \sum_{\substack{u_A' \in C^0(X,\F) \\ x_A' \in C^1(X,\F)}} \ket{ u_A', x_A'} \bra{ u_A', x_A' + \delta u_A'} \notag \\
	= & \sum_{x_A \in C^1(X,\F)} \ket{ u + w, x_A + \delta u + \delta w} \bra{ u, x_A} \sum_{\substack{u_A' \in C^0(X,\F) \\ x_A' \in C^1(X,\F)}} \ket{ u_A', x_A'} \bra{ u_A', x_A' + \delta u_A'}  \\
	= & \sum_{\substack{x_A \in C^1(X,\F)}} \ket{u + w, x_A + \delta u + \delta w} \bra{ u, x_A + \delta u} \notag \\
	\implies & U X_{\alpha v} U^\dagger = X_{\alpha v} X_{\alpha \delta v}.
\end{align}
Then we apply the following constant-depth circuit of physical $\CCZ$ operators, which is different from the dressed $X$ stabilizers in Eqs.~\eqref{eq:X_A} and \eqref{eq:A_i_sheaf}:
\begin{align}
	\Omega \coloneq \sum_{u_A, x_B, x_C} (-1)^{ \tr_{\mathbb{F}_q/\mathbb{F}_2}( \int_\xi u_A \smile x_B \smile x_C) } \ket{u_A,x_B,x_C} \bra{u_A,x_B,x_C}, 
\end{align}
and measure the ancilla qudits specified by the standard basis of $C^0(X,\F)$. The measurement is carried out with respect to the simultaneous eigenstates of $\{X_{\alpha v}\}$ where $v \in C^0(X,\F)$ is still a standard basis vector while $\alpha$ is taken from an $\mathbb{F}_2$-basis of $\mathbb{F}_q$. 
A measurement outcome is a string of $x_{\alpha v} = \pm 1$ as introduced in Section \ref{sec:F_q}. With this convention,
\begin{align}
	\prod_{v,\alpha} \frac{I + x_{\alpha v} X_{\alpha v}}{2} \propto \sum_{u \in C^0(X,\F)} x_u X_u, 
\end{align}
where $u = \sum_{v} \sum_\alpha \alpha v \in C^0(X,\F)$ is uniquely expressed by $\alpha$ and $v$, and $x_u$ is the product of $x_{\alpha v}$ with $\alpha$ and $v$ in the expansion of $u$.

Obviously, $U \Omega = \Omega U$ and
\begin{align}
	\ket{\Psi_0} \mapsto \ket{\Psi_1} \coloneq & \prod_{v,\alpha} \frac{I + x_{\alpha v} X_{\alpha v}}{2} \Omega \ket{\Psi_0} = \prod_{v,\alpha} \frac{I + x_{\alpha v} X_{\alpha v}}{2} \Omega U \ket{+_a,0_A} \otimes \ket{\overline{+}_B,\overline{+}_C} \notag \\
	= & \prod_{v,\alpha} \frac{I + x_{\alpha v} X_{\alpha v}}{2} U \Omega \ket{+_a,0_A} \otimes \ket{\overline{+}_B,\overline{+}_C}
	= U \prod_{v,\alpha} \frac{I + x_{\alpha v} X_{\alpha v} X_{\alpha \delta v} }{2} \Omega \ket{+_a,0_A} \otimes \ket{\overline{+}_B,\overline{+}_C} \notag \\
	\propto \ & U \left( \sum_{u \in C^0(X,\F)} x_u X_u X_{\delta u} \right) \Omega \ket{+_a,0_A} \otimes \ket{\overline{+}_B,\overline{+}_C}.
\end{align}


The next step is ungauging: we first apply $U^\dagger$ and then measure in the $Z$-eigenbasis of $\{Z_{\beta e} \}$. Here $Z_{\beta e}$ is a single-qudit Pauli $Z^\beta$ based on the standard basis vector $e \in C^1(X,\F)$ with $\beta$ being an $\mathbb{F}_2$-basis vector of $\mathbb{F}_q$. One possible outcome is
\begin{align}
	\ket{\Psi_1} \mapsto \ket{\Psi_2} \propto \prod_{e,\beta} \frac{I + z_{\beta e} Z_{\beta e} }{2} \left( \sum_{u \in C^0(X,\F)} x_u X_u X_{\delta u} \right) \Omega_{aBC} \ket{+_a,0_A} \otimes \ket{\overline{+}_B,\overline{+}_C}.
\end{align} 
Note that each measurement outcome $z_{\beta e} = (-1)^{\tr_{\mathbb{F}_q/\mathbb{F}_2} (\beta \gamma)}$ is defined with $\gamma$ being the value of the codeword at $e$. Accordingly, we define $y = \sum_e \gamma e \in C^1(X,\F)$ and let $X_y$ be the corresponding Pauli $X$ string. Then by definition,
\begin{align}
	X_y Z_{\beta e} = \begin{cases} Z_{\beta e} X_y, & z_{\beta e} = 1, \\ - Z_{\beta e} X_y, & z_{\beta e} = -1 \end{cases} 
\end{align} 
and 
\begin{align}
	\prod_{e,\beta} \frac{I + z_{\beta e} Z_{\beta e} }{2} = X_y^2 \prod_{e,\beta} \frac{I + z_{\beta e} Z_{\beta e} }{2} = X_y \prod_{e,\beta} \frac{I +  Z_{\beta e} }{2} X_y  
\end{align}
As a result
\begin{align}
	\ket{\Psi_2} \propto X_y \prod_{e,\beta} \frac{I +  Z_{\beta e} }{2} \left( \sum_{u \in C^0(X,\F)} x_u X_u X_{y + \delta u} \right) \Omega \ket{+_a,0_A} \otimes \ket{\overline{+}_B,\overline{+}_C}.
\end{align}
Since $X_{y + \delta u} \ket{0_A} = \ket{y + \delta u}$, the corresponding $Z$-projection is nonzero only when the state is a $+1$ eigenstate with respect to Pauli $Z$s. For every $\mathbb{F}_2$-basis vector $\beta$ and every component of $(-1)^{\tr_{\mathbb{F}_q/\mathbb{F}_2} (\beta \cdot (y + \delta u))}$ to be $1$, we have to have $\delta u + y = 0$. All the other components in the above projection vanish.

Let $y = \delta u_0$. Then $u = u_0$ plus elements in $\ker \delta^0$ and
\begin{align}
	\ket{\Psi_2} \propto X_{\delta u_0} \left( \sum_{\mu \in \ker \delta^0} x_{u_0 + \mu} X_{u_0 + \mu} \right) \Omega \ket{+_a,0_A} \otimes \ket{\overline{+}_B,\overline{+}_C}.
\end{align} 
We now apply $X_{u_0}^{-1}$ and obtain
\begin{align}
	\begin{aligned}
		\ket{\Psi_2} \mapsto \ket{\Psi_3} \propto \ & X_{\delta u_0} \left( \sum_{\mu \in \ker \delta^0} x_{u_0 + \mu} X_{\mu} \right) \Omega \ket{+_a,0_A} \otimes \ket{\overline{+}_B,\overline{+}_C} \\ 
		\propto & X_{\delta u_0} \prod_{\mu_i,\beta} \frac{I + x_{\beta \mu_i} X_{\beta \mu_i}}{2} \Omega \ket{+_a,0_A} \otimes \ket{\overline{+}_B,\overline{+}_C}
	\end{aligned}
\end{align} 
where $\{\mu_i\}$ is a basis of $\ker \delta^0$ and $\{\beta\}$ is the $\mathbb{F}_2$-basis. Actually, $\sum_{\mu \in \ker \delta^0} x_{u_0 + \mu} X_{\mu}$ is a partial sum of the original projector $\sum_{u \in C^0(X,\F)} x_u X_u$. With a proper normalization factor, the above identity holds because $\ker \delta^0$ is a subspace of $C^0(X,\F)$. Indeed,
\begin{align}
	\left( \frac{1}{|\ker \delta^0|} \sum_{\mu \in \ker \delta^0} x_{\mu} X_{\mu} \right)^2 = \frac{1}{|\ker \delta^0|^2} \left( \sum_{\mu,\mu'} x_{u_0 + \mu} x_{u_0 + \mu'} X_{\mu + \mu'} \right)
\end{align}
Recall that by definition, $x_{u_0 + \mu}$ is defined by the product of $x_{\alpha v}$ for $\alpha,v$ in the unique expansion of $u_0 + \mu$. As a result, $x_{u_0 + \mu} x_{u_0 + \mu'} = x_{\mu + \mu'}$. On the other hand, any $\mu + \mu' \in \ker \delta^0$ can be written as $|\ker \delta^0|$ distinct sums. Therefore, the above squared sum equals itself.


Finally, we discard the state $X_{\delta u_0} \ket{0_A}$ and apply $\Omega^{-1}$:
\begin{align}
	\ket{\Psi_3} \mapsto  \prod_{\mu_i,\beta} \Omega^\dagger \frac{I + x_{\beta \mu_i} X_{\beta \mu_i}}{2}\Omega \ket{+_a} \otimes \ket{\overline{+}_B,\overline{+}_C}.
\end{align}
We have
\begin{align}
	\begin{aligned}
		\Omega^\dagger X_\mu \Omega
		= & \sum_{u_A, x_B, x_C} (-1)^{-\tr_{\mathbb{F}_q/\mathbb{F}_2}( \int_\xi u_A \smile x_B \smile x_C) } \ket{u_A,x_B,x_C} \bra{u_A,x_B,x_C} \\
		& \cdot \sum_{u \in C^0(X,\F)} \ket{u + \mu}\bra{u} \cdot
		\sum_{u_A', x_B', x_C'} (-1)^{ \tr_{\mathbb{F}_q/\mathbb{F}_2}( \int_\xi u_A' \smile x_B' \smile x_C') } \ket{u_A',x_B',x_C'} \bra{u_A',x_B',x_C'} \\
		= & \sum_{u_A, x_B, x_C} (-1)^{-\tr_{\mathbb{F}_q/\mathbb{F}_2}( \int_\xi u_A \smile x_B \smile x_C) } \ket{u_A,x_B,x_C} \bra{u_A,x_B,x_C} \\
		& \cdot \sum_{u, x_B', x_C'} (-1)^{\tr_{\mathbb{F}_q/\mathbb{F}_2}( \int_\xi u \smile x_B' \smile x_C') } \ket{u+\mu,x_B',x_C'} \bra{u, x_B',x_C'} \\
		= & \sum_{u, x_B, x_C} (-1)^{\tr_{\mathbb{F}_q/\mathbb{F}_2}( \int_\xi \mu \smile x_B \smile x_C) } \ket{u + \mu,x_B,x_C} \bra{u,x_B,x_C}.
	\end{aligned}
\end{align}
The final result is simply $X_\mu \sum_{x_B, x_C} (-1)^{\tr_{\mathbb{F}_q/\mathbb{F}_2}( \int_\xi \mu \smile x_B \smile x_C) } \ket{x_B,x_C} \bra{x_B,x_C}$. Since $\int_\xi \mu \smile x_B \smile x_C = \int_{\mu \frown \xi} x_B \smile x_C$, the second part of the operator is the logical $\CZ$ gate, denoted by $\CZ_{\mu \frown \xi}$, on the second and third code blocks. Therefore, the final quantum state is
\begin{align}\label{eq:magic_projection}
	\begin{aligned}
		\prod_{\mu_i,\beta} \Omega^\dagger \frac{I + x_{\beta \mu_i} X_{\beta \mu_i}}{2} \Omega \ket{+_a} \otimes \ket{\overline{+}_B,\overline{+}_C}
		= & \prod_{\beta,\mu_i} \frac{I + x_{\beta \mu_i} X_{\beta \mu_i} \CZ_{\beta \mu_i \frown \xi} }{2} \ket{+_a} \otimes \ket{\overline{+}_B,\overline{+}_C} \\
		= & \ket{+_a} \otimes \prod_{\beta,\mu_i} \frac{I + x_{\beta \mu_i} \CZ_{\beta \mu_i \frown \xi} }{2} \ket{\overline{+}_B,\overline{+}_C}.
	\end{aligned}
\end{align}		
Discarding all the ancilla qudits, we obtain $\prod_{\beta,\mu_i} \frac{I + x_{\beta \mu_i} \CZ_{\beta \mu_i \frown \xi} }{2} \ket{\overline{+}_B,\overline{+}_C}$, which projects $\ket{\overline{+}_B,\overline{+}_C}$ into the eigenstates of $\CZ_{\beta \mu_i \frown \xi}$. 

With the constant sheaf $\mathcal{F}$ used in \eqref{eq:sheaf_ungauge_measurement}, the all-ones vector $\boldsymbol{1}$ is the unique kernel basis vector up to scalars, so $\mu_i\equiv \boldsymbol{1}$ and $\mu_i \frown \xi = \xi$ and the projection always yields a joint logical magic state in the following sense: for one qubit pair with $q = 2$, 
\begin{align}
	\frac{I + \CZ}{2} \ket{++} = \frac{I + \CZ}{2} \frac{1}{2}\left(\ket{00} + \ket{01} + \ket{10} + \ket{11}\right) = \frac{\sqrt{3}}{2} \frac{(\ket{00} + \ket{01} + \ket{10}}{\sqrt{3}}
\end{align}
is a magic state because every pure stabilizer state has computational-basis support of size $2^k$ for some $k \geq 0$, with all nonzero amplitudes having the same magnitude \cite{DehaeneDeMoor2003}. Therefore, single $\CZ$ projection on $\ket{++}$ has probability $\frac{3}{4}$ to give a magic state. For two qubit pairs initialized in $\ket{++++}$, the two unnormalized outputs are
\begin{align}
    \frac{I+\CZ_1\CZ_2}{2}\ket{++++}, \qquad \frac{I-\CZ_1\CZ_2}{2}\ket{++++}.
\end{align}
It is easy to show that they have $10$ and $6$ computational basis terms and occur with probabilities $5/8$ and $3/8$, respectively. Both normalized outputs are magic and entangled across the two pairs. Similar results can be generalized to more qubit pairs and even for generic qudits with $q = 2^s$ \cite{Dehaene2005,BermejoVegaVanDenNest2014}. With the recent construction of good qLTCs \cite{GJ2026,BLN2026,CHLT2026} and transversal logical multi-controlled-$Z$ of polynomial subrank \cite{LLL2609}, the gauging measurement can be implemented in this setting.


\section{Discussion}

In this work, we establish that non-Abelian gauging is compatible with constant encoding rate and linear distance, yielding asymptotically good non-Abelian qLDPC codes. Our construction uses cup products to characterize the coupled logical constraints imposed by gauging. We also provide a foundational treatment of code distance, formulating the problem directly in terms of the general Knill--Laflamme condition and developing new methods for establishing distance bounds through expansion and cleaning. The resulting framework reveals concrete connections between non-Abelian structure, error correction, and quantum magic through the algebraic properties of the underlying codes.

Our constructions and the proof of long-range magic suggest various further intriguing directions in quantum phases of matter and Hamiltonian complexity. In light of the intrinsic connection between long-range magic and non-Abelian topological order in two dimensions~\cite{wei2025longrangenonstabilizernessquantumcodes,ZhangKimBaoVijay2026}, our results point toward a general notion of non-Abelian phases beyond geometric locality and provide an algebraic foundation for formulating and understanding such phases of many-body systems. Relating the underlying cohomological structure and error-correcting properties to stability and excitations could also extend recent studies of stabilizer qLDPC phases~\cite{YinLucas2025,DeRoeck2025Phases}. From the perspective of Hamiltonian complexity, extending the magic obstruction from the code space to a wider low-energy sector would advance the proposed route toward no low-energy trivial magic (NLTM)~\cite{wei2025longrangenonstabilizernessquantumcodes}. This requires additional control of low-energy states beyond code distance alone, potentially through suitable soundness or local testability properties of the gauged Hamiltonians. More immediate mathematical questions include whether full-code-space long-range magic can be achieved with strictly linear distance and whether quantitative robustness guarantees can be established under approximate preparation and noise.

For quantum computation, further study of the structure and properties of non-Abelian sheaf codes may lead to new approaches to fault-tolerant storage and logical operations. In particular, further investigation of the use of our magic state generation schemes in fault-tolerant protocols could be valuable. More broadly, assessing the practical advantages of our codes will require developing efficient decoders and fault-tolerant measurement circuits, as well as quantifying the associated resource costs. The flexibility of the sheaf code framework should also be used to design a wider range of codes tailored to practical scenarios, particularly for experimental platforms with reconfigurable connectivity, such as neutral atom arrays~\cite{Xu2024ConstantOverhead,Bluvstein2026Architecture}. This would open promising avenues for optimizing practical quantum computation with non-Abelian codes.

\section*{Acknowledgments} 
This work is supported in part by NSFC under Grant No.~12475023, Dushi Program, and a startup funding from YMSC.

The scientific content of this work was developed by human. Generative AI tools were used during the final stages of manuscript preparation for minor language and presentation edits.

\printbibliography[heading=bibintoc,title=References]	

@book{Milne_2017, 
place={Cambridge}, 
series={Cambridge Studies in Advanced Mathematics}, 
title={Algebraic Groups: The Theory of Group Schemes of Finite Type over a Field}, publisher={Cambridge University Press}, 
author={Milne, J. S.}, 
year={2017}, 
collection={Cambridge Studies in Advanced Mathematics}
}

@article{Bravyi2024Memory,
  author  = {Bravyi, Sergey and Cross, Andrew W. and Gambetta, Jay M. and Maslov, Dmitri and Rall, Patrick and Yoder, Theodore J.},
  title   = {High-threshold and low-overhead fault-tolerant quantum memory},
  journal = {Nature},
  volume  = {627},
  pages   = {778--782},
  year    = {2024},
  doi     = {10.1038/s41586-024-07107-7},
  url     = {https://www.nature.com/articles/s41586-024-07107-7}
}

@article{Bluvstein2026Architecture,
  author  = {Bluvstein, Dolev and Geim, Alexandra A. and Li, Sophie H. and Evered, Simon J. and Bonilla Ataides, J. Pablo and Baranes, Gefen and Gu, Andi and Manovitz, Tom and Xu, Muqing and Kalinowski, Marcin and Majidy, Shayan and Kokail, Christian and Maskara, Nishad and Trapp, Elias C. and Stewart, Luke M. and Hollerith, Simon and Zhou, Hengyun and Gullans, Michael J. and Yelin, Susanne F. and Greiner, Markus and Vuleti{\'c}, Vladan and Cain, Madelyn and Lukin, Mikhail D.},
  title   = {A fault-tolerant neutral-atom architecture for universal quantum computation},
  journal = {Nature},
  volume  = {649},
  pages   = {39--46},
  year    = {2026},
  doi     = {10.1038/s41586-025-09848-5},
  url     = {https://www.nature.com/articles/s41586-025-09848-5}
}

@article{EvraKaufmanZemor2024,
  author        = {Evra, Shai and Kaufman, Tali and Z{\'e}mor, Gilles},
  title         = {Decodable Quantum {LDPC} Codes beyond the $\sqrt{n}$ Distance Barrier Using High-Dimensional Expanders},
  journal       = {SIAM Journal on Computing},
  volume        = {53},
  number        = {6},
  pages         = {FOCS20-276--FOCS20-316},
  year          = {2024},
  doi           = {10.1137/20M1383689},
  eprint        = {2004.07935},
  archivePrefix = {arXiv},
  primaryClass  = {quant-ph},
  url           = {https://arxiv.org/abs/2004.07935}
}

@article{FreedmanHastings2021,
  author        = {Freedman, Michael and Hastings, Matthew B.},
  title         = {Building manifolds from quantum codes},
  journal       = {Geometric and Functional Analysis},
  volume        = {31},
  pages         = {855--894},
  year          = {2021},
  doi           = {10.1007/s00039-021-00567-3},
  eprint        = {2012.02249},
  archivePrefix = {arXiv},
  primaryClass  = {math.DG},
  url           = {https://arxiv.org/abs/2012.02249}
}

@misc{LiShaoWeiLiLiu2026,
  author        = {Li, Zimu and Shao, Yuguo and Wei, Fuchuan and Li, Yiming and Liu, Zi-Wen},
  title         = {Theory of {(Co)homological} Invariants on Quantum {LDPC} Codes},
  year          = {2026},
  eprint        = {2603.25831},
  archivePrefix = {arXiv},
  primaryClass  = {quant-ph},
  url           = {https://arxiv.org/abs/2603.25831}
}

@article{Gottesman2014,
  author        = {Gottesman, Daniel},
  title         = {Fault-Tolerant Quantum Computation with Constant Overhead},
  journal       = {Quantum Information and Computation},
  volume        = {14},
  number        = {15--16},
  pages         = {1338--1371},
  year          = {2014},
  doi           = {10.26421/QIC14.15-16-5},
  eprint        = {1310.2984},
  archivePrefix = {arXiv},
  primaryClass  = {quant-ph}
}

@misc{LuFuLiu2026Intrinsic,
  author        = {Lu, Yimin and Fu, Esther Xiaozhen and Liu, Zi-Wen},
  title         = {Intrinsic locality dimension of quantum codes},
  year          = {2026},
  eprint        = {2605.31441},
  archivePrefix = {arXiv},
  primaryClass  = {quant-ph},
  url           = {https://arxiv.org/abs/2605.31441}
}

@article{Nayak2008,
  author        = {Nayak, Chetan and Simon, Steven H. and Stern, Ady
                   and Freedman, Michael and Das Sarma, Sankar},
  title         = {{Non-Abelian} anyons and topological quantum computation},
  journal       = {Reviews of Modern Physics},
  volume        = {80},
  number        = {3},
  pages         = {1083--1159},
  year          = {2008},
  doi           = {10.1103/RevModPhys.80.1083},
  eprint        = {0707.1889},
  archivePrefix = {arXiv},
  primaryClass  = {cond-mat.str-el}
}

@article{Xu2024ConstantOverhead,
  author        = {Xu, Qian and Bonilla Ataides, J. Pablo and Pattison, Christopher A. and Raveendran, Nithin and Bluvstein, Dolev and Wurtz, Jonathan and Vasi{\'c}, Bane and Lukin, Mikhail D. and Jiang, Liang and Zhou, Hengyun},
  title         = {Constant-overhead fault-tolerant quantum computation with reconfigurable atom arrays},
  journal       = {Nature Physics},
  volume        = {20},
  pages         = {1084--1090},
  year          = {2024},
  doi           = {10.1038/s41567-024-02479-z},
  eprint        = {2308.08648},
  archivePrefix = {arXiv},
  primaryClass  = {quant-ph},
  url           = {https://www.nature.com/articles/s41567-024-02479-z}
}

@article{YinLucas2025,
  author        = {Yin, Chao and Lucas, Andrew},
  title         = {Low-Density Parity-Check Codes as Stable Phases of Quantum Matter},
  journal       = {PRX Quantum},
  volume        = {6},
  number        = {3},
  pages         = {030329},
  year          = {2025},
  doi           = {10.1103/361k-nj4b},
  eprint        = {2411.01002},
  archivePrefix = {arXiv},
  primaryClass  = {quant-ph},
  url           = {https://arxiv.org/abs/2411.01002}
}

@article{DeRoeck2025Phases,
  author        = {De Roeck, Wojciech and Khemani, Vedika and Li, Yaodong and O'Dea, Nicholas and Rakovszky, Tibor},
  title         = {Low-Density Parity-Check Stabilizer Codes as Gapped Quantum Phases: Stability under Graph-Local Perturbations},
  journal       = {PRX Quantum},
  volume        = {6},
  number        = {3},
  pages         = {030330},
  year          = {2025},
  doi           = {10.1103/7x71-8j7k},
  eprint        = {2411.02384},
  archivePrefix = {arXiv},
  primaryClass  = {quant-ph},
  url           = {https://arxiv.org/abs/2411.02384}
}

@article{AharonovAradVidick2013,
  author        = {Aharonov, Dorit and Arad, Itai and Vidick, Thomas},
  title         = {Guest Column: The Quantum {PCP} Conjecture},
  journal       = {ACM SIGACT News},
  volume        = {44},
  number        = {2},
  pages         = {47--79},
  year          = {2013},
  doi           = {10.1145/2491533.2491549},
  eprint        = {1309.7495},
  archivePrefix = {arXiv},
  primaryClass  = {quant-ph},
  url           = {https://arxiv.org/abs/1309.7495}
}

@book{Goodman2009,
	address = {New York, NY},
	series = {Graduate {Texts} in {Mathematics}},
	title = {Symmetry, {Representations}, and {Invariants}},
	volume = {255},
	isbn = {9780387798516 9780387798523},
	url = {https://link.springer.com/10.1007/978-0-387-79852-3},
	publisher = {Springer New York},
	author = {Goodman, Roe and Wallach, Nolan R.},
	year = {2009},
	doi = {10.1007/978-0-387-79852-3}
}

@book{Bekka2008, 
	place={Cambridge}, 
	series={New Mathematical Monographs}, 
	title={Kazhdan’s Property (T)}, 
	publisher={Cambridge University Press}, 
	author={Bekka, Bachir and de la Harpe, Pierre and Valette, Alain}, 
	year={2008}, 
	collection={New Mathematical Monographs}
}

@book{Hatcher2015AT,
	address = {Cambridge},
	edition = {14th printing 2015},
	title = {Algebraic topology},
	isbn = {9780521791601 9780521795401},
	publisher = {Cambridge University Press},
	author = {Hatcher, Allen},
	year = {2015}
}

@book{Gallier2022,
	address = {Singapore},
	title = {Homology, cohomology, and sheaf cohomology for algebraic topology, algebraic geometry, and differential geometry},
	isbn = {9789811245022 9789811245039},
	publisher = {World Scientific},
	author = {Gallier, Jean H. and Quaintance, Jocelyn},
	year = {2022}
}

@article{Glasner03,
  author = {Glasner, Yair},
  title = {{Ramanujan} Graphs with Small Girth},
  journal = {Combinatorica},
  volume = {23},
  number = {3},
  pages = {487--502},
  year = {2003},
  doi = {10.1007/s00493-003-0029-9},
  eprint = {math/0306196},
  archivePrefix = {arXiv},
  primaryClass = {math.CO},
  url = {https://link.springer.com/article/10.1007/s00493-003-0029-9}
}

@article{Marcus2015I,
	title={Interlacing Families {I}: Bipartite {Ramanujan} Graphs of All Degrees}, 
	author={Adam Marcus and Daniel A. Spielman and Nikhil Srivastava},
	volume={182},
	url={https://doi.org/10.4007/annals.2015.182.1.7},
	DOI={10.4007/annals.2015.182.1.7},
	journal={Annals of Mathematics},
	publisher={Princeton University},
	year={2015},
	pages={307–325} 
}

@article{Chandrasekaran2017,
  author = {Chandrasekaran, Karthekeyan and Velingker, Ameya},
  title = {Shift lifts preserving {Ramanujan} property},
  journal = {Linear Algebra and its Applications},
  volume = {529},
  pages = {199--214},
  year = {2017},
  doi = {10.1016/j.laa.2017.04.031},
  url = {https://www.sciencedirect.com/science/article/pii/S0024379517302689}
}

@article{Hall_2018,
	title={Ramanujan coverings of graphs},
	volume={323},
	ISSN={0001-8708},
	url={http://dx.doi.org/10.1016/j.aim.2017.10.042},
	DOI={10.1016/j.aim.2017.10.042},
	journal={Advances in Mathematics},
	publisher={Elsevier BV},
	author={Hall, Chris and Puder, Doron and Sawin, William F.},
	year={2018},
	month=jan, 
	pages={367–410} 
}

@article{Alon1986Expander,
  author = {Alon, Noga},
  title = {Eigenvalues and expanders},
  journal = {Combinatorica},
  volume = {6},
  number = {2},
  pages = {83--96},
  year = {1986},
  month = jun,
  doi = {10.1007/BF02579166},
  url = {https://link.springer.com/article/10.1007/BF02579166}
}

@article{Alon1994,
  author = {Alon, Noga and Roichman, Yuval},
  title = {Random {Cayley} graphs and expanders},
  journal = {Random Structures \& Algorithms},
  volume = {5},
  number = {2},
  pages = {271--284},
  year = {1994},
  month = apr,
  doi = {10.1002/rsa.3240050203},
  url = {https://onlinelibrary.wiley.com/doi/10.1002/rsa.3240050203}
}

@article{Agarwal2016,
  author = {Agarwal, Naman and Chandrasekaran, Karthekeyan and Kolla, Alexandra and Madan, Vivek},
  title = {On the Expansion of Group-Based Lifts},
  journal = {SIAM Journal on Discrete Mathematics},
  volume = {33},
  number = {3},
  pages = {1338--1373},
  year = {2019},
  doi = {10.1137/17M1141047},
  url = {https://doi.org/10.1137/17M1141047}
}

@InProceedings{Jeronimo2021,
	author =	{Jeronimo, Fernando Granha and Mittal, Tushant and O'Donnell, Ryan and Paredes, Pedro and Tulsiani, Madhur},
	title =	{{Explicit Abelian Lifts and Quantum {LDPC} Codes}},
	booktitle =	{13th Innovations in Theoretical Computer Science Conference (ITCS 2022)},
	pages =	{88:1--88:21},
	series =	{Leibniz International Proceedings in Informatics (LIPIcs)},
	ISBN =	{978-3-95977-217-4},
	ISSN =	{1868-8969},
	year =	{2022},
	volume =	{215},
	editor =	{Braverman, Mark},
	publisher =	{Schloss Dagstuhl -- Leibniz-Zentrum f{\"u}r Informatik},
	address =	{Dagstuhl, Germany},
	URL =		{https://drops.dagstuhl.de/entities/document/10.4230/LIPIcs.ITCS.2022.88},
	URN =		{urn:nbn:de:0030-drops-156846},
	doi =		{10.4230/LIPIcs.ITCS.2022.88}
}

@misc{PK2023RobustlyTestable,
	title={Two-sided Robustly Testable Codes}, 
	author={Gleb Kalachev and Pavel Panteleev},
	year={2023},
	eprint={2206.09973},
	archivePrefix={arXiv},
	primaryClass={cs.IT},
	url={https://arxiv.org/abs/2206.09973}, 
}

@misc{Panteleev2024,
      title={Maximally Extendable Sheaf Codes}, 
      author={Pavel Panteleev and Gleb Kalachev},
      year={2024},
      eprint={2403.03651},
      archivePrefix={arXiv},
      primaryClass={cs.IT},
      url={https://arxiv.org/abs/2403.03651}, 
}

@misc{KP2025Extendable,
	title={Maximally Extendable Product Codes are Good Coboundary Expanders}, 
	author={Gleb Kalachev and Pavel Panteleev},
	year={2025},
	eprint={2501.01411},
	archivePrefix={arXiv},
	primaryClass={cs.IT},
	url={https://arxiv.org/abs/2501.01411}, 
}

@INPROCEEDINGS{QuantumTanner2022,
	author={Leverrier, Anthony and Zémor, Gilles},
	booktitle={2022 IEEE 63rd Annual Symposium on Foundations of Computer Science (FOCS)}, 
	title={Quantum Tanner codes}, 
	year={2022},
	volume={},
	number={},
	pages={872-883},
	doi={10.1109/FOCS54457.2022.00117}
}

@article{Zemor2014,
	title={Quantum {LDPC} Codes With Positive Rate and Minimum Distance Proportional to the Square Root of the Blocklength},
	volume={60},
	ISSN={1557-9654},
	url={http://dx.doi.org/10.1109/TIT.2013.2292061},
	DOI={10.1109/tit.2013.2292061},
	number={2},
	journal={IEEE Transactions on Information Theory},
	publisher={Institute of Electrical and Electronics Engineers (IEEE)},
	author={Tillich, Jean-Pierre and Zémor, Gilles},
	year={2014},
	month=feb, pages={1193–1202}
}

@article{Zeng_2019,
  author = {Zeng, Weilei and Pryadko, Leonid P.},
  title = {Higher-dimensional quantum hypergraph-product codes with finite rates},
  journal = {Physical Review Letters},
  volume = {122},
  number = {23},
  eid = {230501},
  year = {2019},
  month = jun,
  doi = {10.1103/PhysRevLett.122.230501},
  url = {https://doi.org/10.1103/PhysRevLett.122.230501}
}

@misc{Fu2025nogo,
	title={No-go theorems for logical gates on product quantum codes}, 
	author={Esther Xiaozhen Fu and Han Zheng and Zimu Li and Zi-Wen Liu},
	year={2025},
	eprint={2507.16797},
	archivePrefix={arXiv},
	primaryClass={quant-ph},
	url={https://arxiv.org/abs/2507.16797}, 
}

@article{PK2021,
	title={Quantum {LDPC} Codes With Almost Linear Minimum Distance},
	volume={68},
	ISSN={1557-9654},
	url={http://dx.doi.org/10.1109/TIT.2021.3119384},
	DOI={10.1109/tit.2021.3119384},
	number={1},
	journal={IEEE Transactions on Information Theory},
	publisher={Institute of Electrical and Electronics Engineers (IEEE)},
	author={Panteleev, Pavel and Kalachev, Gleb},
	year={2022},
	month=jan, pages={213–229} 
}

@inproceedings{PK2022Good,
	author = {Panteleev, Pavel and Kalachev, Gleb},
	title = {Asymptotically good Quantum and locally testable classical {LDPC} codes},
	year = {2022},
	isbn = {9781450392648},
	publisher = {Association for Computing Machinery},
	address = {New York, NY, USA},
	url = {https://doi.org/10.1145/3519935.3520017},
	doi = {10.1145/3519935.3520017},
	booktitle = {Proceedings of the 54th Annual ACM SIGACT Symposium on Theory of Computing},
	pages = {375–388},
	numpages = {14},
	location = {Rome, Italy},
	series = {STOC 2022}
}

@inproceedings{DHLV2022,
	author = {Dinur, Irit and Hsieh, Min-Hsiu and Lin, Ting-Chun and Vidick, Thomas},
	title = {Good Quantum {LDPC} Codes with Linear Time Decoders},
	year = {2023},
	isbn = {9781450399135},
	publisher = {Association for Computing Machinery},
	address = {New York, NY, USA},
	url = {https://doi.org/10.1145/3564246.3585101},
	doi = {10.1145/3564246.3585101},
	booktitle = {Proceedings of the 55th Annual ACM Symposium on Theory of Computing},
	pages = {905–918},
	numpages = {14},
	location = {Orlando, FL, USA},
	series = {STOC 2023}
}

@INPROCEEDINGS{Dinur2024sheaf,
	author={Dinur, Irit and Lin, Ting-Chun and Vidick, Thomas},
	booktitle={2024 IEEE 65th Annual Symposium on Foundations of Computer Science (FOCS)}, 
	title={Expansion of High-Dimensional Cubical Complexes: with Application to Quantum Locally Testable Codes}, 
	year={2024},
	volume={},
	number={},
	pages={379-385},
	doi={10.1109/FOCS61266.2024.00031}
}

@misc{Li2025Poincare,
	title={Poincar\'e Duality and Multiplicative Structures on Quantum Codes}, 
	author={Yiming Li and Zimu Li and Zi-Wen Liu and Quynh T. Nguyen},
	year={2025},
	eprint={2512.21922},
	archivePrefix={arXiv},
	primaryClass={quant-ph},
	url={https://arxiv.org/abs/2512.21922}, 
}

@misc{LSWLL2026Theory,
      title={Theory of (Co)homological Invariants on Quantum {LDPC} Codes}, 
      author={Zimu Li and Yuguo Shao and Fuchuan Wei and Yiming Li and Zi-Wen Liu},
      year={2026},
      eprint={2603.25831},
      archivePrefix={arXiv},
      primaryClass={quant-ph},
      url={https://arxiv.org/abs/2603.25831}, 
}

@misc{LLL2026nontrivial,
      title={Transversal non-Clifford gates on almost-good quantum {LDPC} and quantum locally testable codes}, 
      author={Yiming Li and Zimu Li and Zi-Wen Liu},
      year={2026},
      eprint={2604.01874},
      archivePrefix={arXiv},
      primaryClass={quant-ph},
      url={https://arxiv.org/abs/2604.01874}, 
}

@article{Kitaev_2003,
	title={Fault-tolerant quantum computation by anyons},
	volume={303},
	ISSN={0003-4916},
	url={http://dx.doi.org/10.1016/S0003-4916(02)00018-0},
	DOI={10.1016/s0003-4916(02)00018-0},
	number={1},
	journal={Annals of Physics},
	publisher={Elsevier BV},
	author={Kitaev, A.Yu.},
	year={2003},
	month=jan, pages={2–30} 
}

@article{Wang_2024,
	title={Efficient fault-tolerant implementations of non-Clifford gates with reconfigurable atom arrays},
	volume={10},
	ISSN={2056-6387},
	url={http://dx.doi.org/10.1038/s41534-024-00945-3},
	DOI={10.1038/s41534-024-00945-3},
	number={1},
	journal={npj Quantum Information},
	publisher={Springer Science and Business Media LLC},
	author={Wang, Yifei and Wang, Yixu and Chen, Yu-An and Zhang, Wenjun and Zhang, Tao and Hu, Jiazhong and Chen, Wenlan and Gu, Yingfei and Liu, Zi-Wen},
	year={2024},
	month=dec 
}

@article{Breuckmann2024Cups,
  author = {Breuckmann, Nikolas P. and Davydova, Margarita and Eberhardt, Jens N. and Tantivasadakarn, Nathanan},
  title = {Cups and Gates {I}: Cohomology Invariants and Logical Quantum Operations},
  journal = {Communications in Mathematical Physics},
  volume = {407},
  eid = {86},
  date = {2026-04-04},
  doi = {10.1007/s00220-026-05570-z},
  eprint = {2410.16250},
  archivePrefix = {arXiv},
  primaryClass = {quant-ph},
  url = {https://link.springer.com/article/10.1007/s00220-026-05570-z}
}

@article{KnillLaflamme1997,
  author  = {Knill, Emanuel and Laflamme, Raymond},
  title   = {Theory of quantum error-correcting codes},
  journal = {Physical Review A},
  volume  = {55},
  number  = {2},
  pages   = {900--911},
  year    = {1997},
  doi     = {10.1103/PhysRevA.55.900}
}

@article{KL2000,
	title={Theory of Quantum Error Correction for General Noise},
	volume={84},
	ISSN={1079-7114},
	url={http://dx.doi.org/10.1103/PhysRevLett.84.2525},
	DOI={10.1103/physrevlett.84.2525},
	number={11},
	journal={Physical Review Letters},
	publisher={American Physical Society (APS)},
	author={Knill, Emanuel and Laflamme, Raymond and Viola, Lorenza},
	year={2000},
	month=mar, pages={2525–2528} 
}

@article{Kalachev2022Cleaning,
   title={A linear-algebraic and lattice-theoretical look at the Cleaning Lemma of quantum coding theory},
   volume={649},
   ISSN={0024-3795},
   url={http://dx.doi.org/10.1016/j.laa.2022.05.002},
   DOI={10.1016/j.laa.2022.05.002},
   journal={Linear Algebra and its Applications},
   publisher={Elsevier BV},
   author={Kalachev, Gleb and Sadov, Sergey},
   year={2022},
   month=Sept, pages={96–121} 
}

@article{Bravyi2009Cleaning,
	title={A no-go theorem for a two-dimensional self-correcting quantum memory based on stabilizer codes},
	volume={11},
	ISSN={1367-2630},
	url={http://dx.doi.org/10.1088/1367-2630/11/4/043029},
	DOI={10.1088/1367-2630/11/4/043029},
	number={4},
	journal={New Journal of Physics},
	publisher={IOP Publishing},
	author={Bravyi, Sergey and Terhal, Barbara},
	year={2009},
	month=apr, pages={043029} 
}

@incollection{KB1993,
	title={On the Realization of Networks in Three-Dimensional Space},
	author={A. N. Kolmogorov and Y. M. Barzdin},
	year={1993},
	booktitle = {Selected Works of Kolmogorov},
	publisher = {Kluwer, Dordrecht, 3:194–202}
}

@article{GG2012,
  author = {Gromov, Misha and Guth, Larry},
  title = {Generalizations of the {Kolmogorov--Barzdin} embedding estimates},
  journal = {Duke Mathematical Journal},
  volume = {161},
  number = {13},
  pages = {2549--2603},
  year = {2012},
  doi = {10.1215/00127094-1812840},
  url = {https://doi.org/10.1215/00127094-1812840}
}

@article{Wills2024magic,
  author = {Wills, Adam and Hsieh, Min-Hsiu and Yamasaki, Hayata},
  title = {Constant-overhead magic state distillation},
  journal = {Nature Physics},
  volume = {21},
  number = {11},
  pages = {1842--1846},
  year = {2025},
  doi = {10.1038/s41567-025-03026-0},
  url = {https://www.nature.com/articles/s41567-025-03026-0}
}

@inproceedings{Nguyen2025CCZ,
	author = {Nguyen, Quynh T.},
	title = {Good Binary Quantum Codes with Transversal CCZ Gate},
	year = {2025},
	isbn = {9798400715105},
	publisher = {Association for Computing Machinery},
	address = {New York, NY, USA},
	url = {https://doi.org/10.1145/3717823.3718186},
	doi = {10.1145/3717823.3718186},
	booktitle = {Proceedings of the 57th Annual ACM Symposium on Theory of Computing},
	pages = {697–706},
	numpages = {10},
	location = {Prague, Czechia},
	series = {STOC '25}
}

@misc{He2025addressable,
	title={Quantum Codes with Addressable and Transversal Non-Clifford Gates}, 
	author={Zhiyang He and Vinod Vaikuntanathan and Adam Wills and Rachel Yun Zhang},
	year={2025},
	eprint={2502.01864},
	archivePrefix={arXiv},
	primaryClass={quant-ph},
	url={https://arxiv.org/abs/2502.01864}, 
}

@inproceedings{Anshu_2023, 
	series={STOC ’23},
	title={NLTS Hamiltonians from Good Quantum Codes},
	url={http://dx.doi.org/10.1145/3564246.3585114},
	DOI={10.1145/3564246.3585114},
	booktitle={Proceedings of the 55th Annual ACM Symposium on Theory of Computing},
	publisher={ACM},
	author={Anshu, Anurag and Breuckmann, Nikolas P. and Nirkhe, Chinmay},
	year={2023},
	month=jun, pages={1090–1096},
	collection={STOC ’23} 
}

@misc{Davydova2025,
	title={Universal fault tolerant quantum computation in 2D without getting tied in knots}, 
	author={Margarita Davydova and Andreas Bauer and Julio C. Magdalena de la Fuente and Mark Webster and Dominic J. Williamson and Benjamin J. Brown},
	year={2025},
	eprint={2503.15751},
	archivePrefix={arXiv},
	primaryClass={quant-ph},
	url={https://arxiv.org/abs/2503.15751}, 
}

@misc{Zhu2026,
	title={Non-Abelian qLDPC: TQFT Formalism, Addressable Gauging Measurement and Application to Magic State Fountain on 2D Product Codes}, 
	author={Guanyu Zhu and Ryohei Kobayashi and Po-Shen Hsin},
	year={2026},
	eprint={2601.06736},
	archivePrefix={arXiv},
	primaryClass={quant-ph},
	url={https://arxiv.org/abs/2601.06736}, 
}

@misc{Christos2026,
	      title={Non-Abelian Quantum Low-Density Parity Check Codes and Non-Clifford Operations from Gauging Logical Gates via Measurements}, 
	author={Maine Christos and Chiu Fan Bowen Lo and Vedika Khemani and Rahul Sahay},
	year={2026},
	eprint={2602.12228},
	archivePrefix={arXiv},
	primaryClass={quant-ph},
	url={https://arxiv.org/abs/2602.12228}, 
}

@misc{wei2025longrangenonstabilizernessquantumcodes,
      title={Long-range nonstabilizerness and quantum codes, phases, and complexity}, 
      author={Fuchuan Wei and Zi-Wen Liu},
      year={2026},
      eprint={2503.04566},
      archivePrefix={arXiv},
      primaryClass={quant-ph},
}

@article{DehaeneDeMoor2003,
	title = {Clifford group, stabilizer states, and linear and quadratic operations over GF(2)},
	author = {Dehaene, Jeroen and De Moor, Bart},
	journal = {Phys. Rev. A},
	volume = {68},
	issue = {4},
	pages = {042318},
	numpages = {10},
	year = {2003},
	month = {Oct},
	publisher = {American Physical Society},
	doi = {10.1103/PhysRevA.68.042318},
	url = {https://link.aps.org/doi/10.1103/PhysRevA.68.042318}
}

@article{Dehaene2005,
	title = {Stabilizer states and Clifford operations for systems of arbitrary dimensions and modular arithmetic},
	author = {Hostens, Erik and Dehaene, Jeroen and De Moor, Bart},
	journal = {Phys. Rev. A},
	volume = {71},
	issue = {4},
	pages = {042315},
	numpages = {9},
	year = {2005},
	month = {Apr},
	publisher = {American Physical Society},
	doi = {10.1103/PhysRevA.71.042315},
	url = {https://link.aps.org/doi/10.1103/PhysRevA.71.042315}
}

@article{BermejoVegaVanDenNest2014,
  author = {Bermejo-Vega, Juan and {Van den Nest}, Maarten},
  title = {Classical simulations of {Abelian}-group normalizer circuits with intermediate measurements},
  journal = {Quantum Information and Computation},
  volume = {14},
  number = {3-4},
  pages = {181--216},
  year = {2014},
  doi = {10.26421/QIC14.3-4-1},
  url = {https://doi.org/10.26421/QIC14.3-4-1}
}

@misc{LLL2609,
      title={Transversal non-Clifford gates on good quantum locally testable codes}, 
      author={Yiming Li and Zimu Li and Zi-Wen Liu},
      year={2026},
      eprint={2609.26691},
      archivePrefix={arXiv},
      primaryClass={quant-ph},
      url={https://arxiv.org/abs/2609.26691}, 
}

@misc{GJ2026,
      title={Asymptotically Good Quantum Locally Testable Codes}, 
      author={William Gay and Fernando Granha Jeronimo},
      year={2026},
      eprint={2609.20780},
      archivePrefix={arXiv},
      primaryClass={quant-ph},
      url={https://arxiv.org/abs/2609.20780}, 
}

@misc{BLN2026,
      title={Good Quantum Locally Testable Codes from Product Expansion}, 
      author={Mitali Bafna and Anqi Li and Quynh T. Nguyen},
      year={2026},
      eprint={2609.26735},
      archivePrefix={arXiv},
      primaryClass={quant-ph},
      url={https://arxiv.org/abs/2609.26735}, 
}

@misc{CHLT2026,
      title={Cubical Sheaf Complexes with Constant Expansion with Applications to Asymptotically Good qLTCs}, 
      author={Yeyuan Chen and Miryam Mi-Ying Huang and Yinchen Liu and Er-Cheng Tang},
      year={2026},
      eprint={2609.28028},
      archivePrefix={arXiv},
      primaryClass={cs.IT},
      url={https://arxiv.org/abs/2609.28028}, 
}

@misc{ZhangKimBaoVijay2026,
  author        = {Zhang, Yuzhen and Kim, Isaac H. and Bao, Yimu and Vijay, Sagar},
  title         = {Extensive long-range magic in non-{Abelian} topological orders},
  year          = {2026},
  eprint        = {2605.15150},
  archivePrefix = {arXiv},
  primaryClass  = {quant-ph},
  url           = {https://arxiv.org/abs/2605.15150}
}

@article{twisted_quantum_double,
  author = {Hu, Yuting and Wan, Yidun and Wu, Yong-Shi},
  title = {Twisted quantum double model of topological phases in two dimensions},
  journal = {Physical Review B},
  volume = {87},
  number = {12},
  eid = {125114},
  date = {2013-03-11},
  doi = {10.1103/PhysRevB.87.125114},
  url = {https://journals.aps.org/prb/abstract/10.1103/PhysRevB.87.125114}
}

@article{RSV2019,
	author = {Nithi Rungtanapirom and Jakob Stix and Alina Vdovina},
	title = {Infinite series of quaternionic 1-vertex cube complexes, the doubling construction, and explicit cubical Ramanujan complexes},
	journal = {International Journal of Algebra and Computation},
	volume = {29},
	year = {2019},
	eprint = {1808.03290},
	archivePrefix = {arXiv}
}

\end{document}